\newtheorem{theorem}{Theorem}
\newtheorem{corollary}{Corollary}
\newtheorem{lemma}{Lemma}
\newlist{legal}{enumerate}{10}
\setlist[legal]{label*=\arabic*.}
\newtheorem{claim}{Claim}
\newcommand{\one}{\ensuremath{\mathds{1}}}
\newcommand{\ct}{\ensuremath{\mathsf{count}}}
\newcommand{\rej}{\ensuremath{\mathsf{rej}}}
\begin{document}
\title{Online Non-Preemptive Scheduling to Minimize Weighted Flow-time on Unrelated Machines}

\author[1]{Giorgio Lucarelli \thanks{giorgio.lucarelli@imag.fr}}
\author[2]{Benjamin Moseley \thanks{moseleyb@andrew.cmu.edu}}
\author[3]{Nguyen Kim Thang \thanks{thang@ibisc.fr}}
\author[4,5]{Abhinav Srivastav \thanks{abhinav.srivastav@ens.fr}}
\author[1]{Denis Trystram \thanks{trystram@imag.fr\\ \\}}

\affil[1]{Universit\'e GrenobleAlpes, CNRS, INRIA, Grenoble INP, LIG\\}
\affil[2]{Carnegie Mellon University\\}
\affil[3]{Universit\'e  d'\'Evry, Universit\'e Paris-Saclay\\}
\affil[4]{D\'epartment d'Informatique, ENS-Paris\\}
\affil[5]{LAMSADE, Universit\'e Paris Dauphine}

\maketitle

\begin{abstract}

In this paper, we consider the online problem of scheduling independent jobs \emph{non-preemptively} so as to minimize the weighted flow-time on a set of unrelated machines. There has been a considerable amount of work on this problem in the preemptive setting where several competitive algorithms are known in the classical competitive model. 
However, the problem in the non-preemptive setting admits a strong lower bound.
Recently, Lucarelli et al. presented an algorithm that achieves a $O\left(\frac{1}{\epsilon^2}\right)$-competitive ratio when the algorithm is allowed to reject $\epsilon$-fraction of total weight of jobs and $\epsilon$-speed augmentation.
They further showed that speed augmentation alone is insufficient to derive any competitive algorithm. 
An intriguing open question is whether there exists a scalable competitive algorithm that rejects a small fraction of total weights.

In this paper, we affirmatively answer this question. Specifically, we show that there exists a $O\left(\frac{1}{\epsilon^3}\right)$-competitive algorithm for minimizing weighted flow-time on a set of unrelated machine that rejects at most $O(\epsilon)$-fraction of total weight of jobs. The design and analysis of the algorithm is based on the primal-dual technique. Our result asserts that alternative models beyond speed augmentation should be explored when designing online schedulers in the non-preemptive setting in an effort to find provably good algorithms.  

 \end{abstract}

\section{Introduction}

In this work, we study the fundamental problem of online scheduling of independent jobs on unrelated machines. Jobs arrive over time and the online algorithm has to make the decision which job to process \emph{non-preemptively} at any time on each machine. A job $j$ is released at time $r_{j}$ and takes $p_{ij}$ amount of processing time on a machine $i$. Further, each job has a weight $w_j$ that denote its (relative) priority. 
Our aim is to design a non-preemptive schedule that minimizes the total weighted flow-time (or response time) quantity, i.e., $ \sum_{j} w_j (C_j - r_j)$ 
where $C_j$ denote the completion time of job $j$. 

We are interested in designing online non-preemptive scheduling problem in the worst-case model. Several strong lower bounds are known for simple instances \cite{Bansal:2009,ChekuriKhanna01:Algorithms-for-minimizing}. The main hurdle arises from two facts: the algorithm must be online and robust to all problem instances and the algorithmic decisions made should be of irrevocable nature. In order to overcome the strong theoretical lower bound, Kalyanasundaram and Pruhs~\cite{KalyanasundaramPruhs00:Speed-is-as-powerful} and Phillips et al.~\cite{PhillipsStein02:Optimal-time-critical} proposed the analysis of scheduling algorithms in terms of the speed augmentation and machine augmentation, respectively. Together these augmentation are commonly referred to as resource augmentation. Here, the idea is to either give the scheduling algorithm faster processors or extra machines in comparison to the adversary. For preemptive problems, these models provide a tool to establish a theoretical explanation for the good performance of algorithms in practice. In fact, many practical heuristics have been shown to be competitive where the algorithm is given resource augmentation. In contrast, problems in the non-preemptive setting have resisted against provably good algorithms even with such additional resources \cite{LucarelliThang16:Online-Non-preemptive}.

Choudhury et al.~\cite{ChoudhuryDas15:Rejecting-jobs} proposed a new model of resource augmentation where the online algorithm is allowed to reject some of the arriving jobs, while the adversary must complete all jobs. Using a combination of speed augmentation and rejection, Lucarelli et al.~\cite{LucarelliThang16:Online-Non-preemptive} break this theoretical barrier and gave a scalable algorithm for non-preemptive weighted flow-time problems. However, it remains an intriguing question about the power of rejection model in comparison to the previous ones. 

Recently,  Lucarelli et al. \cite{LucarelliThang16:Unweighted-Non-preemptive} showed that a $O(1)$ competitive algorithm exists if all jobs have unit weight and one only rejects a constant fraction of the jobs. Their algorithm and analysis are closely tied to the unweighted case and there is no natural extension to the case where jobs have weights.  The  question looms, does there exist  a constant competitive algorithm for non-preemptive scheduling to minimize weighted flow-time using rejection? 

\subsection{Our Result and Approach}

This paper gives the first algorithm with non-trivial guarantees for minimizing weighted flow time using rejection and no other form of resource augmentation.  The main result of the paper is the following theorem. The theorem shows that constant competitiveness can be achieved by only rejecting a small faction of the total weight of the jobs.

\begin{theorem} \label{main-theorem}
For the non-preemptive problem of minimizing weighted flow-time on unrelated machines, 
there exists a $O(\frac{1}{\epsilon^3})$-competitive algorithm that rejects at most $O(\epsilon)$-fraction of total weight of the jobs for any $0 < \epsilon < 1$.
\end{theorem}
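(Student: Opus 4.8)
The plan is to design an online algorithm together with a primal-dual analysis based on the linear-programming relaxation of weighted flow-time on unrelated machines. First I would set up the standard time-indexed LP relaxation: primal variables $x_{ij}(t)$ indicating the (fractional) rate at which job $j$ is processed on machine $i$ at time $t$, with the objective $\sum_{i,j}\int x_{ij}(t)\bigl(\frac{t-r_j}{p_{ij}}+\frac{w_j}{ \, \cdot \,}\bigr)\,dt$ (the usual form that lower-bounds weighted flow-time up to a constant), the covering constraint $\sum_i \int_{r_j}^\infty \frac{x_{ij}(t)}{p_{ij}}\,dt \ge 1$ for every job $j$ that is \emph{not} rejected, and the capacity constraint $\sum_j x_{ij}(t)\le 1$ for every machine $i$ and time $t$. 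Writing $\alpha_j$ for the dual variable of the covering constraint and $\beta_i(t)$ for the dual of the capacity constraint, weak duality says any feasible dual gives a lower bound $\sum_{j\text{ not rejected}}\alpha_j - \sum_i\int\beta_i(t)\,dt$ on OPT (of the non-rejecting adversary), so it suffices to (i) exhibit an online non-preemptive algorithm, (ii) charge its weighted flow-time plus the weight it rejects against this dual expression with the claimed $O(1/\epsilon^3)$ and $O(\epsilon)$ factors.

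Next I would specify the dispatching and scheduling rules. When a job $j$ arrives, it is immediately dispatched to the machine $i$ minimizing a density-like quantity — roughly, the marginal increase it causes in the current total fractional weighted flow-time, i.e. choosing $i$ to balance $w_j p_{ij}$ against the backlog of higher-or-equal density work already queued on $i$. On each machine the jobs are run non-preemptively; since we are not allowed preemption or speed, the key device (as in Lucarelli et al.) is to let a job that is currently running block the machine, but to \emph{reject} a newly arriving job $j$ whenever the total weight of jobs of density at least that of $j$ already waiting on its chosen machine exceeds an $\epsilon$-threshold times $w_j/\epsilon$ (so rejections are rare and self-financing). I would then define the dual variables: $\alpha_j$ is set at $j$'s arrival to (a constant times) the fractional weighted flow-time that $j$ will incur according to the algorithm's current queue, and $\beta_i(t)$ is set proportional to the total weight of jobs alive on machine $i$ at time $t$ divided by $\epsilon$, with appropriate scaling so that the capacity dual constraint $\frac{w_j(t-r_j)}{p_{ij}}+\frac{w_j}{1}\cdot(\text{rate terms}) \ge \frac{\alpha_j}{p_{ij}}-\beta_i(t)$ holds for every $i,j,t$.

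The verification splits into three parts. (1) \emph{Dual feasibility}: for the chosen machine the inequality holds by the definition of $\alpha_j$ (it is exactly the queue-based estimate) plus the fact that $\beta_i$ absorbs the backlog term; for the non-chosen machines it holds because $i$ was the minimizer, so $j$'s would-be contribution there is at least as large. (2) \emph{Bounding rejected weight}: a counting/amortization argument over the threshold events shows the total rejected weight is $O(\epsilon)\sum_j w_j$, and moreover the flow-time of rejected jobs can be charged to the $\alpha$'s of the jobs that caused the threshold to be crossed. (3) \emph{Objective comparison}: the algorithm's weighted flow-time is bounded by $\sum_j \alpha_j$ up to an $O(1/\epsilon)$ factor (each job's true flow-time is at most $1/\epsilon$ times its fractional queue estimate, since non-preemption inflates waiting by at most the $\epsilon$-bounded higher-density backlog plus one running job), while $\sum_i\int\beta_i(t)\,dt$ is at most $O(1/\epsilon)\cdot(\text{algorithm's weighted flow-time})$ since $\beta$ is the alive-weight divided by $\epsilon$; rearranging the inequality $\mathrm{ALG}\le O(1/\epsilon)\sum_j\alpha_j \le O(1/\epsilon)\bigl(\mathrm{OPT} + \sum_i\int\beta_i\bigr)\le O(1/\epsilon)\mathrm{OPT} + O(1/\epsilon^2)\,\mathrm{ALG}$ and carefully choosing the constants so the last term is absorbed yields $\mathrm{ALG}=O(1/\epsilon^3)\mathrm{OPT}$.

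The main obstacle I anticipate is step (1) together with the interaction between non-preemption and the density-based dispatch: because a running job cannot be interrupted, the "backlog seen by $j$" is not simply the higher-density queue but includes the residual of a possibly low-density job currently executing, and this single running job can be arbitrarily heavy relative to $w_j$. Handling this requires the rejection rule to be stated in terms of the \emph{right} potential (e.g. rejecting $j$ not only on the higher-density-backlog condition but also when the residual running job is too large relative to $j$'s deadline budget), and then showing this second type of rejection is still only an $O(\epsilon)$-fraction of the weight — this is exactly where the extra $1/\epsilon$ over the preemptive bound, and the delicacy of the argument, comes from. The rest is a careful but routine bookkeeping of constants across the three parts.
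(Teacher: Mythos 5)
Your overall frame (time-indexed LP, dual variables $\alpha_j$ set at arrival from a marginal-increase dispatch rule, $\beta_{it}$ proportional to alive weight, three-part verification) matches the paper's, but the proposal has several gaps that are not just bookkeeping.

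First, the rejection policy is structurally wrong and would not reject only an $O(\epsilon)$-fraction of weight. You propose rejecting a \emph{newly arriving} job $j$ when the higher-density backlog on its machine exceeds $w_j/\epsilon$; on a heavily loaded machine (say a long stream of unit-weight jobs) the backlog stays above the threshold forever and essentially every subsequent arrival is rejected. The paper's charging goes in the opposite direction: the \emph{preempt rule} rejects the \emph{currently running} job once the weight of jobs that arrived during its execution reaches $w_j/\epsilon$, so each rejected job is paid for by $1/\epsilon$ times its own weight of distinct later arrivals. Moreover you are missing the second mechanism entirely: the \emph{weight-gap rule}, which rejects low-density \emph{queued} jobs to simulate $\epsilon$-speed augmentation. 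Without it, the dual constraint for a low-density job at late times $t'$ cannot be certified, because under HDF such a job is pushed back indefinitely by future arrivals; this is precisely the obstruction that makes the problem hard without speed augmentation, and your sketch gives no substitute for it.

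Second, two analytic steps fail as written. (i) You assert dual feasibility at arrival and implicitly for all later $t$, but with rejection the queue weight, and hence $\beta_{it}$, can \emph{decrease} when a running job is thrown away, so feasibility at $r_j$ does not propagate forward. The paper needs the device of \emph{definitive completion times}, keeping rejected jobs inside the set defining $\beta_{it}$ until an artificial completion time, exactly to restore monotonicity of $\beta_{it}$ (Lemma~\ref{monotone}); nothing in your plan plays this role. (ii) Your final rearrangement $\mathrm{ALG}\le O(1/\epsilon)\,\mathrm{OPT}+O(1/\epsilon^2)\,\mathrm{ALG}$ cannot be ``absorbed'': the coefficient $O(1/\epsilon^2)$ of $\mathrm{ALG}$ on the right exceeds $1$ for small $\epsilon$, so the inequality is vacuous. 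The correct accounting scales $\beta_{it}$ down by roughly $\epsilon/(1+\epsilon^2)$ relative to the normalization under which $\sum_j\alpha_j\ge\frac{\epsilon}{1+\epsilon}\sum_j w_j(\widetilde C_j-r_j)$, so that the dual objective $\sum_j\alpha_j-\sum_{i,t}\beta_{it}$ is directly an $\Omega(\epsilon^3)$-fraction of the algorithm's cost; there is no self-referential absorption step.
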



The algorithmic decisions are classified into three parts: dispatching, rejecting and scheduling policy. 
The scheduling follows HDF policy (Highest Density First) once jobs are assigned to the machines. At the arrival of a job, for each machine, the algorithm computes an approximate increase in the weighted flow-time and assigns the job to the machine with the least increase in the approximate weighted flow-time. To compute this quantity for a given machine, the algorithm considers the set of uncompleted jobs in the machine queue in the non-increasing order of densities 
and uses two different rejection policies. 

The first rejection policy, referred as the \emph{preempt} rule, rejects jobs that have already started processing 
if the total weight of newly arrived  ``high priority'' jobs (high density jobs) exceeds a given threshold. 
Specifically, when a job starts executing, we associate a counter that keep tracks of the total weight of newly arrived jobs. 
Once the value of this counter is at least $1/\epsilon$ times the weight of the current executing job, 
the algorithm preempts the current executing job and rejects it. 
The rejected job is pushed out of the system so as to be never executed again.


We emphasize here a critical issue due to job rejection which is of different nature to speed augmentation.
Observe that rejecting a job that has already started processing may cause a large decrease in the
weighted flow-time of the jobs in the machine queue. Due to job arrivals and job rejections, 
quantities associated to the machine queue (for example the remaining job weight, etc) vary arbitrarily without any nice properties like monotonicity.
That creates a significant challenge in the dual fitting analysis. 
To tackle this problem, we introduce the notion of \emph{definitive completion time} for each job. 
Once a job is rejected or completed before its definitive completion time, the algorithm removes the job from the queue of the machine. 
However, for the purpose of analysis, the rejected jobs are still considered in the definition of dual variables  until their definitive completion time. 
This ensures that for any fixed time, the weight of jobs not yet definitively completed 
increases with the arrival of new jobs (see Section~\ref{dual variables} for details). 

The second rejection policy, referred as the \emph{weight-gap} rule, rejects unprocessed ``low priority'' jobs
(small density jobs) from the machine's queue. This policy simulates the $\epsilon$-speed-augmentation. 
In the particular case where all jobs have the same weight, 
this rejection policy rejects a ``low priority'' job for every $1/\epsilon$ arrivals of new jobs. 
Due to the scheduling policy, if a ``low priority'' job is not rejected, then it will be completed last in the schedule (assuming no future job arrivals). 

In the algorithm's schedule, future arriving jobs do not delay the rejected low priority jobs, while the later ones need to be completed in the adversary's schedule.  This is where the algorithm benefits from the power of rejection. Specifically, the algorithm can use the difference between the rejection time and the definitive completion time of jobs to create a similar effect to speed augmentation. The key idea is to reject the low priority jobs so their total weight is comparable to jobs that arrive after them. 

The definitive completion times play crucial role so that the dual achieves a substantial value compared to the primal. By carefully choosing 
the definitive completion times of jobs, we manage to prove the competitive ratio of our algorithm with admittedly sophisticated analysis.

\subsection{Related Works}

The problem of minimizing the total weighted flow-time has been extensively studied in the online scenario. 
For the preemptive problem, Chekuri et al.~\cite{ChekuriKhanna01:Algorithms-for-minimizing} presented a \emph{semi-online} $O(\log^2 P)$-competitive algorithm for a single machine, where $P$ is the ratio of the largest to the smallest processing time of the instance.
Later, Bansal and Dhamdhere~\cite{Bansal:2007} proposed a $O(\log W)$-competitive algorithm, where $W$ is the ratio between the maximum and the minimum weights of the jobs. A lower bound of $\Omega(\min((\log W/\log \log W)^{\frac{1}{2}}, (\log \log P/ \log \log \log P)^{\frac{1}{2}})$ was shown in~\cite{Bansal:2009}. 
In contrast to the single-machine case, Chekuri et al.~\cite{ChekuriKhanna01:Algorithms-for-minimizing} showed a $\Omega(\min(\sqrt{P}, \sqrt{W}, {\frac{n}{m}}^\frac{1}{4}))$ lower bound for $m$ identical machines. 
For the online non-preemptive problem of minimizing the total weighted flow-time, Chekuri et al.~\cite{ChekuriKhanna01:Algorithms-for-minimizing} showed that any algorithm has at least $\Omega(n)$ competitive ratio for single machine where $n$ is the number of jobs.

In resource augmentation model, Anand et al.~\cite{AnandGarg12:Resource-augmentation} presented a scalable competitive algorithm for the preemptive problem on a set of unrelated machines. 
For the non-preemptive setting, Phillips et al.~\cite{PhillipsStein02:Optimal-time-critical} gave a constant competitive algorithm in identical machine setting that uses $m \log P$ machines (recall that the adversary uses $m$ machines).
They also showed that there exists a $O(\log n)$-machine $O(1)$-speed algorithm that returns the optimal schedule for the unweighted flow-time objective.
Epstein et al.~\cite{EpsteinVanStee06} proposed an $\ell$-machines $O(\min\{\sqrt[\ell]{P},\sqrt[\ell]{n}\})$-competitive algorithm for the unweighted case on a single machine. This algorithm is optimal up to a constant factor for constant $\ell$.

Lucarelli et al.~\cite{LucarelliThang16:Online-Non-preemptive} presented a strong lower bound on the competitiveness for the weighted flow-time problem on a single machine that uses arbitrarily faster machine than that of the adversary.
Choudhury et al.~\cite{ChoudhuryDas15:Rejecting-jobs} extended the resource augmentation model to allow \emph{rejection}, according to which the algorithm does not need to complete all jobs and some of them can be rejected.
Using a combination of speed augmentation and rejection, Lucarelli et al.~\cite{LucarelliThang16:Online-Non-preemptive} gave a constant competitive algorithm for the weighted flow-time problem on a set of unrelated machine. In particular, they showed that there exists a $O(1/\epsilon^2)$-competitive algorithm that uses machines with speed $(1+\epsilon)$ and rejects at most an $\epsilon$-fraction of jobs for arbitrarily small $\epsilon > 0$. 
Recently, Lucarelli et al. \cite{LucarelliThang16:Unweighted-Non-preemptive} provided a scalable competitive algorithm for the case of (unweighted) flow time where there is no speed augmentation.



\section{Definitions and Notations}
\subsection{Problem definition}

We are given a set $\mathcal{M}$ of unrelated machines and a set of jobs $\mathcal{J}$  that arrive online. Each job $j$ is characterized by its release time $r_j$ and its weight $w_j$. If job $j$ is executed on machine $i$, it has a processing requirement of $p_{ij}$ time units. The goal is to schedule jobs \emph{non-preemptively}. Given a schedule $\mathcal{S}$, the \emph{completion time} of the job $j$ is denoted by $C_j^{\mathcal{S}}$. The \emph{flow-time} of $j$ is defined as $F_j^{\mathcal{S}} = C_j^{\mathcal{S}} -r_j$, which is the total amount of time job $j$ remains in the system. The objective is to minimize the weighted flow-times of all jobs, i.e., $\sum_{j \in \mathcal{J}} w_j F_j^{\mathcal{S}}$. In the following section we formulate this problem as a linear program.

\subsection{Linear Programming Formulation}

The LP formulation presented below is an extension of those used in the prior works of \cite{AnandGarg12:Resource-augmentation, LucarelliThang16:Online-Non-preemptive}.  
For each job $j$, machine $i$ and time $t \geq r_j$, there is a binary variable $x_{ijt}$ which indicates if $j$ is processed or not on $i$ at time $t$. The problem of minimizing weighted flow-time can be expressed as:

$$\min \sum \limits_{i,j,t}  w_j \left(\frac{t - r_j}{p_{ij}} + 21 \right) x_{ijt} $$ 
\vspace{-0.15cm}
\begin{align}
&&\sum \limits_{i,t} \frac{x_{ijt}}{p_{ij}} &= 1  \hspace{2cm} \forall j  \label{runs}\\   
&&\sum_j x_{ijt} &\leq 1	  \hspace{2cm}  \forall i, t  \\
&& x_{ijt} &\in \{0,1\} \hspace{1cm} \forall i, j, t \geq r_j
\end{align}

The objective value of the above integer program is at most a constant factor than that of the optimal preemptive schedule. 
The above integer program can be relaxed to a linear program by replacing the integrality constraints of $x_{ijt}$ with $0 \leq x_{ijt} \leq 1$. The dual of the relaxed linear program can be expressed as follows: 
$$\max \sum_{j} \alpha_j - \sum \limits_{i,t} \beta_{it}$$
\vspace{-0.15cm}
\begin{align}
&&\frac{\alpha_j}{p_{ij}} - \beta_{it} &\leq w_j \left(\frac{t - r_j}{p_{ij}} + 21 \right) \hspace{1cm} \forall i, j, t \geq r_j \label{dual-const}\\
&&\beta_{it} &\geq 0 \hspace{1cm} \forall i, t
\end{align}

For the \emph{rejection model} considered in this work, it is assumed that the algorithm is allowed to reject jobs. Rejection can be interpreted in the primal LP by only considering constraints corresponding to non-rejected jobs. That is, the algorithm does not have to satisfy the constraint~(\ref{runs}) for rejected jobs. 

\subsection{Notations} \label{notation}

In this section, we define notations that will be helpful during the design and analysis of the algorithm. 
\begin{itemize}
	\item $t^-$ denotes the time just before $t$ that is, $t^- = t - \epsilon'$ for  an arbitrarily small value of $\epsilon' > 0$. 
	\item $U_i(t)$ denotes the set of pending jobs at time $t$ on machine $i$, i.e., 
		the set of jobs dispatched to $i$ that have not yet completed and also have not been rejected until $t$. 
	\item $\kappa_i(t)$ denotes the job currently executing on machine $i$ at time $t$. 
	\item $V_i(t)$ denotes the set of unprocessed jobs in $U_i(t)$ that is $V_{i}(t) = U_{i}(t) \setminus \{\kappa_{i}(t)\}$. Throughout this paper, we assume that the jobs in $V_i(t)$ are indexed in non-increasing order of their densities that $\delta_{i1} \geq \delta_{i2}, \ldots, \geq \delta_{i|V_i(t)|}$.
	\item $\nu_i(t)$ denotes the smallest density job in $V_i(t)$.
	\item $R^1_{i}(a,b)$ denotes the set of jobs rejected due to the \emph{prempt} rule  (to be defined later) during time interval $(a,b]$.
		In particular, $R^1_{i}(t)$ is the set of job rejected at time $t$ due to the \emph{prempt rule}.
	\item Similarly, $R^2_{i}(a,b)$ denotes the set of jobs rejected due to the \emph{weight-gap} rule (also to be defined later) during time interval $(a,b]$. In particular, $R^2_{i}(t)$ is the set of job rejected at time $t$ due to the weight-gap rule.
	\item $q_{ij}(t)$ denotes the remaining processing time of $j$ at a time $t$ on machine $i$. 
	\item $\delta_{ij}$ is the density of a job $j$ on machine $i$ that is $\delta_{ij} = \frac{w_i}{p_{ij}}$. 
	\item $S_{j}$ denotes the starting of job $j$ on some machine $i$. If a job is rejected before it starts executing, set $S_j = \infty$.
\end{itemize}
By the previous definitions, it follows that $R^1_{i}(r_{j}), R^2_{i}(r_{j}) \subseteq U(r_j^-) \cup \{j\}$  and $U(r_j) = (U(r_j^-) \cup \{j\}) \setminus \{R^1_{i}(r_{j}) \cup R^2_{i}(r_{j})\}$.  
%

\section{The Algorithm} 

In this section, we describe our algorithm. Specifically, we explain how to take the following decisions: \emph{dispatching} that is decide the machine assignment of jobs;  \emph{scheduling} that is decide which jobs to process at each time; and \emph{rejection}.  The algorithm is denoted by $\mathcal{A}$. Let $0 < \epsilon < 1$ be an arbitrarily small constant. Note that the proposed algorithm rejects an $O(\epsilon)$-fraction of the total weight of jobs and dispatches each job to a machine upon its arrival. 

\subsection{Scheduling policy}  
At each time $t$ if the machine $i$ is \emph{idle} either due to the rejection of a job or due to the completion of a job, then the algorithm starts executing the job $j$ with the highest density among all the jobs in $U_i(t)$, \textit{i.e.} $j = \arg \max_{h \in U_i(t)} \delta_{ih}$. In case of ties, the algorithm selects the job with the earliest release time.

\subsection{Rejection policies} \label{rejection policies}
Our algorithm uses two different rules for rejecting jobs. The first rule called as the \emph{preempt rule},  bounds the total weight of ``high priority'' jobs that arrive during the execution of a job. 
The second rule called as the \emph{weight-gap} rule, helps the algorithm to balance the total amount of weight of low density jobs. The algorithm associates two counters, $\ct^1_j$ and $\ct^1_j$, with each job $j$ which are both initialized to $0$ at $r_j$. 

\begin{enumerate}
\item{\textbf{Preempt rule}}: 
Let $j = \kappa_{i}(t)$ be the job processing on $i$ at time $t$.  During the processing of $j$, if a new job $j'$ is dispatched to $i$ then $\ct^{1}_{j}$ is incremented by $w_{j'}$. Let $k$ be the earliest job released and dispatched to machine $i$ during the execution of $j$ such that $\ct^{1}_{j} \geq w_j/\epsilon$, if it exists. At $r_{k}$, the algorithm interrupts the processing of $j$ and rejects it, that is $R^1_{i}(r_k) = \{j\}$. If no job is rejected due to the \emph{preempt} rule at $r_k$, then we set $R^1_{i}(r_{k}) = \emptyset$. 
\smallskip

\item{\textbf{Weight-gap rule}}: 
We associate a function $W_i(t): \mathbb{R}^{+} \rightarrow \mathbb{R}^{+}$  with each machine $i$ which is initialized to $0$ for every $t$. Informally $W_i(t)$ represents the total budget for future rejections.  If a job $j$ is dispatched to machine $i$ then $W_i(t)$ for $t \geq r_{j}$ is updated according to the following policy.

Let $V = V_i(r_{j}^-) \cup \{j\}$. Assume that the jobs in $V$ are indexed in  non-increasing order of their densities that is, 
$\delta_{i1} \geq \delta_{i2} \geq \ldots \geq \delta_{i\nu}$, where the job with index $\nu$ is the smallest density job in $V$. 
Note that the job $j$ is included in this ordering. 
Let $s$ be the smallest index in $\{1,2,\ldots, \nu\}$ such that:
\begin{align}	 \label{ub_wg}
&&\sum \limits_{h = s}^{\nu} w_h  \leq \epsilon(W_i(t^-) + w_j) < \sum \limits_{h = (s-1)}^{\nu} w_h
\end{align}
We say that no such job with index $s$ exists if and only if $w_{\nu} > \epsilon (W_i(t^-) + w_j)$. Algorithm~\ref{Weight-gap Rejection Rule} defines the set of jobs $R^2_i(r_j)$. The algorithm rejects the jobs in $R^{2}_{i}(r_{j})$ and updates $W_i(t)$ as follows:
\begin{align}
&&W_i(t) = \max\{0,W_i(r_{j}^-) + w_j - \sum \limits_{h \in R^{2}_{i}(r_{j})} w_h/\epsilon\}, \hspace{1cm} \forall t \geq r_j	\label{weight-update-equation}
\end{align}
\end{enumerate}
The following lemma describes some properties arising due to the \emph{weight-gap} rule.

\begin{lemma} \label{wg-rule-prop}
The following properties hold. 
	\begin{description}
		\item[(Property 1)] 
			If $R^2_i(r_j) = \{ \nu_i(r_j^-), j\}$ or $R^{2}_{i}(r_{j}) = \{(s-1),\ldots,v\}$ then $W_i(r_j) = 0$. 
		%
		%
		\item[(Property 2)] 
		$\epsilon W_i(t) < w_{\nu_i(t)}$ for every pair of $i,t$.
		\item[(Property 3)]  
		Let $w_{|R^2_i(r_j)|}$ denote the weight of smallest density job in $R^2_i(r_j)$. 
			If $j \notin R^2_i(r_j)$ then $\sum_{h \in R^2_{i}(r_{j})} w_h - w_{|R^2_i(r_j)|} \leq 2\epsilon w_j$.
		%
		\item[(Property 4)]  
			If $j \in R^2_{i}(r_{j})$, then  $R^2_{i}(r_{j}) =  \{j\} \text{~or~} \{j, \nu_{i}(r^{-}_{j})\}$.
	\end{description}
\end{lemma}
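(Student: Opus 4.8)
The plan is to derive all four properties from just two ingredients — the weight update rule~\eqref{weight-update-equation} and the defining inequalities~\eqref{ub_wg} for the pivot index $s$ — combined with a case analysis over the shapes that $R^2_i(r_j)$ can take in Algorithm~\ref{Weight-gap Rejection Rule}. I would dispatch Properties~1 and~4 first, since the later arguments reuse the branch structure they expose. For Property~1, \eqref{weight-update-equation} shows $W_i(r_j)=0$ as soon as $\sum_{h\in R^2_i(r_j)}w_h\ge\epsilon(W_i(r_j^-)+w_j)$; this is precisely the right-hand inequality of~\eqref{ub_wg} when $R^2_i(r_j)=\{(s-1),\ldots,\nu\}$, and when $R^2_i(r_j)=\{\nu_i(r_j^-),j\}$ the branch is entered only if no index $s$ exists and $j$ is not the minimum density job, so the job of index $\nu$ is $\nu_i(r_j^-)$ and $\sum_{h\in R^2_i(r_j)}w_h\ge w_{\nu_i(r_j^-)}=w_\nu>\epsilon(W_i(r_j^-)+w_j)$. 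Property~4 is a direct reading of the algorithm: $j$ itself enters $R^2_i(r_j)$ only in the ``no pivot'' branch, where the rule rejects either $\{j\}$ (if $j$ is the minimum density job of $V$) or $\{j,\nu_i(r_j^-)\}$.

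For Property~2 I would use induction over the arrival events on machine $i$. The key observation is that $W_i$ changes only at such events, whereas between them $w_{\nu_i(t)}$ can only increase, since a completion or a \emph{preempt}-rejection removes the highest density pending job and hence does not change the minimum density one; so it suffices to check that the inequality is restored at each arrival $r_j$, assuming it at $r_j^-$. If $W_i(r_j)=0$ there is nothing to prove, which already covers the two cases of Property~1. If $W_i(r_j)>0$, inspecting~\eqref{weight-update-equation} leaves two possibilities: either $R^2_i(r_j)=\emptyset$ and no pivot exists, so $\nu_i(r_j)$ is the job of index $\nu$ and $\epsilon W_i(r_j)=\epsilon(W_i(r_j^-)+w_j)<w_\nu=w_{\nu_i(r_j)}$ by the ``no $s$'' condition; or $R^2_i(r_j)=\{s,\ldots,\nu\}$ with $j\notin R^2_i(r_j)$ (hence $s\ge2$), so that $\nu_i(r_j)$ is the job of index $s-1$ and
\[
\epsilon W_i(r_j)=\epsilon(W_i(r_j^-)+w_j)-\sum_{h=s}^{\nu}w_h<\sum_{h=s-1}^{\nu}w_h-\sum_{h=s}^{\nu}w_h=w_{s-1}=w_{\nu_i(r_j)}
\]
by the right-hand inequality of~\eqref{ub_wg}. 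The remaining case, $j\in R^2_i(r_j)$ together with $W_i(r_j)>0$, forces $R^2_i(r_j)=\{j\}$ with $j$ of minimum density (otherwise $W_i(r_j)=0$ by Property~1); then $\nu_i(r_j)=\nu_i(r_j^-)$ and, since $\epsilon<1$, $\epsilon W_i(r_j)\le\epsilon W_i(r_j^-)+\epsilon w_j-w_j<\epsilon W_i(r_j^-)<w_{\nu_i(r_j^-)}$ by the induction hypothesis.

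Property~3 then uses Property~2 as a black box. Since $j\notin R^2_i(r_j)$, the rejected set is a pure density suffix of $V$ (either $\{s,\ldots,\nu\}$ or $\{(s-1),\ldots,\nu\}$), its smallest density member is the job of index $\nu$, and $j\ne\nu$, so $\nu$ is the job $\nu_i(r_j^-)$ and Property~2 at $r_j^-$ gives $\epsilon W_i(r_j^-)<w_\nu$. When the suffix is $\{s,\ldots,\nu\}$, the left-hand inequality of~\eqref{ub_wg} gives
\[
\sum_{h\in R^2_i(r_j)}w_h-w_{|R^2_i(r_j)|}=\sum_{h=s}^{\nu-1}w_h\le\epsilon(W_i(r_j^-)+w_j)-w_\nu<\epsilon w_j\le 2\epsilon w_j;
\]
when the suffix is $\{(s-1),\ldots,\nu\}$, the same estimate bounds $\sum_{h=s}^{\nu-1}w_h$ by $\epsilon w_j$, and I would finish by invoking the explicit condition under which Algorithm~\ref{Weight-gap Rejection Rule} rejects the extra job $s-1$, which should yield $w_{s-1}\le\epsilon w_j$ and hence a total of at most $2\epsilon w_j$. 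The step I expect to be the real obstacle is exactly this last one: every other estimate is a one-line substitution of~\eqref{ub_wg} into~\eqref{weight-update-equation}, but bounding $w_{s-1}$ is not a telescoping and requires understanding precisely why the algorithm ever rejects one job beyond the budget-feasible suffix. A secondary subtlety, pervading Properties~1, 2 and~4, is keeping the enumeration of Algorithm~\ref{Weight-gap Rejection Rule}'s branches consistent — in particular correctly identifying which branches can leave $W_i(r_j)$ strictly positive — because the induction for Property~2 depends on that enumeration being exhaustive.
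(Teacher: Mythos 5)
Your treatment of Properties 1--3 follows essentially the same route as the paper and is sound: Property 1 from the update rule~(\ref{weight-update-equation}) together with the right-hand side of~(\ref{ub_wg}) (respectively the ``no $s$'' condition), Property 2 by induction over arrival events with the case split on whether $W_i(r_j)$ survives the update, and Property 3 from the left-hand side of~(\ref{ub_wg}) plus Property 2 at $r_j^-$ plus the explicit test $w_j \ge w_{(s-1)}/\epsilon$ of Line~\ref{s-1-small}, which is indeed exactly what yields $w_{(s-1)} \le \epsilon w_j$. One slip in Property 1: the branch of Line~\ref{reject-small-wgts} producing $R^2_i(r_j)=\{j,\nu_i(r_j^-)\}$ is entered when $j$ \emph{is} the smallest density job of $V$ (so $j$ itself has index $\nu$ and $\nu_i(r_j^-)$ has index $\nu-1$), not the other way around; your inequality chain survives because the rejected pair still contains the index-$\nu$ job, but the identification $w_{\nu_i(r_j^-)}=w_\nu$ is wrong there.

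The genuine gap is Property 4, which is not ``a direct reading of the algorithm.'' You misread the branch structure in two ways: the no-pivot branch never rejects $\{j\}$ alone (its only nonempty outcome is $\{j,\nu_i(r_j^-)\}$), and, more importantly, $j$ can also be rejected in the pivot branch, namely in Lines~\ref{s-1-rejected} and~\ref{only-j-rejected} when $j\in\{s,\ldots,\nu\}$; there the rejected set is a suffix $\{s,\ldots,\nu\}$ or $\{s-1,\ldots,\nu\}$ that a priori may contain many jobs. The whole content of Property 4 is to rule this out. One must argue, as the paper does, that if $j\in R^2_i(r_j)$ then $j$ is the smallest density job of $V$: Property 2 gives $\epsilon(W_i(r_j^-)+w_j) < w_{\nu_i(r_j^-)}+w_j$, so a job $j$ of index $k\le\nu-1$ has suffix sum $\sum_{h=k}^{\nu}w_h\ge w_j+w_{\nu_i(r_j^-)}>\epsilon(W_i(r_j^-)+w_j)$ and hence index at most $s-1$, and a job of index exactly $s-1$ cannot be rejected since that would require $w_j\ge w_{(s-1)}/\epsilon=w_j/\epsilon$. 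The same bound then forces $s=\nu$, collapsing the rejected suffix to $\{j\}$ or $\{\nu_i(r_j^-),j\}$. None of this is in your proposal. Moreover, the final case of your Property 2 induction (the case $j\in R^2_i(r_j)$ with $W_i(r_j)>0$) invokes Property 4 to conclude $R^2_i(r_j)=\{j\}$, so the gap propagates there as well.
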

\begin{proof} 

Assume that there exists no job with index $s$. This implies that $w_{\nu} > \epsilon (W_i(t^-) + w_j)$. From Line~\ref{reject-small-wgts} in Algorithm~\ref{Weight-gap Rejection Rule}, it follows that $R^2_i(r_j) = \{j, \nu_i(r_j^-)\}$. Combining this with Equation~\ref{weight-update-equation}, it follows that $W_i(r_j) = 0$. Now, assume that there exists a job with index $s$. Then combining the fact that $R^2_i(r_j) = \{(s-1), \ldots, \nu\}$ with Inequality~\ref{ub_wg}, it follows that $W_i(r_j) = 0$. This completes the proof of Property 1.

Property 2 holds trivially at time $0$. Assume that it holds before the arrival of job $j$. We show that it holds after $t = r_j$. Consider the case $t < r_j$ since $W_i(t)$ is unchanged, the property holds. We now consider the case for $t \geq r_j$.  Suppose there exists no job with index $s$ and $R^2_i(r_j) = \emptyset$. From Inequality~\ref{ub_wg}, the property holds. If $R^2_i(r_j) \not = \emptyset$ then $R^2_i(r_j) = \{j, \nu_i(r_j^-)\}$ from Line~\ref{reject-small-wgts} in Algorithm~\ref{Weight-gap Rejection Rule}. From Property 1, it follows that $W_i(r_j) = 0$ and the property follows. Suppose there exists a job with index $s$ then $R^2_i(r_j) = \{(s-1), \ldots, \nu\}$ or $R^2_i(r_j) = \{s, \ldots, \nu\}$. In the former case, from Property 1 it holds that $W_i(r_j) = 0$ and the proof follows. In latter case that is $R^2_i(r_j) = \{s, \ldots, \nu\}$, combining Inequality~\ref{ub_wg} with Equation~\ref{weight-update-equation}, the proof follows for Property 2.  

Since $j \notin R^2_i(r_j)$ from Algorithm~\ref{Weight-gap Rejection Rule} it follows that $R^2_i(r_j) =  \{(s-1), \ldots, \nu\}$ or $R^2_i(r_j) = \{s, \ldots, \nu\}$. Consider the time $r_j^-$ just before the arrival of $j$. From Property 2, it follows that $\epsilon W_i(t^-) < w_{\nu_i(t^-)}$. From Equation~\ref{weight-update-equation}, it follows that $0 \leq W_i(t) \leq W_i(t^-) + w_j - \sum \limits_{h \in R^2_i(r_j) \setminus {(s-1)}} w_h/\epsilon$. Combining both facts, we have $$0 \leq w_{\nu_i(t^-)}/\epsilon + w_j - \sum \limits_{h \in R^2_i(r_j) \setminus {(s-1)} } w_h/\epsilon$$ 
Since $j$ is not rejected at $r_j$, the job $\nu_i(r_j^-) \in R^2_i(r_j)$ and  we get 
$$\sum \limits_{h \in R^2_i(r_j) \setminus \{(s-1) \cup \nu_i(r_j^-)\} } w_h/\epsilon \leq \epsilon w_j$$
If the job with index $(s-1)$ is not in $R^2_i(r_j)$, the property follows. On the other hand if the job with index $(s-1)$ is in $R^2_i(r_j)$ then from Line~\ref{s-1-small} in Algorithm~\ref{Weight-gap Rejection Rule}, it follows that $w_{(s-1)} \leq \epsilon w_j$. Combining two previous facts, Property 3 holds.

Assume that there exists no job with index $s$ then Property 4 holds trivially. Now, suppose that there is a job with index $s$. If $j$ is not the smallest density job in $V$, then $j \not \in R^2_i(r_j)$. Assume the contrary that is, $j$ is the not smallest density job in $V$ and $j \in R^2_i(r_j)$. From Property 2, we have that $\epsilon W_i(t^-) < w_{\nu_i(t^-)}$. This can rewritten as $\epsilon (W_i(t^-) + w_j) < w_{\nu_i(t^-)} + w_j$. Since $j$ is not the smallest density job in $V$, from Inequality~\ref{ub_wg} it follows that the index of $j$ is at most $(s-1)$. And If $j$ is in fact the job with index $(s-1)$. Then $j$ never rejected since $w_j < w_{(s-1)}/\epsilon$. This contradicts the assumption that $j \in R^2_i(r_j)$.  Hence if $j \in R^2_i(r_j)$ then $j$ is the smallest density job in $V$. From Property 2, we have that $\epsilon W_i(t^-) < w_{\nu_i(t^-)}$. Since $j$ is the smallest density job in $V$, $\nu_i(t^-)$ is the job with index $\nu-1$. Hence it follows that $\epsilon (W_i(t^-) + w_j) \leq w_j + w_{\nu_i(t^-)} = w_{\nu} + w_{(\nu-1)}$. The job with index $\nu-1$ correspond to the job with index $(s-1)$ in Line~\ref{s-1-rejected}. Hence, Property 4 holds. 
\end{proof}
\begin{algorithm}[H]
\begin{algorithmic}[1]
\IF {no job with index $s$ exists}
	\IF {$j$ is the not smallest density job in $V$}
    		\STATE No job is rejected that is, $R^2_i(r_j) := \emptyset$
    	\ELSE   \label{nr-j-is-last}
		\STATE \COMMENT {$j$ is the smallest density job in $V$ that is, $j$ is the job with index $\nu$}
		\STATE \COMMENT {$v_i(r_j^-)$ is the job with index $\nu-1$}	\label{v_1eqv-1}
		\IF {$p_{ij} \geq \epsilon p_{i(\nu-1)} $}
			\STATE No job is rejected that is, $R^2_i(r_j) := \emptyset$
		\ELSE 	\label{j-very-small}
			\STATE $\ct^2_{(\nu-1)} := \ct^2_{(\nu-1)} + w_j$
			\IF {$\ct^2_{(\nu-1)} \geq w_{(\nu-1)}$}
				\STATE Reject $j$ and $\nu_i(r_j^-)$ that is, $R^2_i(r_j) := \{j, \nu_i(r_j^-)\}$ \label{reject-small-wgts}
			\ELSE
				\STATE No job is rejected that is, $R^2_i(r_j) := \emptyset$  \label{no-small-wgts-reject}
			\ENDIF
		\ENDIF
	\ENDIF			
\ELSE
	\STATE \COMMENT{a job with index $s$ exists}
	\IF {$w_j \geq w_{(s-1)}/\epsilon$}	
		\STATE Reject jobs with indices $s-1, \ldots ,\nu$ in $V$ that is, $R^2_i(r_j) := \{s-1, \dots, \nu\}$ \label{s-1-small}
	\ELSE
	\STATE \COMMENT{$w_j < w_{(s-1)}/\epsilon$}
		\IF {$j$ is not one of the jobs in $s, \ldots, \nu$ that is, $j \notin \{s, \ldots, \nu\}$}
			\STATE Reject jobs with indices $s, \ldots, \nu$ in $V$ that is, $R^2_i(r_j) := \{s, \ldots, \nu\}$ \label{s-1-not-rejected}
		\ELSE
			\STATE \COMMENT	{$j \in \{s, \ldots, \nu\}$} \label{j-is-reject}
			\STATE  $\ct^2_{(s-1)} := \ct^2_{(s-1)} + w_j$
			\IF {$\ct^2_{(s-1)} \geq w_{(s-1)}$}
				\STATE Reject jobs with indices $s-1, \ldots ,\nu$ in $V$ that is, $R^2_i(r_j) := \{s-1, \dots, \nu\}$.  \label{s-1-rejected} 
			\ELSE
				\STATE  Reject jobs with indices $s, \ldots, \nu$ in $V$ that is, $R^2_i(r_j) := \{s, \ldots, \nu\}$.  \label{only-j-rejected}
			\ENDIF
		\ENDIF
	\ENDIF
\ENDIF
\end{algorithmic}
\caption{Weight-gap Rejection Rule}
\label{Weight-gap Rejection Rule}
\end{algorithm}

\begin{lemma} \label{frac-reject-preempt-rule}
The total weight of jobs rejected by the \emph{preempt rule} is at most $O(\epsilon)$-fraction of the total weight of jobs in $\mathcal{J}$. 
\end{lemma}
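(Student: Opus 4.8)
The plan is to charge the weight of each job rejected by the preempt rule to the weights of the jobs that arrived during its execution and triggered the rejection, and then to observe that these ``triggering'' sets are pairwise disjoint across all jobs, so the total charge is at most the total weight $W := \sum_{j \in \mathcal{J}} w_j$.

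First I would fix notation: let $P_i$ be the set of jobs rejected from machine $i$ by the preempt rule, and $P = \bigcup_{i \in \mathcal{M}} P_i$. For $j \in P_i$, let $t_j$ be the time at which $j$ is rejected (the release time $r_k$ of the triggering job $k$ in the statement of the rule), and let $D_j$ be the set of jobs $j'$ dispatched to $i$ during the execution of $j$, i.e.\ exactly the jobs whose arrival incremented $\ct^1_j$. By the definition of the preempt rule, $j$ is rejected precisely when $\ct^1_j$ first reaches $w_j/\epsilon$, so at time $t_j$ we have $\ct^1_j = \sum_{j' \in D_j} w_{j'} \ge w_j/\epsilon$, hence $w_j \le \epsilon \sum_{j' \in D_j} w_{j'}$.

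Next I would establish the key disjointness claim: the sets $\{D_j\}_{j \in P}$ are pairwise disjoint (in fact the family $\{D_j\}$ over all jobs $j$ that ever start executing is pairwise disjoint). Indeed, each job $j'$ is dispatched to a unique machine $i$ at the single instant $r_{j'}$; at that instant at most one job is executing on $i$, namely $\kappa_i(r_{j'})$, and $w_{j'}$ is added to exactly that one counter $\ct^1_{\kappa_i(r_{j'})}$ and to no other. So $j'$ lies in at most one $D_j$. Summing the inequality from the previous step over $j \in P$ and using disjointness,
\[
\sum_{j \in P} w_j \;\le\; \epsilon \sum_{j \in P} \sum_{j' \in D_j} w_{j'} \;=\; \epsilon \sum_{j' \in \bigcup_{j \in P} D_j} w_{j'} \;\le\; \epsilon \sum_{j' \in \mathcal{J}} w_{j'},
\]
which is the claimed $O(\epsilon)$-fraction bound.

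This is essentially a clean charging scheme, so I do not expect a serious obstacle; the one point to be careful about is the bookkeeping around jobs that are themselves later rejected. A job $j' \in P$ still contributes $w_{j'}$ to (at most) one counter at its arrival time, regardless of whether $j'$ is eventually executed or rejected, so it is counted at most once on the right-hand side and the disjointness is unaffected. I would also note that once $j$ is preempted it is removed from the system and never restarted (its start time $S_j$ is well defined and unique), so $\ct^1_j$, and hence $D_j$, is unambiguous.
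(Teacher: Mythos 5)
Your proof is correct and uses exactly the same charging argument as the paper: each preempt-rejected job $j$ is charged to the disjoint set of jobs that arrived during its execution, whose total weight is at least $w_j/\epsilon$. Your version just spells out the disjointness and the summation more explicitly than the paper's two-line proof does.
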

\begin{proof}
From \emph{preempt} rule, it follows that each job $j$ can be associated with a set of jobs such that their total weight is at most $w_j/\epsilon$. For every pair of $j, j'$ and $j \not=j'$, the intersection of the associated sets is empty and hence the lemma follows. 
\end{proof}

\begin{lemma} \label{frac-rejected-wgtgap-rule} 
The total weight of jobs rejected by the \emph{weight-gap} rule is at most $O(\epsilon)$-fraction of the total weight jobs in $\mathcal{J}$.
\end{lemma}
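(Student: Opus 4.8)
The plan is to prove the bound one machine at a time. Fix $i\in\mathcal M$, list the jobs dispatched to $i$ as $j_1,j_2,\dots$ in order of release, and write $w(S):=\sum_{h\in S}w_h$ throughout; it suffices to show $\sum_k w\bigl(R^2_i(r_{j_k})\bigr)\le O(\epsilon)\sum_k w_{j_k}$ and then sum over $i$, since every job is dispatched to exactly one machine. Inspecting Algorithm~\ref{Weight-gap Rejection Rule}, every nonempty $R^2_i(r_{j_k})$ is one of: a tail $\{s,\dots,\nu\}$ of the density-sorted queue (lines~\ref{s-1-not-rejected},~\ref{only-j-rejected}); a tail together with the job just above it, $\{s-1,\dots,\nu\}$ (lines~\ref{s-1-small},~\ref{s-1-rejected}); or the pair $\{j_k,\nu_i(r_{j_k}^-)\}$ from the ``no index $s$'' branch (line~\ref{reject-small-wgts}). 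Write $R^2_i(r_{j_k})=B_k\cup X_k$, where $B_k$ is the maximal tail $\{s,\dots,\nu\}$ inside it ($B_k=\emptyset$ for line~\ref{reject-small-wgts}) and $X_k$ is the remaining ``extra'' job(s): $\{s-1\}$ for lines~\ref{s-1-small},~\ref{s-1-rejected}, the pair $\{j_k,\nu_i(r_{j_k}^-)\}$ for line~\ref{reject-small-wgts}, and $\emptyset$ otherwise. I bound $\sum_k w(B_k)$ and $\sum_k w(X_k)$ separately.

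For the tails, introduce an auxiliary potential $\Phi_i$ evolving like $W_i$ but subtracting only the tail part: $\Phi_i^{(0)}=0$ and $\Phi_i^{(k)}=\max\{0,\ \Phi_i^{(k-1)}+w_{j_k}-\tfrac1\epsilon w(B_k)\}$. A short induction shows $\Phi_i^{(k)}\ge W_i(r_{j_k})$ (since $\Phi_i$ never subtracts more than $W_i$ does), and whenever $B_k\neq\emptyset$ Inequality~\ref{ub_wg} gives $w(B_k)\le\epsilon\bigl(W_i(r_{j_k}^-)+w_{j_k}\bigr)\le\epsilon\bigl(\Phi_i^{(k-1)}+w_{j_k}\bigr)$, so the expression inside the $\max$ defining $\Phi_i^{(k)}$ is already nonnegative and the clamp is never active. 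Telescoping $\Phi_i^{(k)}=\Phi_i^{(k-1)}+w_{j_k}-\tfrac1\epsilon w(B_k)$ and using $\Phi_i\ge0$ then gives $\sum_k w(B_k)\le\epsilon\sum_k w_{j_k}$. The whole point of passing to $\Phi_i$ is to sidestep the resets of $W_i$ to $0$ (Property~1 of Lemma~\ref{wg-rule-prop}), which would otherwise wreck a direct telescoping of $W_i$.

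For the extra jobs, line~\ref{s-1-small} is immediate: its guard forces $w_{s-1}\le\epsilon w_{j_k}$, so these contribute at most $\epsilon\sum_k w_{j_k}$. For line~\ref{s-1-rejected}, the extra job $g$ (of index $s-1$) satisfies $w_g\le\ct^2_g$ at the trigger, and by line~\ref{j-is-reject} every job that ever incremented $\ct^2_g$ was, at its own arrival, in the tail $\{s',\dots,\nu'\}$ of that event and hence rejected and counted in some $B_{k'}$; since each arriving job feeds at most one counter and each counter crosses its threshold at most once (after which $g$ is removed), the feeding sets are pairwise disjoint and contained in $\bigsqcup_{k'}B_{k'}$, so $\sum_g w_g\le\sum_g\ct^2_g\le\sum_{k'}w(B_{k'})=O(\epsilon)\sum_k w_{j_k}$ by the previous paragraph.

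The one case I expect to be the real obstacle is line~\ref{reject-small-wgts}: there $X_k=\{j_k,\nu_i(r_{j_k}^-)\}$ with $w_{j_k}<\epsilon w_g$ for $g:=\nu_i(r_{j_k}^-)$, the trigger is again $\ct^2_g\ge w_g$, but now the jobs feeding $\ct^2_g$ are exactly the super-tiny jobs that line~\ref{no-small-wgts-reject} explicitly did \emph{not} reject. Charging $w_g$ to their (disjoint) total weight only yields $\sum_g w_g\le\sum_k w_{j_k}$ — an $O(1)$-fraction, not $O(\epsilon)$ — so this accounting must be refined. The extra leverage available is that each feeding job $j'$ has not only $w_{j'}<\epsilon w_g$ but also $p_{ij'}<\epsilon p_{ig}$, and that such arrivals occur only while ``no index $s$ exists'', i.e.\ while $W_i$ is small (by the guard and by Property~2, $\epsilon W_i(t)<w_{\nu_i(t)}$); the plan would be to use this to show that between consecutive feedings of $\ct^2_g$ the potential $W_i$ must have absorbed — and then forfeited at the subsequent reset — budget comparable to $\tfrac1\epsilon$ times the weight eventually rejected here, letting these ``no index $s$'' rejections be charged against a constant fraction of the budgeted ($B_k$) rejections exactly as in the line~\ref{s-1-rejected} case. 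Making this amortization precise is, I expect, the technical heart of the lemma.
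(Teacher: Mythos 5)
Your decomposition into tails $B_k$ and extras $X_k$, the telescoping of the auxiliary potential $\Phi_i$ (which yields $\sum_k w(B_k)\le\epsilon\sum_k w_{j_k}$ cleanly, with the clamp provably inactive), and your treatment of the Line~\ref{s-1-small} and Line~\ref{s-1-rejected} extras are all correct --- indeed somewhat tighter than the paper's own first step, which charges the budgeted rejections against $W_i$ and invokes Property~3 of Lemma~\ref{wg-rule-prop} to get a $2\epsilon$ fraction. But the proof is not complete: you explicitly leave the Line~\ref{reject-small-wgts} case open, and the amortization you sketch for it is not what the paper does and I do not believe it can be pushed through. The obstruction is the one you half-identify: for $\ct^2_g$ to be fed more than once, the previous feeder (which became the new smallest-density job) must first leave $V_i$, and it can do so either via a budget-consuming tail rejection (good for your charging) or via a nested Line~\ref{reject-small-wgts} event of its own, which consumes no $W_i$-budget whatsoever since $W_i$ is simply reset to $0$ there (Property~1). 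Cascading such nested events to depth $\Theta(1/\epsilon)$ appears to allow a constant fraction of the arriving weight to be rejected without ever paying at rate $1/\epsilon$, so no charge against the $W_i$-budget alone will close this case under the trigger $\ct^2_{(\nu-1)}\ge w_{(\nu-1)}$ as literally written in Algorithm~\ref{Weight-gap Rejection Rule}.

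The paper closes the case by reading the trigger for both counter-based rejections as $\ct^2_{j'}\ge w_{j'}/\epsilon$ rather than $\ct^2_{j'}\ge w_{j'}$; this is also the reading used in the proof of Corollary~\ref{coro-bound-partof-objective} (``since $\ct^2_h<w_h/\epsilon$, otherwise $h$ is rejected''), so the pseudocode's threshold is evidently a typo. Under that threshold the Line~\ref{reject-small-wgts} case becomes the same two-line charge you already use for Line~\ref{s-1-rejected}: the jobs feeding $\ct^2_g$ here are the tiny \emph{non-rejected} arrivals of Line~\ref{no-small-wgts-reject}; each arrival feeds at most one counter and each counter fires at most once, so the feeding sets are pairwise disjoint subsets of $\mathcal{J}$; at the trigger the rejected weight is $w_g+w_j\le(1+\epsilon)w_g\le 2\epsilon\,\ct^2_g$, using $w_j\le\epsilon w_g$ (which follows from $p_{ij}<\epsilon p_{ig}$ together with $\delta_{ij}\le\delta_{ig}$, $j$ being the new smallest-density job). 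Summing over triggers adds $O(\epsilon)\sum_{j}w_j$ and finishes the lemma. To repair your argument, adopt the $w/\epsilon$ threshold and insert this charge; as it stands, the final case of your proof is a conjecture, not a proof.
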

\begin{proof}
At the arrival of a job $j$ on machine $i$, $W_i(t)$ is incremented by $w_j$. Combining Inequality~\ref{ub_wg} with Property 3 in Lemma~\ref{wg-rule-prop}, it follows that the total weight of jobs rejected  with respect to $W_i$, is at most $2\epsilon$-fraction of total weight of jobs dispatched to $i$.

At the arrival of $j$, $\ct^2_{j'}$ for some $j'$ on $i$ may be incremented by $w_{j'}$. Assume that $j'$ is rejected due to the fact $\ct^2_{j'} \geq w_{j'}/\epsilon$. This can happen either at Line~\ref{reject-small-wgts} or at Line~\ref{s-1-rejected} in Algorithm~\ref{Weight-gap Rejection Rule}. We first focus at Line~\ref{s-1-rejected} that is, there exists a job with index $s$. In this case, $j'$ is rejected only after arrival of $w_{j'}/\epsilon$. Combining this fact with total weight of jobs rejected with respect to $W_i$, we have that total weight of jobs rejected is at most $3 \epsilon$. Now consider the case in Line~\ref{reject-small-wgts}. Observe that $\nu-1$ is the job with index $j'$. Here both $j'$ and $j$ is rejected. Since $p_{ij} < \epsilon p_{i(\nu-1)}$ and $\delta_{ij} \geq \delta_{i(\nu-1)}$, it follows that $w_{j} \leq \epsilon w_{(\nu-1)} = \epsilon w_{j'}$. In this case, the total weight of jobs rejected is at most $w_{j} + w_{j'} = (1+\epsilon)w_{j'} \leq  2w_{j'}$ and total weight of jobs accounted in $\ct^2_{j'}$ is $w_{j'}/\epsilon$. Combining this fact with total weight of jobs rejected with respect to $W_i$, we have that total weight of jobs rejected is $4\epsilon$. Hence the total weight of jobs rejected due to \emph{weight-gap} rule is $O(\epsilon)$.   
\end{proof}

\subsection{Dispatching policy}

When a new job $j$ arrives,  a variable $\Delta_{ij}$ is set. Intuitively, $\Delta_{ij}$ is the approximate increase in the total weighted flow-time objective if the job $j$ is assigned to the machine $i$ and $j$ is not rejected. Then, $\Delta_{ij}$ is defined as follows.

\begin{align*}
\Delta_{ij}  &=  
		w_j \sum \limits_{h \in V_i(r_j) : \delta_{ih} \geq \delta_{ij}} p_{ih} + p_{ij} \sum \limits_{h \in V_i(r_j): \delta_{ih} < \delta_{ij} } w_{h} \\
 		&+ w_j q_{i\kappa_i(r^{-}_j)} (r_j) \cdot \one_{\text{\{$\kappa_i(r^{-}_j)$ is not rejected currently due to preempt rule\}}} \\
		&- q_{i\kappa_i(r_j^-)}(r_{j}) \cdot \sum \limits_{h \in U_i(r_j) \setminus \{j\}} w_h \cdot \one_{\text{\small\{$\kappa_i(r^{-}_j)$  is currently rejected due to the preempt rule\}}}
\end{align*}
%
The first term corresponds to the flow-time of the new job $j$ due to waiting on jobs with higher density than $\delta_{ij}$ in $V_i(r_j)$. 
The second term corresponds to the delay of the jobs in $V_i(r_j)$ with smaller density than $\delta_{ij}$. The third and the fourth terms give corrections depending on whether job $\kappa_i(r_j^-)$ is rejected due to the \emph{preempt} rule. 

We now describe the dispatching policy of jobs to machines. 
At the arrival time of a job $j$, we hypothetically assign $j$ to every machine $i$ and compute the variables $\alpha_{ij}$. Finally, we assign $j$ to the machine that minimizes $\alpha_{ij}$. For notional purposes, we put an additional apostrophe to previously defined variables.  The additional apostrophe stands for the fact that these variables correspond to the case where we hypothetically assign $j$ to $i$. For example, $R^{2'}_{i}(r_{j})$ denote the set of rejected jobs due to the \emph{weight-gap rule} when $j$ is hypothetically assigned to $i$. Similarly, $W'_i(r_j)$ denote the function $W_i$ at $r_j$ in the case if $j$ is assigned to $i$.
Further, let $\rho = \rho_{ij}$ be an index of a job in $V_i(r_j^-)$ such that the following two inequalities hold simultaneously:
\begin{align*}
&&\sum \limits_{h = \rho}^{|V_i(r_j^-)|} w_h \leq W'_i(r_j) < \sum \limits_{h = (\rho-1)}^{|V_i(r_j^-)|} w_h 
\end{align*}
The variable $\alpha_{ij}$ is computed for each machine $i$ as follows:

\begin{align*}
\alpha_{ij} &=  \frac{20 w_j p_{ij}}{\epsilon} + w_j \sum \limits_{h \in V_i(r_j^-) : \delta_{ih} \geq \delta_{ij}} p_{ih} + w_j p_{ij} + p_{ij}\sum \limits_{h \in V_i(r_j^-) : \delta_{ij} > \delta_{ih}} w_{ih} - n_{ij} \\
\end{align*}
where $n_{ij}$ is defined as follows. 
\begin{align*}
n_{ij} &= 
	w_j \left(\sum \limits_{h \in V_i(r_j^-) :  \delta_{i\rho} \geq \delta_{ih} }  p_h + 
	\biggl( W'_i(r_j) -  \sum \limits_{h \in V_i(r_j^-) :  \delta_{i\rho} \geq \delta_{ih}} w_h \biggr) \frac{p_{i,(\rho-1)}}{w_{(\rho-1)}} \right) \\
		&  \hspace{10cm} \text{if } R^{2'}_{i}(r_{j}) = \{j\},  \\
n_{ij} &=	w_j \sum \limits_{h \in R'^2_{i}(r_{j})} p_{ih} \hspace{6.2cm}  \text{if } R'^2_{i}(r_{j}) = \{j, \nu_i(r_j^-)\}, \\
n_{ij} &=	p_{ij} \sum \limits_{h \in R'^2_{i}(r_{j})} w_h +  \epsilon^2 W'_i(r_j)  p_{ij}  \hspace{6cm}  \text{otherwise.}
\end{align*}
The algorithm assigns $j$ to machine $i^* = \arg \min_{i \in \mathcal{M}} \alpha_{ij}$. 
 
%

\subsection{Dual variables} \label{dual variables}
 

Suppose job $j$ is assigned to machine $i$. Assume $L_j$ represents the last time $t$ such that $j$ is in $U_i(t)$. Informally, $L_j$ is the time at which $j$ is removed from the queue of the machine $i$. Note that $j$ can be removed from $U_i(t)$ due to three following reasons:
\begin{enumerate}
 	\item If $j$ has being scheduled for $p_{ij}$ time units on machine $i$ then $L_j = C_j$ \label{tau_completion}
 	\item If $j$ is rejected due to \emph{preempt} rule \label{tau_preempt}
 	\item If $j$ is rejected due to \emph{weight-gap} rule.  \label{tau_weight-gap}
 \end{enumerate}
 	
In cases ~\ref{tau_preempt} and~\ref{tau_weight-gap}, $j$ is rejected due to the arrival of some job,  denoted by $\rej(j)$. Recall that $R^{1}_{i}(r_{j},L_{j})$ is the set of jobs that are rejected due to \emph{preempt rule} during the interval $(r_j, L_j]$ on machine $i$. Note that those jobs cause a decrease in the flow of $j$. Observe that $R^{1}_{i}(r_{j},L_{j})$ contains $j$ if $j$ is rejected due to the \emph{preempt} rule. We define the \emph{definitive completion} time, denoted by  $\widetilde{C}_{j}$, of a job $j$ as follows.

\begin{enumerate}
	\item If $j$ is not rejected due to the \emph{weight-gap} rule (corresponds to cases~\ref{tau_completion} and~\ref{tau_preempt}) .
		\begin{align}
			&&\widetilde{C}_{j} = L_j + \sum \limits_{h \in R^{1}_{i}(r_{j},L_{j})} q_{ih}(r_{\mathsf{rej}(h)}) \label{completion-time-j-not-rejected}
		\end{align}
	\item If $j$ is rejected due to the \emph{weight-gap} rule on the arrival of some job other than $j$ that is, $r_{j'}$ where $ j' \not= j$ 		

\begin{align}
			&&\widetilde{C}_{j} = L_j + \sum \limits_{h \in R^{1}_{i}(r_{j},L_{j})} q_{ih}(r_{\mathsf{rej}(h)}) 
		  	+ \sum \limits_{h \in U_i(L_{j}) : \delta_{ih} \geq \delta_{ij}}q_{ih}(L_{j}) + 
		  	\sum \limits_{h \in R^2_{i}(r_{\rej(j)}): \delta_{ih} \geq \delta_{ij}} p_{ih} \label{completion-time-j-rejected-by-k}
		\end{align}
	\item If $j$ is immediately rejected (i.e., $j \in R^{2}_{i}(r_{j})$) and job $\nu_{i}(r^{-}_{j})$ is also rejected due to the arrival of $j$.
		\begin{align}
			&&\widetilde{C}_{j} = L_j + p_{ij} + \sum \limits_{h \in U_i(L_{j})} q_{ih}(L_{j}) \label{completion-time-j-nu-rejected}
		\end{align}
	\item If $j$ is immediately rejected and it is the only job rejected due to the \emph{weight gap} rule at $r_j$. Denote $\rho = \rho_{ij}$. 
		\begin{align}
				 \widetilde{C}_{j} &= L_j + p_{ij} + \sum \limits_{h \in V_i(L_{j}^-) : \delta_{ih} > \delta_{i(\rho-1})} p_{ih} +\left(1 - \frac{W_i(L_{j}) -  \sum \limits_{h \in V_i(L_{j}^-):  \delta_{i\rho} \geq \delta_{ih}} w_h}{w_{(\rho-1)}}\right) p_{i(\rho-1)} \nonumber \\
				 &+ q_{i\kappa_i(L_j)}(L_j). \one_{\{R^1_i(L_j) = \emptyset\}}
		\end{align}
\end{enumerate}
This completes the description of the \emph{definitive completion time}.
\bigskip

Let $Q_i(t)$ denote the set of jobs that have not been definitely completed that is 
$$Q_{i}(t) := \{j: j \text{ has been assigned to } i,  t < \widetilde{C}_{j}\}.$$
Next, we define the notion of \emph{artificial fractional weight} of a job $j \in Q_{i}(t)$,
\begin{align*}
w^{f}_j(t) &= 
	\begin{cases}
		w_j &\text{~if~} r_j \leq t \leq \widetilde{C}_{j} - p_{ij}  \\
		 w_j \left( \frac{\widetilde{C}_{j}-t}{p_{ij}} \right)	 &\text{~if~} \widetilde{C}_{j} - p_{ij} < t <\widetilde{C}_{j} 
	\end{cases}
\end{align*}
Now, we have all the necessary tools to set dual variables. At the arrival of job $j$,  set 
$$
\alpha_j = \left(\frac{\epsilon}{1+\epsilon}\right) \min_{i \in \mathcal{M}} \alpha_{ij}
$$ 
and  never change this value again. The second dual variable $\beta_{it}$ is set to 
$$
\frac{\epsilon}{(1+\epsilon) (1+\epsilon^2)} \sum \limits_{h \in Q_i(t)} w^{f}_h(t)
$$ 

Let $Q^{R}_i(t) \subseteq Q_i(t)$ be the set of jobs that are rejected due to the \emph{weight-gap rule} and are not yet definitively completed until time $t$. 

\begin{lemma} \label{monotone}
For fixed time $t$, $\beta_{it}$ may only increase as new jobs arrive and some old jobs might get rejected. 
\end{lemma}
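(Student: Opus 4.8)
Since $\beta_{it}$ is a fixed positive multiple of $\Sigma_i(t):=\sum_{h\in Q_i(t)}w^{f}_h(t)$, the plan is to show $\Sigma_i(t)$ never decreases. I would process the jobs in order of release time and argue that a single arrival does not decrease $\Sigma_i(t)$ for any fixed $t$. If the arriving job $j'$ is dispatched to a machine $i'\neq i$, nothing on machine $i$ changes, so I may assume $j'$ is dispatched to $i$. Throughout I would compare the state right after $r_{j'}$ (equivalently, the run of the algorithm on the jobs released up to $r_{j'}$) with the reference state in which no job is released after $r_{j'}^-$; in the reference state every job still in $U_i$ simply runs to completion under HDF and every job already removed keeps its frozen quantities, so all definitive completion times are well defined on both sides.

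I would first record two reductions. First, for a fixed $t$ the map $\widetilde C\mapsto w^{f}_h(t)$ (read as $0$ when $\widetilde C\le t$) is continuous and non-decreasing, so for a job $h$ whose fate does not change it suffices to show that its definitive completion time does not decrease. Second, the arrival of $j'$ can affect only the jobs in $\{\kappa_i(r_{j'}^-)\}\cup V_i(r_{j'}^-)$ together with $j'$ itself: any job $h$ already removed from $U_i$ by $r_{j'}^-$ has $L_h<r_{j'}$, and every quantity in its defining equation among (\ref{completion-time-j-not-rejected})--(\ref{completion-time-j-nu-rejected}) and the fourth case is evaluated at a time $\le L_h$, so $\widetilde C_h$ is frozen. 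Hence $\Sigma_i(t)$ changes only through (i) the non-negative term $w^{f}_{j'}(t)$ if $j'$ enters $Q_i(t)$, and (ii) the changes in $\widetilde C_h$ for $h\in\{\kappa_i(r_{j'}^-)\}\cup V_i(r_{j'}^-)$.

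For (ii) I would enumerate the fate of each such $h$. If $h=\kappa_i(r_{j'}^-)$ is not preempted, its completion time and hence $\widetilde C_h$ are unchanged. If $h=\kappa_i(r_{j'}^-)$ is preempted at $r_{j'}$, then $h\in R^1_i(r_h,L_h)$ and the extra summand $q_{ih}(r_{\rej(h)})=q_{ih}(r_{j'})$ introduced in (\ref{completion-time-j-not-rejected}) exactly reinstates the processing $h$ would still have run for, so $\widetilde C_h$ does not drop. If $h\in V_i(r_{j'}^-)$ survives, then $h$ is only delayed — by $p_{ij'}$ when $\delta_{ij'}\ge\delta_{ih}$, and by the reinstated $q$-term of the at most one job preempted at $r_{j'}$ — and never sped up, because the weight-gap rule ejects only a suffix of $V$ in density order and so never removes a job scheduled ahead of $h$; thus $\widetilde C_h$ does not decrease. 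Finally, if $h$ is rejected by the weight-gap rule at $r_{j'}$ (this includes $h=j'$ when $j'$ is immediately rejected, governed by (\ref{completion-time-j-nu-rejected}) and the fourth case, and the case $h=\nu_i(r_{j'}^-)$), the value of $\widetilde C_h$ is prescribed by one of (\ref{completion-time-j-rejected-by-k}), (\ref{completion-time-j-nu-rejected}) or the fourth formula, and the work it accounts for — the remaining work of the surviving jobs of density $\ge\delta_{ih}$, plus the full processing times of the simultaneously rejected jobs of density $\ge\delta_{ih}$, plus the reinstated preempt-terms, plus the $+p_{ij'}$ and $+q_{i\kappa_i}$ corrections — must be shown to be at least the position $h$ occupied in the reference schedule; summing these per-job inequalities with the non-negative contribution of $j'$ yields $\Sigma_i^{\mathrm{after}}(t)\ge\Sigma_i^{\mathrm{before}}(t)$.

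The hard part, I expect, is exactly this last family of cases. The definitive-completion-time formulas for weight-gap-rejected jobs only count jobs of density $\ge\delta_{ih}$ and collapse a whole tail of low-priority jobs into a single ``fractional'' term whose coefficient is controlled by $W_i$ via the index $\rho$ (or $s$) and by Property 2 of Lemma~\ref{wg-rule-prop}; checking that, after also absorbing a possible simultaneous preempt-rejection at $r_{j'}$, the prescribed $\widetilde C_h$ still dominates the reference completion time — and doing so across the several sub-cases of Algorithm~\ref{Weight-gap Rejection Rule} where $j'$, $\nu_i(r_{j'}^-)$ and the suffix $\{s-1,\dots,\nu\}$ may be discarded together — is the technical core. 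Everything else (the freezing of removed jobs, the monotonicity of $w^{f}$ in $\widetilde C$, and the survivor and preempt cases) is bookkeeping.
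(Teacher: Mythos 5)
Your overall strategy is the right one, and it is considerably more explicit than what the paper itself offers: the paper ``proves'' Lemma~\ref{monotone} with a single observation --- that jobs stay in $Q_i(t)$ until their definitive completion time --- and never verifies the point you correctly isolate as the actual content of the claim, namely that a new arrival cannot \emph{decrease} any existing $\widetilde{C}_h$ (and hence any $w^f_h(t)$). Your reductions are sound: $\beta_{it}$ is a positive multiple of $\sum_{h\in Q_i(t)}w^f_h(t)$; $w^f_h(t)$ is non-decreasing in $\widetilde{C}_h$ for fixed $t$; jobs already removed from $U_i$ before $r_{j'}$ have frozen $\widetilde{C}_h$ because every quantity in (\ref{completion-time-j-not-rejected})--(\ref{completion-time-j-nu-rejected}) is evaluated at times $\le L_h$; and the survivor/preempt cases are indeed routine once one notes that the preempt rule's reinstated term $q_{ih}(r_{\rej(h)})$ exactly compensates the removed work and that the weight-gap rule only discards a density-suffix.

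The problem is that you stop exactly where the proof starts to have content. You state, but do not prove, that for a job $h$ rejected by the weight-gap rule at $r_{j'}$ the value prescribed by (\ref{completion-time-j-rejected-by-k}), (\ref{completion-time-j-nu-rejected}) or the fourth formula dominates what $\widetilde{C}_h$ would have been in the reference run without $j'$. This is a genuine family of inequalities, one per branch of Algorithm~\ref{Weight-gap Rejection Rule}, and it interacts with a possible simultaneous preempt-rejection of $\kappa_i(r_{j'}^-)$ and with the fractional term governed by $\rho_{ij}$ and Property 2 of Lemma~\ref{wg-rule-prop}. As submitted, your text is a correct and well-organized proof \emph{plan} with the technical core explicitly deferred, so it cannot be accepted as a complete proof --- though in fairness the paper's own one-sentence justification does not carry out this verification either, and your outline is the honest skeleton of what a full proof of Lemma~\ref{monotone} would have to contain.
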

Observe that above lemma holds as jobs are removed from $Q_i(t)$ only after their \emph{definitive completion time}. Thus a job that might have already completed its execution on a machine or rejected, can still be present in the $Q_i(t)$. During the analysis, we will show that the dual constraint corresponding to job $j$ are feasible at $r_j$. Since $\beta_{it}$ only increases with respect to the arrival of new jobs, the feasibility holds for all $t \geq r_j$. 
\section{Analysis}

We present first two technical lemmas which are important for the analysis of our primal-dual algorithm. In Lemma~\ref{lem:main-inequality}, we relate the weight of rejected jobs in $Q^{R}_i(t)$ to the weight of jobs pending in $U_i(t)$. This will help us in proving the feasibility of dual constraints in Lemma~\ref{feasible-dual-constraints-j-not-rejected}, Lemma~\ref{feasible-dual-constraints-j-nu-rejected} and Lemma~\ref{feasible-dual-constraints-j-only-rejected}. In Lemma~\ref{lem-part-dual-obj}, we show that the negative parts in the definition of  $\alpha_j$s' are relatively small. This will help us to bound the value of the dual objective. 
\medskip 

\begin{lemma} \label{lem:main-inequality}
Let $\kappa = \kappa_{i}(t)$. For any machine $i$ and any time $t$, it holds that  
$\frac{w_\kappa}{p_{i\kappa}} q_{i\kappa}(t) + \sum \limits_{h \in V_i(t)} w^f_h(t) - W_i(t) \leq \frac{1}{\epsilon} \sum \limits_{h \in Q^R_i(t)} w_h(t) $.
\end{lemma}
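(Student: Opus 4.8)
The plan is to prove the inequality by induction on the sequence of events that can change the state of machine $i$: (a) the arrival of a job $j$ dispatched to $i$, together with the preempt-rejections $R^1_i(r_j)$, the weight-gap-rejections $R^2_i(r_j)$ and the update of $W_i$ it triggers; (b) a job starting to execute on $i$; (c) a job completing on $i$; and (d) the passage of time between two consecutive such events. It is convenient to write the claim as $\Phi_i(t)\ge 0$, where $\Phi_i(t):=\frac1\epsilon\sum_{h\in Q^R_i(t)}w^f_h(t)-\frac{w_{\kappa}}{p_{i\kappa}}q_{i\kappa}(t)-\sum_{h\in V_i(t)}w^f_h(t)+W_i(t)$, and $w_h(t)$ in the statement is read as the artificial fractional weight $w^f_h(t)$ (which is essentially forced: a weight-gap-rejected job sits neither in $V_i$ nor in the running slot, so only the continuous ramp-down of $w^f$ can absorb its departure from $Q^R_i$ at $\widetilde C_h$). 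The base case holds since every term is $0$ before any release.

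Events (b), (c) and a preempt-rejection are immediate. When $j$ starts on $i$ it leaves $V_i$, where it contributed $w^f_j=w_j$ because the definitive completion time of any job that starts is at least $S_j+p_{ij}$ (clear from the completion/preempt cases of the definition), and it becomes $\kappa$ with $q_{i\kappa}=p_{ij}$, contributing $\frac{w_j}{p_{ij}}p_{ij}=w_j$; no other term moves, so $\Phi_i$ is unchanged. A completion does not change $\Phi_i$ either, since $\frac{w_\kappa}{p_{i\kappa}}q_{i\kappa}$ is already $0$ an instant earlier (the simultaneous HDF start is event (b)). If the arrival of $j$ preempt-rejects $\kappa$, the only effect is that the non-negative first term vanishes and the HDF job starts; preempt-rejected jobs never enter $Q^R_i$, so $\Phi_i$ only increases.

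The substantial case is (a). Apart from the non-negative contribution of a possible preempt-rejection, the change of $\Phi_i$ at $r_j$ is the sum of: $-w_j$ when $j$ is not immediately rejected (it enters $V_i$ with $w^f_j=w_j$); $+(1+\frac1\epsilon)w_h$ for each $h\in R^2_i(r_j)\setminus\{j\}\subseteq V_i(r_j^-)$, which leaves $V_i$ (where, being unprocessed, it contributed the full $w_h$) and enters $Q^R_i$ with $w^f_h(r_j)=w_h$ — this last equality holds because a freshly weight-gap-rejected job has $\widetilde C_h\ge L_h+p_{ih}$, visible from \eqref{completion-time-j-rejected-by-k} and \eqref{completion-time-j-nu-rejected}; $+\frac1\epsilon w_j$ when $j\in R^2_i(r_j)$; and $W_i(r_j)-W_i(r_j^-)$ from the last term of $\Phi_i$. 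Substituting the update rule \eqref{weight-update-equation} (and distinguishing whether its $\max$ is attained at $0$), one checks branch by branch in Algorithm~\ref{Weight-gap Rejection Rule} that the total is non-negative, consuming Properties 1–4 of Lemma~\ref{wg-rule-prop} and Inequality~\eqref{ub_wg}: e.g. in the generic branch $R^2_i(r_j)=\{s,\dots,\nu\}$ with $W_i(r_j)>0$ the four contributions cancel to exactly $\sum_{h\in R^2_i(r_j)}w_h\ge0$; in the branches landing at $W_i(r_j)=0$ one uses $\epsilon W_i(r_j^-)< w_{\nu_i(r_j^-)}$ (Property 2) and $\frac1\epsilon\sum_{h\in R^2_i(r_j)}w_h\ge W_i(r_j^-)+w_j$ (from \eqref{weight-update-equation}); and when $j\in R^2_i(r_j)$ (so $R^2_i(r_j)$ is $\{j\}$ or $\{j,\nu_i(r_j^-)\}$ by Property 4) the gain $\frac1\epsilon w_j$ absorbs the drop of $W_i$ to $0$ given by Property 1.

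For the gap $(t_0,t]$ in (d), $\kappa=\kappa_i(\cdot)$ is fixed and every $h\in V_i(t)$ is unprocessed, hence — its definitive completion time being $\ge C_h>S_h>t$ — still in the flat phase of $w^f$; so $\sum_{h\in V_i}w^f_h$ and $W_i$ are constant, the non-rejected part of $\Phi_i$ increases at rate $\delta_{i\kappa}$, and $\frac1\epsilon\sum_{h\in Q^R_i}w^f_h$ decreases at rate at most $\frac1\epsilon\sum_{h\in Q^R_i(t_0)}\delta_{ih}$. Combined with the inductive hypothesis at $t_0$, what remains is the rate inequality $\frac1\epsilon\sum_{h\in Q^R_i(t)}\delta_{ih}\le\delta_{i\kappa_i(t)}$, which I would isolate as a separate claim proved from the structure of the weight-gap rule — the jobs placed in $Q^R_i$ are the least dense ones present when they are rejected, and the definitive completion times are chosen precisely so that their aggregate density stays dominated by that of the running job at all later times. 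This density-domination claim, together with the exhaustive case analysis of event (a) (which must simultaneously handle a preempt-rejection, the floor in \eqref{weight-update-equation}, and every branch of Algorithm~\ref{Weight-gap Rejection Rule}), is the part I expect to be hard; it is also the only point where the exact form of the definitive completion time, beyond $\widetilde C_h\ge C_h$, is genuinely used. I would establish it first and then run the induction.
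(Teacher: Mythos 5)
Your induction is organized differently from the paper's, and the difference is where the gap lies. The paper inducts on the number of \emph{released} jobs: the hypothesis is that the inequality holds at every time $t$ for the instance consisting of jobs $1,\dots,j-1$, and the step compares, at the same fixed $t$, the quantities for the $(j-1)$-job and $j$-job instances, using $V(j,t)\subseteq V(j-1,t)\cup\{j\}$, $Q^R(j,t)=Q^R(j-1,t)\cup R^2(r_j)$, and the transferred budget inequality $w_j\le W(j,t)-W(j-1,t)+\frac1\epsilon\sum_{h\in R^2(r_j)}w_h$ valid for all $t$ up to the last completion (plus a separate computation, in the $R^2(r_j)=\{j\}$ case, showing the left-hand side becomes nonpositive after the time $t'$ at which $j$'s virtual slot ends). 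This structure never requires the potential to be monotone in $t$. You instead run an event-based induction within a single execution, which forces you to control the sign of $\frac{d}{dt}\Phi_i$ during the quiet intervals, and you correctly identify that this reduces to the rate inequality $\frac1\epsilon\sum_{h\in Q^R_i(t)}\delta_{ih}\le\delta_{i\kappa_i(t)}$ (summed over rejected jobs in their ramp-down window). You leave that claim unproven, and it is in fact false in general: even a single ramping rejected job $h$ whose density is comparable to that of the currently running job (e.g.\ $h$ was the lowest-density job at its rejection time and the queue has since drained down to jobs of similar density) gives $\frac1\epsilon\delta_{ih}>\delta_{i\kappa}$ since $\frac1\epsilon>1$. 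So $\Phi_i$ genuinely can decrease during time passage; what holds is only that it never crosses zero, and your argument has no mechanism to establish that without the false rate bound.

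Your bookkeeping for the arrival event (the $-w_j$, $+(1+\frac1\epsilon)w_h$, $+\frac1\epsilon w_j$, and $W_i(r_j)-W_i(r_j^-)$ contributions, cancelling to $\sum_{h\in R^2_i(r_j)}w_h\ge 0$ in the generic branch) is sound and is essentially a local version of the paper's Inequality~(\ref{eq:weight-ineq}). To repair the proof you should keep that computation but abandon the pointwise derivative argument: restate the induction as the paper does, over released jobs with the time $t$ quantified inside the hypothesis, so that the slack created at $r_j$ is compared against the $(j-1)$-job instance at the \emph{same} later time $t$ rather than propagated forward through time within one instance.
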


\begin{proof} 

We prove by induction on the arrival of jobs. The base case holds trivially when no job has been released.  Assume the inequality (for every time $t$) holds before the arrival of job $j$ on machine $i$, we show that it holds after the job $j$ arrives. 

Fix a machine $i$. For the rest of this proof, we drop the machine index $i$. For example $R^2(r_j) = R^2_i(r_j)$. Assume that jobs are indexed according to their release time. Let $W(j,t)$ denote the value of $W(t)$ at time $t$ when job $1,2, \ldots, j$ have been released. Let $V(j,t)$ be the set of unprocessed jobs at time $t$ when jobs $1, 2, \ldots, j$ have been released. Let $Q^R(j,t)$ be the set of rejected jobs which have not been definitely completed at time $t$ when jobs $1, 2, \ldots, j$ have been released. Note that $Q^R(j, r_j) = Q^R(j-1, r_j) \cup R^2(r_j)$.
We consider two cases whether $j \in R^2(r_j)$ or $j \notin R^2(r_j) $. 

\begin{enumerate}
\item \textbf{Job $j$ is not immediately rejected at $r_j$ that is, $j \notin R^2(r_j).$} 

It follows from the \emph{weight-gap} rule that $w_j \leq W(j,r_j) - W(j-1, r_{j}) + \frac{1}{\epsilon} \sum_{h \in R^2(r_j)}  w_h$. Note that this rule rejects small density jobs so all jobs in $R^2(r_j)$ have definitely completed time after the smallest density job in $V(j,t)$. Therefore, for all time $t \in [r_j, C_\ell)$ where $\ell$ is the last completed job in the current schedule it holds that 
\begin{align}	\label{eq:weight-ineq}
w_j \leq W(j, t) - W(j-1, t) + \frac{1}{\epsilon} \sum \limits_{h \in R^2(r_j)}  w_h
\end{align}

Consider an arbitrary time $t$. If $t > C_\ell$ then the lemma inequality trivially holds since the left hand side is non-positive while the right hand side is non-negative. In the remaining, we consider $t \in [r_j, C_\ell)$. The set $V(j, t)$ has at most one additional job $j$ (and possibly some jobs have been removed). In other words, $V(j, t) \subset V(j-1, t) \cup \{j\}$.
We have 
\begin{align*}
&\sum_{h \in V(j, t)} w^f_h(t) - W(j, t) \\
&= \sum_{h \in V(j-1, t)} w^f_h(t) - W(j-1, t) + \sum_{h \in V(j,t) \setminus V(j-1, t)} w^f_h(t)  - W(j, t) + W(j-1, t) \\
&\leq \frac{1}{\epsilon} \sum_{h\in Q^R(j-1, t)} w_h(t) + \frac{1}{\epsilon} \sum_{h \in R^2(r_j)}  w_h - \frac{w_\kappa}{p_{i\kappa}} q_{i\kappa}(t) \\
&= \frac{1}{\epsilon} \sum_{h \in Q^R(j, t)} w_h(t) - \frac{w_\kappa}{p_{\kappa}} q_{\kappa}(t)
\end{align*}
where the second inequality follows the induction hypothesis and the set $V(j,t) \setminus V(j-1, t)$ contains at most job $j$; the last inequality is due to Inequality (\ref{eq:weight-ineq}).
\medskip

\item \textbf{Job $j$ is rejected at $r_j$ that is, $j \in R^2(r_j).$} 

If $|R^2(r_j)|  = \{j, \nu_i(r_j^-\}$, then according to the Equation (\ref{completion-time-j-nu-rejected}) \emph{definitive completion time} of $j$ is later than the completion time of all job in $V(j, t)$. It follows that $Q^R(j, t) = Q^R(j-1, t) \cup R^2(r_j), \forall t \in  [r_j, C_\ell)$ where $\ell$ is the last completed job in the current schedule. Thus, the lemma holds as in the Case 1. 

Now, we consider the case when $|R^2(r_j)| = \{j\}$. Unlike the previous case, job $j$ may complete earlier than the last job $\ell$ in $V(j,t)$. Let $\rho = \rho_{ij}$. Suppose $t$ be an arbitrary time smaller than $t'$ where $t'$ is defined as follows:
$$
	t' = r_j + \sum \limits_{h \in V(r_j^-): \delta_{h} > \delta_{(\rho-1)}} p_{h} +\left(1 - \frac{W(j, r_j) -  \sum \limits_{h \in V(r_j^-):  \delta_{\rho} \geq \delta_{h}} w_h}{w_{(\rho-1)}}\right)   p_{(\rho-1)} + q_{\kappa_i(r_j)}(r_j). \one_{\{R^1(r_j) = \emptyset\}} \\
$$
Then it holds that $Q^R(j, t) = Q^R(j-1, t) \cup R^2(r_j)$. As in the above case, the lemma holds for all $t \leq t'$. 

Now we focus on the time $t > t'$. At time $t'$, all the jobs of density larger than $\rho-1$ have been completed. Thus, the total fractional weight in the queue at $t'$ is  
\begin{align*}
	w_{(\rho-1)} \left(\frac{W(j, t') -  \sum \limits_{h \in V(r_j^-):  \delta_{\rho} \geq \delta_{h}} w_h}{w_{(\rho-1)}}\right)  + \sum \limits_{h \in V(t')} w^f_h(t')  &=  W(j, t')
\end{align*}
Hence, the left hand side of the inequality is at most $0$ and the lemma follows.
\end{enumerate}
\end{proof}

\begin{lemma} \label{lem-part-dual-obj} 
Let $J_i(t)$ denote the set of jobs dispatched to machine $i$ until the time $t$ that is, $J(i) = \bigcup \limits_{t' \leq t} U_i(t')$. Then the following inequality holds at all time and for all $i \in \mathcal{M}$
\begin{align}
	&&\mathcal{D}^1 - \mathcal{D}^2  \leq \mathcal{B}^1 + \mathcal{B}^2 + \mathcal{B}^3 \label{weird-inequality}
\end{align}
where 
\begin{align*}
&\mathcal{D}^1 = \sum _{j \in J_i(t) \setminus  R^2_i(r_j)} \left( \epsilon^2 W_i(r_j) p_{ij} - w_j p_{i, \nu_i(r_j^-)}.\one_{\{j = \nu_i(r_j) \text{~and~} p_j < \epsilon p_{i\nu_i(r_j^-)} \}} \right), \\
&\mathcal{D}^2 =\sum \limits_{j \in R^2_i(r_j)} \left(w_j p_{i,\nu_i(r_j)}. \one_{\{|R^2_i(r_j)| = 1\}} + w_{\nu_i(r_j^-)} p_{i,\nu_i(r_j^-)}. \one_{\{|R^2_i(r_j)| > 1\}} \right),  \\
&\mathcal{B}^1 = \sum \limits_{j \in R^2_i(0,t)} w_j p_{ij},  \hspace{0.2cm} \mathcal{B}^2 = \sum \limits_{j \in J_i(t) \setminus \{R^2_i(0, t) \cup U_i(t)\}} w_j p_{ij}  + \epsilon W_i(t) p_{i,\nu_i(t)} \text{~and~} \\
&\mathcal{B}^3 = \sum \limits_{j \in J_i(t)} w_j p_{ij} /\epsilon.
\end{align*}
\end{lemma}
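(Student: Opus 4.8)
The statement is an invariant about machine $i$ that should be proved by induction on job arrivals, exactly as in Lemma~\ref{lem:main-inequality}. I would fix $i$, drop the subscript, and show that if \eqref{weird-inequality} holds for the prefix of jobs $1,\dots,j-1$ (at all times), then it continues to hold after $j$ arrives and the weight-gap rule is applied. The right-hand side terms $\mathcal{B}^1,\mathcal{B}^2,\mathcal{B}^3$ are all sums of $w_h p_{ih}$ over job sets that grow monotonically (or shift predictably: a job leaves $U_i(t)$ and enters the "completed or rejected" pile), so the increments to the RHS on the arrival of $j$ are easy to track: $\mathcal{B}^3$ increases by $w_j p_{ij}/\epsilon$; $\mathcal{B}^1$ increases by $\sum_{h\in R^2(r_j)} w_h p_{ih}$; and $\mathcal{B}^2$ picks up $w_j p_{ij}$ if and when $j$ later completes, plus changes in the $\epsilon W_i(t)p_{i,\nu_i(t)}$ correction term. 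The left side changes because $W(r_j)$ is updated via \eqref{weight-update-equation}, because $j$ may be added to the $\mathcal{D}^1$ sum, and because $\mathcal{D}^2$ picks up a term whenever $R^2(r_j)\neq\emptyset$.

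The core of the argument is a careful case analysis mirroring Algorithm~\ref{Weight-gap Rejection Rule} and Lemma~\ref{wg-rule-prop}: (i) $R^2(r_j)=\emptyset$; (ii) $R^2(r_j)=\{j\}$; (iii) $R^2(r_j)=\{j,\nu_i(r_j^-)\}$; (iv) $j\notin R^2(r_j)$ and $R^2(r_j)=\{s,\dots,\nu\}$ or $\{s-1,\dots,\nu\}$. In each case I would bound the new contribution to $\mathcal{D}^1-\mathcal{D}^2$ by the new contribution to $\mathcal{B}^1+\mathcal{B}^2+\mathcal{B}^3$, using the key structural facts already available: Property~1 (in several branches $W_i(r_j)=0$, killing the $\epsilon^2 W_i(r_j)p_{ij}$ term), Property~2 ($\epsilon W_i(t)<w_{\nu_i(t)}$, needed to control the correction term $\epsilon W_i(t)p_{i,\nu_i(t)}$ in $\mathcal{B}^2$ against the density ordering), Property~3 (the total rejected weight is $\le 2\epsilon w_j$ modulo the smallest job, which lets $\mathcal{B}^3$'s slack of $w_j p_{ij}/\epsilon$ absorb the $\mathcal{D}^2$ term since $p_{ih}\le p_{ij}/\delta$-type bounds follow from $\delta_{ih}\ge\delta_{ij}$ is false here — rather $\delta_{ih}\le\delta_{ij}$ for low-density rejected jobs, so $w_h p_{ih}$ needs the density comparison), and Property~4 (immediate rejection of $j$ forces $R^2(r_j)\subseteq\{j,\nu_i(r_j^-)\}$). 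The density ordering $\delta_{ij}\ge\delta_{i\nu_i(r_j^-)}$ combined with $p_j<\epsilon p_{i\nu_i(r_j^-)}$ in the relevant branch gives $w_j\le\epsilon w_{\nu_i(r_j^-)}$, which is exactly the bound needed to cancel the $-w_j p_{i,\nu_i(r_j^-)}\one_{\{\cdots\}}$ term in $\mathcal{D}^1$ against $\mathcal{D}^2$.

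A subtle point is that \eqref{weird-inequality} must hold "at all times" $t$, not just at $t=r_j$, so I must argue the inequality is preserved as $t$ advances past $r_j$ — here I use that $J_i(t)$ and $R^2_i(0,t)$ only grow, that a job moving from $U_i(t)$ to the completed set moves a term $w_jp_{ij}$ from neither side into $\mathcal{B}^2$ (helping the RHS), and that $W_i(t)$ is piecewise constant between arrivals while $\nu_i(t)$ changes only when a job completes — and when the smallest-density job completes, $\epsilon W_i(t)p_{i,\nu_i(t)}$ can only jump, but Property~2 keeps it below $w_{\nu_i(t)}p_{i,\nu_i(t)}$, which is a term already accounted for. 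I expect the main obstacle to be the branch $R^2(r_j)=\{s,\dots,\nu\}$ or $\{s-1,\dots,\nu\}$ with $j\notin R^2(r_j)$: there the update to $W_i$ in \eqref{weight-update-equation} subtracts $\sum_{h\in R^2(r_j)} w_h/\epsilon$, so $\epsilon^2 W_i(r_j)p_{ij}$ on the LHS decreases (good), but I must simultaneously ensure that the newly-added $\mathcal{D}^2$ term $w_{\nu_i(r_j^-)}p_{i,\nu_i(r_j^-)}$ is dominated, and reconcile the $\mathcal{B}^2$ correction term evaluated at the new $W_i(r_j)$ and new $\nu_i(r_j)$ against its old value — tracking how much the correction term can grow requires Inequality~\eqref{ub_wg} to pin down $\sum_{h=s}^\nu w_h$ relative to $\epsilon(W_i(r_j^-)+w_j)$, and this is where the bookkeeping is heaviest.
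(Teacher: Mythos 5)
Your plan matches the paper's proof essentially step for step: both argue by tracking the increments of $\mathcal{D}^1-\mathcal{D}^2$ and $\mathcal{B}^1+\mathcal{B}^2+\mathcal{B}^3$ at each event (job completion handled first via Property~2, then job arrival split into exactly your cases (i)--(iv) according to the outcome of the weight-gap rule), using the update rule~(\ref{weight-update-equation}), Property~1 to zero out $W_i(r_j)$ in the rejection branches, Property~2 to control the correction term $\epsilon W_i(t)p_{i,\nu_i(t)}$, and the $\mathcal{B}^3$ slack $w_jp_{ij}/\epsilon$ to absorb the remaining terms. The only caveat is that yours is a plan rather than a worked-out derivation, but the case decomposition and the inequalities you identify are precisely the ones the paper's proof carries out.
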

\begin{proof} 
Observe that  terms on the left side of the equation that is $\mathcal{D}^1$ and $\mathcal{D}^2$ change due to the arrival of a new job. Whereas on the right hand side of the equation, the first term $\mathcal{B}^1$ changes due to the dispatch of a job on machine $i$ whereas the second term $\mathcal{B}^2$ changes due to the arrival and the completion of a job on $i$. 

Consider the case when $j$ completes its processing on  $i$ at time $t$. If $j \not = \nu_i(t^-)$, then the second term $\mathcal{B}^2$ increases as a term of $w_jp_{ij}$  is added whereas there is no change in the $\mathcal{D}^1$ and $\mathcal{D}^2$. On other hand, if $j$ happens to be the smallest density job at $t^-$ then the change in $\mathcal{B}^2$ is positive that is, $w_j p_{ij} - \epsilon W(t) p_{ij} > 0$. This is due to the fact that the algorithm maintains the invariant that $\epsilon W(t) < w_{\nu_i(t)}$ for all $t$. Thus, we show in the following analysis that the above inequality holds when a new job $j$ arrives.

Fix a machine $i$. For the rest of this proof, we drop the machine index $i$. In the following let $\Delta_{\mathcal{B}}$ and $\Delta_{\mathcal{D}}$ represent the change on the right and the left hand side of Inequality (\ref{weird-inequality}), respectively. We split the proof into two separate cases depending upon if some jobs are rejected or not due to the \emph{weight-gap} rule at $t = r_j$.  
\bigskip 

\begin{enumerate}
\item \textbf{No job is rejected due to the weight-gap rule at $t$}. 

Then the total change on the right side can be written as:
\begin{align}
\Delta_{\mathcal{B}} &=  \epsilon(W(t) p_{\nu(t)} -  W(t^-) p_{\nu(t^-)}) + w_j p_{j}/\epsilon. \label{no-rejected}
\end{align}
\begin{enumerate}
\item \textbf{Job $j$ is not the smallest density job in $V(t)$}. 

Thus the job with the smallest density is same before and after the arrival of $j$ that is, $\nu(t) = \nu(t^-)$. From Equality (\ref{no-rejected}) we get: 
\begin{align*}
\Delta_{\mathcal{B}} &\geq \epsilon (W(t) p_{\nu(t)}  - W(t^-) p_{\nu(t)} ) \\
&= \epsilon w_j p_{\nu(t)} &\text{since $W(t) = W(t^-) + w_j$} \\
&\geq  \epsilon w_{\nu(t)} p_{j}   &\text{since $\delta_{j} \geq \delta_{\nu(t)}$} \\
&> \epsilon^2 W(t) p_{j}  = \Delta_{\mathcal{D}} & \text{since $w_{\nu(t)} > \epsilon W(t) $}
\end{align*}
\item \textbf{Job $j$ is the smallest density job in $V(t)$}. 

This implies that $\nu(t) = j$ and $\nu(t) \not = \nu(t^-)$. Then we have,
\begin{align}
\Delta_{\mathcal{B}} &=   w_j p_j/\epsilon + \epsilon(W(t) p_{j}  -  W(t^-) p_{\nu(t^-)}) \nonumber \\
&=   w_j p_j/\epsilon + \epsilon w_j p_j + \epsilon W(t^-) (p_j - p_{\nu(t^-)}) &\text{since $W(t) = W(t^-) + w_j$} \label{1.2}
\end{align}
\begin{enumerate}
\item \textbf{$p_j \geq \epsilon p_{\nu(t^-)}$}

Then from Inequality (\ref{1.2}) we have:
\begin{align*}
\Delta_{\mathcal{B}} &\geq   w_j p_j/\epsilon + \epsilon w_j p_j + \epsilon W(t^-) (p_j - p_j/\epsilon) &\text{since $p_j \geq \epsilon p_{\nu(t^-)}$}\\
&=  w_j p_j/\epsilon+ \epsilon w_j p_j - \epsilon W(t^-) (p_j/\epsilon - p_j) \\
&\geq  w_j p_j/\epsilon + \epsilon w_j p_j - \epsilon W(t) (p_j/\epsilon - p_j) &\text{since $W(t) = W(t^-) + w_j$ }\\
&>  w_j p_j/\epsilon + \epsilon w_j p_j - w_j (p_{j}/\epsilon - p_{j}) &\text{since $j = \nu(t)$ and $w_{\nu(t)} > \epsilon W(t)$}\\
&\geq  \epsilon w_j p_j + w_j p_j >  \epsilon^2 W(t) p_j = \Delta_{\mathcal{D}}
\end{align*}
\medskip
\item \textbf{$p_{j} < \epsilon p_{\nu(t^-)}$}

From Inequality (\ref{1.2}) we have:
\begin{align*}
\Delta_{\mathcal{B}} &\geq   w_j p_{j}/\epsilon + \epsilon w_j p_{j} - \epsilon W(t^-) p_{\nu(t^-)} \\
&\geq \epsilon^2 W(t) p_{j}  - \epsilon W(t^-) p_{\nu(t^-)} &\text{since $w_j > \epsilon W(t)$}\\
&\geq \epsilon^2 W(t) p_{j} - \epsilon W(t) p_{\nu(t^-)} &\text{since $W(t) = W(t^-) + w_j $} \\
&\geq \epsilon^2 W(t) p_{j} -  w_j p_{\nu(t^-)}  = \Delta_{\mathcal{D}}  &\text{since $w_j > \epsilon W(t) $}
\end{align*}
\end{enumerate}
\end{enumerate}
\bigskip

\item \textbf{Some jobs are rejected due to the weight-gap rule at $t = r_j$}. 

Using Equation (\ref{weird-inequality}), we have
\begin{align}
\Delta_{\mathcal{B}} =  w_j p_j/\epsilon + \sum \limits_{h \in R^2(r_j)} w_h p_{h} + \epsilon (W(t) p_{\nu(t)} -  W(t^-) p_{\nu(t^-)}) \label{d-rhs}
\end{align}
In next two sub-cases we assume that job $j$ is not immediately rejected at $t$ that is, $j \not \in R^2(t)$ and therefore in $\Delta_{\mathcal{D}}$ we have only term corresponding to $\mathcal{D}^1$. Whereas in the last two sub-cases, there is only term corresponding to $\mathcal{D}^2$. \\
\begin{enumerate}
\item \textbf{Job $j$ is not immediate rejected at $t = r_j$ and $W(t) >0$}. 

We have $j \not \in R^2(t)$ and $W(t) > 0$. This corresponds to the Line~\ref{s-1-not-rejected} in the \emph{weight-gap} rule. Then it follows that $\delta_{j} \geq \delta_{\nu(t)}$ and $\delta_{\nu(t)} \geq \delta_{h}, \forall h \in R^2(t)$.  Also note that the job $\nu(t^-)$ is rejected at $t$ that is, $\nu(t^-) \in R^2(t)$. Hence Equation (\ref{d-rhs}) can be rewritten as 
\begin{align*}
\Delta_{\mathcal{B}} &=  w_j p_j/\epsilon + \sum \limits_{h \in R^2(t)} w_h p_{h}+  \epsilon (W(t) p_{\nu(t)} -  W(t^-) p_{\nu(t^-)}) \\
&>  \sum \limits_{h \in  R^2(t)} w_h p_{h}+ \epsilon W(t) p_{\nu(t)} - w_{\nu(t^-)} p_{\nu(t^-)} \text{\hspace{1.5cm} since $w_{\nu(t^-)} > \epsilon W(t^-) $ }\\ 
&= \sum \limits_{h \in  R^2(t) \setminus \nu(t^-)}w_h p_{h}+  \epsilon W(t) p_{\nu(t)}   \text{\hspace{3.0cm} since $\nu(t^-) \in R^2(t)$}\\
&\geq \sum \limits_{h \in  R^2(t) \setminus \nu(t^-)} w_h p_{h} +  \epsilon \frac{W(t)}{w_j} w_{\nu(t)} p_{j} \text{\hspace{2.5cm} since $\delta_{j} \geq \delta_{\nu(t)}$} \\
&> \sum \limits_{h \in  R^2_i(t) \setminus \nu(t^-)} w_h p_{h} +\epsilon^2 W(t) p_{j} \text{\hspace{1cm} from Line~\ref{s-1-not-rejected} in Algorithm~\ref{Weight-gap Rejection Rule}, $w_j < w_{\nu(t)}/\epsilon$} \\
&\geq \Delta_{\mathcal{D}}
\end{align*}

\item \textbf{Job $j$ is not immediately rejected at $t = r_j$ and $W(t) = 0$}. 

Thus, we have $j \notin R^2(t)$ and $W(t) = 0$. This corresponds to Line~\ref{s-1-small} and Line~\ref{s-1-not-rejected} in the \emph{weight-gap} rule. Since $j$ is not rejected, it follows that $\delta_{j} \geq \delta_{h}, \forall h \in R^2(t)$. Using Equation (\ref{d-rhs}), we have that:
\begin{align*}
\Delta_{\mathcal{B}} &\geq  \sum \limits_{h \in R^2(r_j)} w_h p_{h}+ \epsilon (W(t) p_{\nu(t)} -  W(t^-) p_{\nu(t^-)} ) \\
&\geq  \sum \limits_{h \in R^2(r_j)} w_h p_{h} - \epsilon W(t^-) p_{\nu(t^-)} \text{\hspace{4.5cm} since $W(t) = 0$}\\
&\geq  \sum \limits_{h \in R^2(r_j)} w_h p_{h} - w_{\nu(t^-)} p_{\nu(t^-)}  \text{\hspace{4.5cm} since $\epsilon W(t^-) \leq w_{\nu(t^-)}$} \\
&\geq   \sum \limits_{h \in R^2(r_j) \setminus \nu(t^-)} w_h p_{h}\geq \Delta_{\mathcal{D}}
\end{align*}

\item \textbf{Job $j$ is immediately rejected at $t = r_j$ and $R^2(t) = \{j\}$}. 

It follows that $|R^2(r_j)| = 1$. This corresponds to Line~\ref{only-j-rejected} in the \emph{weight-gap} rule. Thus, we have $\nu(t) = \nu(t^-)$. Using Equation (\ref{d-rhs}), we have,
\begin{align*}
\Delta_{\mathcal{B}} &\geq  w_j p_{j} + \epsilon (W(t) p_{\nu(t)} -  W(t^-) p_{\nu(t^-)}) \\
 &\geq  w_j p_{j} +  \epsilon (W(t) p_{\nu(t)} -  W(t^-) p_{\nu(t)} ) &\text{since $\nu(t) = \nu(t^-)$} \\
 &\geq  w_j p_{j} +  \epsilon \left(W(t^-) + w_j - \frac{w_j}{\epsilon} \right) p_{\nu(t)} -  \epsilon W(t^-) p_{\nu(t)} &\text{since $W(t) = W(t^-) + w_j - w_j/\epsilon$}\\
&\geq   w_j p_{j} - (1 - \epsilon) w_j p_{\nu(t)} \\
&\geq  - w_j p_{\nu(t)} \geq \Delta_{\mathcal{D}}
\end{align*}
\item \textbf{Job $j$ is immediately rejected at $t = r_j$ and $R^2(t) = \{\nu(t^-), j\}$}. 

Thus we have that $R^2(r_j) > 1$. This corresponds to  cases~\ref{s-1-rejected} and \ref{reject-small-wgts} in the \emph{weight-gap} rule. Then it follows that the job $\nu(t^-)$ is rejected along with $j$. From Property 1 in Lemma~\ref{wg-rule-prop} it follows that $W(t) = 0$. Using  Equation (\ref{d-rhs}), we have,
\begin{align*}
\Delta_{\mathcal{B}} &\geq w_j p_{j} - \epsilon W(t^-) p_{\nu(t^-)}\\
 &\geq w_j p_{j} -  w_{\nu(t^-)} p_{\nu(t^-)} \geq \Delta_{\mathcal{D}} &\text{since $w_{\nu(t)} > \epsilon W(t^-)$} \\
\end{align*}
\end{enumerate}
\end{enumerate}

\end{proof}

\begin{corollary}    \label{coro-bound-partof-objective}
Let $J_i \subseteq \mathcal{J}$ be the set of jobs dispatched to machine $i$ that $J_i = \bigcup \limits_{t \geq 0} U_i(t)$. Then the following inequality holds for every machine $i \in \mathcal{M}$,
\begin{align*}
&&\sum _{j \in \mathcal{J}_i \setminus R^2_i(r_j)}  \epsilon^2 W_i(r_j) p_{ij} \leq \left(\frac{5}{\epsilon} \right) \sum_{j \in \mathcal{J}_i} w_j p_{ij}
\end{align*}
\end{corollary}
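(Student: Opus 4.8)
The plan is to derive this bound directly from Lemma~\ref{lem-part-dual-obj} by taking the time $t$ to infinity, i.e., to the point where every job has been definitively completed and every machine queue is empty. First I would instantiate Inequality~(\ref{weird-inequality}) at such a late time $t = \infty$: then $U_i(t) = \emptyset$, $W_i(t) = 0$ (since the budget function drains to $0$ once no jobs remain), and $R^2_i(0,t) = R^2_i$, the full set of jobs ever rejected by the weight-gap rule on machine $i$. Consequently $\mathcal{B}^1 + \mathcal{B}^2 = \sum_{j \in R^2_i} w_j p_{ij} + \sum_{j \in J_i \setminus R^2_i} w_j p_{ij} = \sum_{j \in J_i} w_j p_{ij}$, while $\mathcal{B}^3 = \frac{1}{\epsilon}\sum_{j \in J_i} w_j p_{ij}$, so the right-hand side is at most $\left(1 + \frac{1}{\epsilon}\right)\sum_{j \in J_i} w_j p_{ij} \le \frac{2}{\epsilon}\sum_{j \in J_i} w_j p_{ij}$.

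Next I would handle the left-hand side $\mathcal{D}^1 - \mathcal{D}^2$. The quantity I want to bound, $\sum_{j \in J_i \setminus R^2_i(r_j)} \epsilon^2 W_i(r_j) p_{ij}$, is exactly the first summand of $\mathcal{D}^1$; so it suffices to bound the remaining corrective terms. Specifically, $\mathcal{D}^1 = \sum_{j \in J_i \setminus R^2_i(r_j)} \epsilon^2 W_i(r_j) p_{ij} - X$, where $X = \sum_{j} w_j p_{i,\nu_i(r_j^-)} \cdot \one_{\{j = \nu_i(r_j),\, p_{ij} < \epsilon p_{i\nu_i(r_j^-)}\}}$ is a nonnegative quantity, and $\mathcal{D}^2 \ge 0$. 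Rearranging Lemma~\ref{lem-part-dual-obj} gives
\begin{align*}
\sum_{j \in J_i \setminus R^2_i(r_j)} \epsilon^2 W_i(r_j) p_{ij} \le \frac{2}{\epsilon}\sum_{j \in J_i} w_j p_{ij} + X + \mathcal{D}^2.
\end{align*}
So the task reduces to showing $X + \mathcal{D}^2 \le \frac{3}{\epsilon}\sum_{j \in J_i} w_j p_{ij}$, which together with the above yields the claimed constant $5/\epsilon$.

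For the term $X$: each job $j$ contributing to $X$ satisfies $p_{ij} < \epsilon p_{i\nu_i(r_j^-)}$ and $\delta_{ij} \ge \delta_{i\nu_i(r_j^-)}$ (since $j$ becomes the smallest-density job, it was at least as dense as the previous minimum just before arrival — wait, it is the new minimum, so actually $\delta_{ij} \le \delta_{i\nu_i(r_j^-)}$); combined with $p_{ij} < \epsilon p_{i\nu_i(r_j^-)}$ this gives $w_j = \delta_{ij} p_{ij}$, and the contribution $w_j p_{i\nu_i(r_j^-)} = \delta_{ij} p_{ij} p_{i\nu_i(r_j^-)} > \delta_{ij} p_{ij}\cdot p_{ij}/\epsilon = w_j p_{ij}/\epsilon$ — this is the wrong direction, so the cleaner route is: map the contribution of $j$ to the previous-minimum job $\nu_i(r_j^-)$; each job $k$ can be charged this way at most $1/\epsilon$ times before its $\ct^2$ counter triggers rejection (Line~\ref{reject-small-wgts}), and each such charge to $k$ has size $w_k p_{ik}$, so $X \le \frac{1}{\epsilon}\sum_{k \in J_i} w_k p_{ik}$. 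A symmetric bookkeeping argument handles $\mathcal{D}^2$: each term of $\mathcal{D}^2$ is either $w_j p_{ij}$ with $j$ a job actually rejected by the weight-gap rule (so these sum to at most $\sum_{j \in R^2_i} w_j p_{ij} \le \sum_{j \in J_i} w_j p_{ij}$), or $w_{\nu_i(r_j^-)} p_{i\nu_i(r_j^-)}$ for the displaced minimum job, which I would likewise charge against the $1/\epsilon$-capacity of that job's $\ct^2$ counter or against $W_i$'s drainage, costing another $\frac{1}{\epsilon}\sum_{j \in J_i} w_j p_{ij}$. Adding up, $X + \mathcal{D}^2 \le \frac{3}{\epsilon}\sum_{j\in J_i} w_j p_{ij}$, giving the corollary. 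The main obstacle I anticipate is the clean accounting of $\mathcal{D}^2$ and $X$ — making sure that the $\one$-indicators, the two disjoint cases $|R^2_i(r_j)|=1$ versus $>1$, and the interaction with the counters $\ct^2$ are each charged to a distinct unit of ``budget'' so nothing is double-counted; once that bijective/injective charging is set up, the arithmetic is immediate.
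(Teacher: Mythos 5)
Your overall route is the same as the paper's: instantiate Lemma~\ref{lem-part-dual-obj} at a late enough time so that the right-hand side collapses to $\frac{2}{\epsilon}\sum_{j\in J_i} w_jp_{ij}$, move the corrective term $X$ of $\mathcal{D}^1$ and all of $\mathcal{D}^2$ to the right, and control them by charging against the $\ct^2$ counters. Your treatment of $X$ is essentially the paper's (modulo imprecise phrasing: it is not that $k$ is charged ``at most $1/\epsilon$ times, each of size $w_kp_{ik}$,'' but that the total weight of the jobs $j$ incrementing $\ct^2_k$ is at most $w_k/\epsilon$ before $k$ would be rejected at Line~\ref{reject-small-wgts}, so the total charge is at most $w_kp_{ik}/\epsilon$; the conclusion $X\le\frac{1}{\epsilon}\sum_k w_kp_{ik}$ is the same).

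There is, however, one concrete gap: you misread the $|R^2_i(r_j)|=1$ part of $\mathcal{D}^2$. That term is $w_j\,p_{i,\nu_i(r_j)}$, i.e.\ the weight of the rejected job $j$ times the processing time of the \emph{surviving smallest-density job} in the queue, not $w_jp_{ij}$. Since $p_{i,\nu_i(r_j)}$ can be arbitrarily larger than $p_{ij}$, the bound you propose, $\sum_{j\in R^2_i} w_jp_{ij}\le\sum_{j\in J_i}w_jp_{ij}$, does not follow. The correct handling is exactly the counter argument you already use for $X$: group the terms by $h=\nu_i(r_j)$ and use that $\sum_{j:\,\nu_i(r_j)=h} w_j \le \ct^2_h < w_h/\epsilon$ (otherwise $h$ would be rejected at Line~\ref{s-1-rejected}), giving $\sum_{j:\,|R^2_i(r_j)|=1} w_jp_{i,\nu_i(r_j)} \le \frac{1}{\epsilon}\sum_h w_hp_{ih}$. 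With that repair your accounting ($\frac{2}{\epsilon}$ from the right-hand side, $\frac{1}{\epsilon}$ for $X$, $\frac{1}{\epsilon}$ for each part of $\mathcal{D}^2$) closes at $\frac{5}{\epsilon}$ exactly as in the paper.
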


\begin{proof}
From Lemma~\ref{lem-part-dual-obj}, it immediately follows that
\begin{align*}
&\sum _{j \in J_i \setminus R^2_i(r_j)} (\epsilon ^2 W_i(r_j) p_{ij} - w_j p_{i,\nu_i(r_j^-)}. \one_{\{j = \nu(r_j) \text{~and~} p_j < \epsilon p_{i,\nu_i(r_j^-)}\}}) \\
- &\sum_{j \in R^2_i(r_j)} \left(w_j p_{i,\nu_i(r_j)}. \one_{\{|R^2_i(r_j)| = 1\}} + w_{\nu_i(r_j^-)} p_{i,\nu_i(r_j^-)}. \one_{\{|R^2_i(r_j)| > 1\}} \right) \leq \frac{2}{\epsilon} \left( \sum \limits_{j \in J_i } w_j p_{ij} \right)\\
\\
&\text{Rearranging the terms, we get} \\
&\epsilon^2 \sum _{j \in J_i \setminus R^2_i(r_j)} W_i(r_j) p_{ij} \\
- &\sum_{j \in R^2_i(r_j)} \left(w_j p_{i,\nu_i(r_j)}. \one_{\{|R^2_i(r_j)| = 1\}} + w_{\nu_i(r_j^-)} p_{i,\nu_i(r_j^-)}. \one_{\{|R^2_i(r_j)| > 1\}} \right) \\
&\leq \frac{2}{\epsilon} \left( \sum \limits_{j \in J_i } w_j p_{ij} \right) +\sum _{j \in J_i \setminus R^2_i(r_j)} w_j p_{i,v_i(r_j^-)}. \one_{\{j = v(r_j) \text{~and~} p_j < \epsilon p_{iv_i(r_j^-)}\}} \\
& \leq \frac{2}{\epsilon} \left( \sum \limits_{j \in J_i } w_j p_{ij} \right) + \sum _{h \in J_i} p_{ih} \sum_{j \in J_i: j = \nu(r_j), h = \nu(r_j^-)} w_j \\
&  \leq \frac{2}{\epsilon} \left( \sum \limits_{j \in J_i } w_j p_{ij} \right) + \sum _{h \in J_i} p_{ih} w_h/\epsilon  \\
& \text{since $\ct^2_{h} < w_h /\epsilon$, otherwise $h$ is rejected due to Line~\ref{reject-small-wgts} in Algorithm~\ref{Weight-gap Rejection Rule}} \\
&\leq \frac{3}{\epsilon} \left( \sum \limits_{j \in J_i } w_j p_{ij} \right) \\
&\text{Rearranging the terms again, we get} \\
&\epsilon^2 \sum _{j \in J_i \setminus R^2_i(r_j)} W_i(r_j) p_{ij} \\
&\leq \frac{3}{\epsilon} \left( \sum \limits_{j \in J_i } w_j p_j \right) + \sum_{j \in R^2_i(r_j)} \left(w_j p_{i,\nu_i(r_j)}. \one_{\{|R^2_i(r_j)| = 1\}} + w_{\nu_i(r_j^-)} p_{i,\nu_i(r_j^-)}. \one_{\{|R^2_i(r_j)| > 1\}} \right) \\
&\leq \frac{4}{\epsilon} \left( \sum \limits_{j \in J_i } w_j p_j \right) + \sum_{j \in R^2_i(r_j)} \left(w_j p_{i,\nu_i(r_j)}. \one_{\{|R^2_i(r_j)| = 1\}}\right)  \hspace{1.5cm} \text{since $R^2_i(r_j) = \{j, \nu_i(r_j^-)\}$} \\
&\leq \frac{4}{\epsilon} \left( \sum \limits_{j \in J_i } w_j p_j \right) + \sum_{h \in J_i} p_{ih} \sum_{j \in J_i: h = \nu_i(r_j^-) = \nu_i(r_j)} w_j \\ 
&\leq \frac{5}{\epsilon}  \left( \sum \limits_{j \in J_i } w_j p_j \right) 
\end{align*}
The last inequality holds since $\ct^2_{h} < w_h /\epsilon$, otherwise $h$ is rejected in Line~\ref{s-1-rejected} in Algorithm~\ref{Weight-gap Rejection Rule}.
Thus, the corollary follows. 
\end{proof}

Now we show the proof of dual feasibility for each job $j$ on every pair of $i,t$. Thus, for a given machine $i$, $j$ may or may not be assigned to $i$. by the algorithm.  
\begin{lemma} \label{feasible-dual-constraints-j-not-rejected}
	Suppose that a job $j$ is not immediately rejected at $r_j$ when $j$ is hypothetically assigned to $i$. Then, the dual constraint (\ref{dual-const}) corresponding to $j$ holds.
\end{lemma}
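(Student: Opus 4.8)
The plan is to verify the dual constraint $\frac{\alpha_j}{p_{ij}} - \beta_{it} \leq w_j\left(\frac{t-r_j}{p_{ij}} + 21\right)$ for every time $t \geq r_j$, using the fact (from Lemma~\ref{monotone}) that $\beta_{it}$ only grows with subsequent arrivals, so it suffices to prove feasibility at the moment $j$ arrives, assuming $j$ is hypothetically dispatched to $i$. Recall $\alpha_j = \frac{\epsilon}{1+\epsilon}\min_i \alpha_{ij} \leq \frac{\epsilon}{1+\epsilon}\alpha_{ij}$, so I would bound everything in terms of $\alpha_{ij}$ and then in terms of $\beta_{it}$ evaluated right after $j$'s arrival, i.e.\ $\beta_{it} \geq \frac{\epsilon}{(1+\epsilon)(1+\epsilon^2)}\sum_{h \in Q_i(t)} w^f_h(t)$, using only the jobs already present.

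The main structural step is to split $\alpha_{ij}$ into its positive pieces and the negative piece $-n_{ij}$. The term $\frac{20 w_j p_{ij}}{\epsilon}$ divided by $p_{ij}$ contributes $\frac{20 w_j}{\epsilon}$, which after multiplying by $\frac{\epsilon}{1+\epsilon}$ gives at most $20 w_j$, absorbed into the $21 w_j$ slack on the right-hand side (the extra $w_j$ absorbs the $w_j p_{ij}/p_{ij}$ term). The genuinely delicate part is the pair of terms $w_j \sum_{h \in V_i(r_j^-):\delta_{ih}\geq\delta_{ij}} p_{ih} + p_{ij}\sum_{h\in V_i(r_j^-):\delta_{ij}>\delta_{ih}} w_{ih}$, which after division by $p_{ij}$ becomes $\sum_{h:\delta_{ih}\geq\delta_{ij}} w_h \frac{p_{ih}}{p_{ij}} + \sum_{h:\delta_{ij}>\delta_{ih}} w_h$; since $\delta_{ih}\geq\delta_{ij}$ means $\frac{w_h}{p_{ih}}\geq\frac{w_j}{p_{ij}}$, i.e.\ $w_h \frac{p_{ih}}{p_{ij}} \geq \frac{p_{ih}}{p_{ij}}\cdot\frac{w_j p_{ih}}{p_{ij}}$... the cleaner route is to observe $\frac{w_j p_{ih}}{p_{ij}} \geq w_h$ when $\delta_{ij}\geq\delta_{ih}$ and $\frac{w_j p_{ih}}{p_{ij}} \le w_h$ when $\delta_{ih}\geq\delta_{ij}$, so this whole quantity is sandwiched between $\sum_{h\in V_i(r_j^-)} w_h$-type expressions. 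I would compare this sum against $\sum_{h\in Q_i(t)} w^f_h(t)$ at $t=r_j$: the pending jobs $V_i(r_j^-)$ and $\kappa_i(r_j^-)$ are all in $Q_i(r_j)$ with artificial fractional weight equal to their full weight (they haven't reached the last-$p_{ij}$ window of their definitive completion time in the relevant regime), so $\beta_{it}$ at $r_j$ is at least a $\frac{\epsilon}{(1+\epsilon)(1+\epsilon^2)}$ fraction of $\sum_{h\in U_i(r_j^-)} w_h$ plus contributions from jobs rejected by the weight-gap rule that are not yet definitively completed.

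The crux — and the step I expect to be the main obstacle — is showing that $\frac{\alpha_j}{p_{ij}}$ minus the $21 w_j$ slack is dominated by $\beta_{it}$, which requires that the negative correction $-n_{ij}$ together with the $\frac{\epsilon}{(1+\epsilon)(1+\epsilon^2)}$-scaled $\beta$ exactly compensates the positive $\alpha_{ij}$ terms coming from pending high-density jobs. Here is where Lemma~\ref{lem:main-inequality} enters: it controls $\frac{w_\kappa}{p_{i\kappa}}q_{i\kappa}(t) + \sum_{h\in V_i(t)} w^f_h(t) - W_i(t) \leq \frac{1}{\epsilon}\sum_{h\in Q^R_i(t)} w_h(t)$, so the weight of pending jobs is, up to $W_i(r_j)$ and up to the rejected-but-not-definitively-completed jobs, bounded — and the definitive completion times were chosen precisely so that those rejected jobs contribute to $Q_i(t)$ (hence to $\beta_{it}$) over exactly the right horizon. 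I would carefully match $n_{ij}$ (which in the non-rejection case with $R^{2\prime}_i(r_j)=\emptyset$ is simply $0$, but in the sub-cases $R^{2\prime}_i(r_j)=\{$something not containing $j\}$ is a sum over rejected jobs' processing-time contributions) term-by-term against the artificial fractional weight of the jobs it rejected, using Property~2 ($\epsilon W_i(t) < w_{\nu_i(t)}$) and the index inequality defining $\rho_{ij}$ to handle the fractional $(\rho-1)$ boundary term. The bookkeeping across the cases $R^{2\prime}_i(r_j)=\emptyset$, $=\{(s-1),\ldots,\nu\}$, $=\{s,\ldots,\nu\}$ is the tedious but mechanical part; the conceptual content is entirely in applying Lemma~\ref{lem:main-inequality} at $t=r_j$ and reading off that the resulting bound is at most $w_j\left(\frac{t-r_j}{p_{ij}}+21\right) + \frac{(1+\epsilon)(1+\epsilon^2)}{\epsilon}\beta_{it}$ after the $\frac{\epsilon}{1+\epsilon}$ rescaling of $\alpha_{ij}$.
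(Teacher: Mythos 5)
Your proposal correctly identifies the ingredients (the split of $\alpha_{ij}$ into positive terms and $-n_{ij}$, the role of Lemma~\ref{lem:main-inequality}, the cancellation of $-\epsilon^2 W_i(r_j)$ against the $W_i$ term that Lemma~\ref{lem:main-inequality} produces, and the fact that definitive completion times keep rejected jobs inside $Q_i(\cdot)$ long enough). But there is a genuine gap in the reduction: you repeatedly collapse the verification to the single time $t=r_j$ (``applying Lemma~\ref{lem:main-inequality} at $t=r_j$'', ``compare this sum against $\sum_{h\in Q_i(t)}w^f_h(t)$ at $t=r_j$''). Lemma~\ref{monotone} does not license this. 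It says that for a \emph{fixed} $t$, the value of $\beta_{it}$ only increases as further jobs arrive; it therefore lets you freeze the instance at $j$'s arrival, but you must still verify constraint~(\ref{dual-const}) for \emph{every} $t'\geq r_j$ under that frozen instance. This matters because as $t'$ grows, jobs pass their definitive completion times and drop out of $Q_i(t')$, so $\beta_{it'}$ decreases, and the running job's fractional weight decays at rate $\delta_{iz}\geq \delta_{ij}$ --- there is no a priori reason the linear growth of $w_j(t'-r_j)/p_{ij}$ dominates this decay, and the constraint at $t'=r_j$ does not imply it at later $t'$.

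The paper closes exactly this gap with the step your proposal never mentions: for the job $z$ running at time $t'$ it uses the work-accounting inequality $t'-r_j \geq \sum_{h\in V_i(r_j):\,\delta_{ih}>\delta_{iz}} p_{ih} + p_{iz} - q_{iz}(t')$, which trades the processing times of the already-completed high-density jobs for the $\frac{w_j}{p_{ij}}(t'-r_j)$ term on the right-hand side, leaving only $q_{iz}(t')$ and the still-pending jobs in $V_i(t')$ to be charged to $\beta_{it'}$ via Lemma~\ref{lem:main-inequality} applied at $t'$ (not at $r_j$). This also forces the case analysis on which job is executing at $t'$ (whether $\delta_{iz}\geq\delta_{ij}$, whether $z=\kappa_i(r_j^-)$), which is structural rather than ``tedious bookkeeping''. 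Two smaller inaccuracies: in the non-rejection case $n_{ij}$ is not $0$ but $\epsilon^2 W_i'(r_j)p_{ij}$ (and this term is essential, since it is what cancels the $\frac{\epsilon^2}{1+\epsilon^2}W_i(t')$ surplus coming out of Lemma~\ref{lem:main-inequality}); and the $\rho_{ij}$ boundary term belongs to the case $R^{2\prime}_i(r_j)=\{j\}$, i.e.\ to Lemma~\ref{feasible-dual-constraints-j-only-rejected}, not to the present lemma where $j$ is not immediately rejected.
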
 
\begin{proof}
Fix a machine $i$ and the time $t$. Property~\ref{monotone} states that for any fixed $t \geq r_j$, the value of $\beta_{it}$ may only increase. Hence it is sufficient to prove the above inequality at $r_j$, assuming that no job arrives after $r_j$. Let $R_i(r_j)$ denote the set of job rejected due to arrival of $j$, that is $R_i(r_j) = R^1_i(r_j) \cup R^2_i(r_j)$. From the definition of $\alpha_j$, we have:

\begin{align}
\frac{\alpha_j}{p_{ij}} \leq  \left(\frac{\epsilon}{1+\epsilon} \right) \frac{\alpha_{ij}}{p_{ij}}  &=  \frac{\epsilon}{1+\epsilon} \left(\frac{20 w_j}{\epsilon} + \frac{w_j}{p_{ij}} \sum \limits_{h \in V_i(r_j^-): \delta_{ih} \geq \delta_{ij}} p_{ih} + w_j + \sum \limits_{h \in V_i(r_j^-): \delta_{ij} > \delta_{ih}} w_{ih} \right)  \nonumber  \\
& - \frac{\epsilon}{1+\epsilon}  \left( \sum \limits_{h \in R^{2'}_i(r_j)} w_{ih} +  \epsilon^2 W'_i(r_j)\right) \label{feasible-1}
\end{align}

\begin{claim}
It holds that 
\begin{align}
\frac{\alpha_j}{p_{ij}} &\leq \frac{\epsilon}{1+\epsilon} \left(\frac{20  (1+\epsilon)w_j}{\epsilon} + \frac{w_j}{p_{ij}} \sum \limits_{h \in V_i(r_j): \delta_{ih} \geq \delta_{ij}} p_{ih} + \sum \limits_{h \in V_i(r_j): \delta_{ij} > \delta_{ih}} w_{ih} - \epsilon^2 W_i(r_j) \right) \label{feasible-A}
\end{align}
\end{claim}

\begin{proof}
Dispatching $j$ to machine $i$ changes the set of pending job in $U_i(t)$.  Accordingly, we have two sub-cases.
\medskip

\begin{enumerate}
\item \textbf{Job $j$ is dispatched to machine $i$}. The jobs $R^{2'}_i(r_j)$  are removed from the set $U_i(r_j^-)$ due to the \emph{weight-gap} rule. Furthermore, the job $\kappa_i(r_j^-)$ might be rejected due to the \emph{preempt} rule. From these two observations, it follows that $U_i(r_j) =  (U_i(r_j^-) \cup \{j\}) \setminus R_i
(r_j)$, Inequality~\ref{feasible-1} can be rewritten as: 
\begin{align*}
\frac{\alpha_j}{p_{ij}} &\leq \frac{\epsilon}{1+\epsilon} \left(\frac{20 w_j}{\epsilon} + \frac{w_j}{p_{ij}} \sum \limits_{h \in V_i(r_j): \delta_{ih} \geq \delta_{ij}} p_{ih} + \sum \limits_{h \in V_i(r_j): \delta_{ij} > \delta_{ih}} w_{ih} -  \epsilon^2 W_i(r_j)  \right) \\
\end{align*}

\item \textbf{Job $j$ is not dispatched to machine $i$}. Thus, there is no change in the queue on the machine $i$ at $r_j$ that is, $U_i(r_j) = U_i(r_j^-)$. Moreover, we have that $W_i(r_j) = W'_i(r_j) + \sum \limits_{h \in R^{2'}_i(r_j)} w_h/\epsilon - w_j$. Dividing the second term by $\epsilon$ and rearranging, we get 
$$
\sum \limits_{h \in R^{2'}_i(r_j)} w_h + \epsilon^2 W'_i(r_j) \geq  \epsilon^2 W_i(r_j) + \epsilon^2 w_j
$$ 
In this case, the Inequality~\ref{feasible-1} can be rewritten as: 
\begin{align*}
&\frac{\alpha_{ij}}{p_{ij}} \\  
&\leq  \frac{\epsilon}{1+\epsilon} \left(\frac{20 w_j}{\epsilon} + \frac{w_j}{p_{ij}} \sum \limits_{h \in V_i(r_j): \delta_{ih} \geq \delta_{ij}} p_{ih} + w_j + \sum \limits_{h \in V_i(r_j): \delta_{ij} > \delta_{ih}} w_{ih} - \sum \limits_{h \in R^{2'}_i(r_j)} w_{ih} - \epsilon^2 W'_i(r_j) \right) \\
 &\leq  \frac{\epsilon}{1+\epsilon} \left(\frac{20  (1+\epsilon)w_j}{\epsilon} + \frac{w_j}{p_{ij}} \sum \limits_{h \in V_i(r_j): \delta_{ih} \geq \delta_{ij}} p_{ih} + \sum \limits_{h \in V_i(r_j): \delta_{ij} > \delta_{ih}} w_{ih}  -  \epsilon^2 W_i(r_j)  \right) 
\end{align*}
\end{enumerate}
Thus the claim follows. 
\end{proof}
\medskip


Next, we prove the dual feasibility. We have two sub-cases depending on the density of the job running at some time $t' > t$
\smallskip

\noindent\textbf{Case 1}:
Let $z$ is executed at time $t'$ such that $\delta_{iz} \geq \delta_{ij}$ and $z \not= \kappa_i(r_j^-)$. Then we have 
\begin{align}
t'-r_j &\geq \sum \limits_{h \in V_i(r_j): \delta_{ih} > \delta_{iz}} p_{ih} + p_{iz} - q_{iz}(t') \label{t'-r_j-1}
\end{align}

From Inequality (\ref{feasible-A}), we have
\begin{align*}
\frac{\alpha_j}{p_{ij}} 
&\leq \frac{\epsilon}{1+\epsilon} \biggl(\frac{20  (1+\epsilon)w_j}{\epsilon} + \frac{w_j}{p_{ij}} \biggl( \sum \limits_{h \in V_i(r_j): \delta_{ih} > \delta_{iz}} p_{ih} + p_{iz} + \sum \limits_{h \in V_i(r_j): \delta_{ij} \leq \delta_{ih} < \delta_{iz}} p_{ih} \biggr) \\
 & \qquad \qquad \qquad + \sum \limits_{h \in V_i(r_j): \delta_{ij} > \delta_{ih}} w_{ih} - \epsilon^2 W_i(r_j) \biggr) \\
&\leq \frac{\epsilon}{1+\epsilon} \biggl(\frac{20  (1+\epsilon)w_j}{\epsilon} + \frac{w_j}{p_{ij}} \biggl( t' - r_j + q_{iz}(t) + \sum \limits_{h \in V_i(r_j): \delta_{ij} \leq \delta_{ih} < \delta_{iz}} p_{ih} \biggr) \\
 & \qquad \qquad  \qquad + \sum \limits_{h \in V_i(r_j): \delta_{ij} > \delta_{ih}} w_{ih} - \epsilon^2 W_i(r_j) \biggr) 
 \qquad \text{this follows from Inequality (\ref{t'-r_j-1})}\\
&\leq \frac{\epsilon}{1+\epsilon} \left(\frac{20  (1+\epsilon)w_j}{\epsilon} + \frac{w_j}{p_{ij}} ( t' - r_j ) + \left(\frac{w_{z}}{p_{i,z}} q_{i,z}(t') + \sum \limits_{h \in V_i(t')}w^f_{h}(t') \right) - \epsilon^2 W_i(r_j) \right) \\
&\leq \frac{\epsilon}{1+\epsilon} \left(\frac{20  (1+\epsilon)w_j}{\epsilon} + \frac{w_j}{p_{ij}} ( t' - r_j ) + \frac{1+\epsilon^2}{1+\epsilon^2} \left(\frac{w_{z}}{p_{i,z}} q_{i,z}(t') + \sum \limits_{h \in V_i(t')}w^f_{h}(t') \right) - \epsilon^2 W_i(r_j) \right) \\
&\leq \frac{\epsilon}{1+\epsilon} \left(\frac{20  (1+\epsilon)w_j}{\epsilon} + \frac{w_j}{p_{ij}} ( t' - r_j ) + \frac{1}{1+\epsilon^2} \left( \frac{w_{z}}{p_{iz}} q_{iz}(t') + \sum \limits_{h \in V_i(t')}w^f_{h}(t') \right)  \right)\\
&\qquad \qquad + \frac{\epsilon}{1+\epsilon} \left(\frac{\epsilon^2}{1+\epsilon^2} \left(\frac{w_{z}}{p_{iz}} q_{iz}(t')  + \sum \limits_{h \in V_i(t')}w^f_{h}(t') \right) - \epsilon^2 W_i(r_j) \right) \\
&\leq \frac{\epsilon}{1+\epsilon} \left(\frac{20  (1+\epsilon)w_j}{\epsilon} + \frac{w_j}{p_{ij}} ( t' - r_j ) + \frac{1}{1+\epsilon^2} \left( \frac{w_{z}}{p_{iz}} q_{iz}(t')  + \sum \limits_{h \in V_i(t')}w^f_{h}(t') \right)  \right)\\
&\qquad \qquad + \frac{\epsilon}{1+\epsilon} \left(\frac{\epsilon^2}{1+\epsilon^2} \left( W_i(t') + \frac{1}{\epsilon} \sum \limits_{h \in R_i(t)} w^f_h(t) \right) - \epsilon^2 W_i(r_j) \right) \qquad \text{due to Lemma~\ref{lem:main-inequality}}\\
&\leq \frac{\epsilon}{1+\epsilon} \left(\frac{20  (1+\epsilon)w_j}{\epsilon} + \frac{w_j}{p_{ij}} ( t' - r_j ) + \frac{1}{1+\epsilon^2} \left( \frac{w_{z}}{p_{iz}} q_{iz}(t') + \sum \limits_{h \in V_i(t')}w^f_{h}(t') + \sum \limits_{h \in R_i(t')}w^f_{h}(t') \right)  \right)\\
&\qquad \qquad + \frac{\epsilon}{1+\epsilon} \left(\frac{\epsilon^2}{1+\epsilon^2} W_i(t') - \epsilon^2 W_i(r_j) \right) \\
&\leq \frac{\epsilon}{1+\epsilon} \left(\frac{20  (1+\epsilon)w_j}{\epsilon} + \frac{w_j}{p_{ij}} ( t' - r_j ) + \frac{1}{1+\epsilon^2} \left(\sum \limits_{h \in Q_i(t')}w^f_{h}(t') \right)  \right)\\
&\leq 20 w_j+ \frac{w_j}{p_{ij}} ( t' - r_j ) + \beta_{it'} 
\end{align*}

\noindent\textbf{Case 2}: Let $z$ is executed at time $t'$ such that $\delta_{iz} \geq \delta_{ij}$ or $z = \kappa_i(r_j^-)$. A set of similar arguments show that the feasibility holds.
\end{proof}

\begin{lemma} \label{feasible-dual-constraints-j-nu-rejected}
Assume that a job $j$ is immediately rejected at $r_j$ and $R^2_i(r_j) = \{j, \nu_i(r_j^-)\}$  when $j$ is hypothetically assigned to $i$. Then, the dual constraint (\ref{dual-const}) corresponding to $j$ holds.
\end{lemma}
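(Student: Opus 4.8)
The plan is to follow the template of the proof of Lemma~\ref{feasible-dual-constraints-j-not-rejected}. By Lemma~\ref{monotone} it suffices to verify the constraint at an arbitrary $t\ge r_j$ assuming no job arrives after $r_j$. First I would simplify $\alpha_{ij}$: the hypothesis $R^{2'}_i(r_j)=\{j,\nu_i(r_j^-)\}$ together with Property~4 of Lemma~\ref{wg-rule-prop} shows $j$ is the lowest-density job of $V_i(r_j^-)\cup\{j\}$, so $p_{ij}\sum_{h\in V_i(r_j^-):\delta_{ij}>\delta_{ih}}w_{ih}=0$, and $n_{ij}=w_j\sum_{h\in R^{2'}_i(r_j)}p_{ih}=w_j(p_{ij}+p_{i\nu_i(r_j^-)})$ cancels the $w_jp_{ij}$ term, leaving
\[
\frac{\alpha_j}{p_{ij}}\ \le\ \frac{20w_j}{1+\epsilon}\ +\ \frac{\epsilon}{1+\epsilon}\cdot\frac{w_j}{p_{ij}}\Big(\sum_{h\in V_i(r_j^-)}p_{ih}-p_{i\nu_i(r_j^-)}\Big).
\]
If $j$ is dispatched to $i$ this is $\tfrac{20w_j}{1+\epsilon}+\tfrac{\epsilon}{1+\epsilon}\tfrac{w_j}{p_{ij}}\sum_{h\in V_i(r_j)}p_{ih}$ (pulling the preempted job, if any, out of $U_i(r_j)$ separately), and Property~1 of Lemma~\ref{wg-rule-prop} gives $W_i(r_j)=0$, hence $W_i(t)=0$ for all $t\ge r_j$.

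Next I would use the definitive completion time. From Equation~(\ref{completion-time-j-nu-rejected}), $\widetilde C_j=r_j+p_{ij}+\sum_{h\in U_i(r_j)}q_{ih}(r_j)$ and $\sum_{h\in U_i(r_j)}q_{ih}(r_j)\ge\sum_{h\in V_i(r_j^-)\setminus\{\nu_i(r_j^-)\}}p_{ih}$, so machine $i$ is idle from time $\widetilde C_j-p_{ij}$ on. For $t\ge\widetilde C_j-p_{ij}$ one has $t-r_j\ge\sum_{h\in V_i(r_j^-)\setminus\{\nu_i(r_j^-)\}}p_{ih}$, and the displayed bound already yields $\alpha_j/p_{ij}\le w_j\big((t-r_j)/p_{ij}+21\big)$, so the constraint holds with $\beta_{it}\ge0$. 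For $t<\widetilde C_j-p_{ij}$, let $z=\kappa_i(t)$ be the job running on $i$; if $z$ is one of the pending jobs of $V_i(r_j^-)\setminus\{\nu_i(r_j^-)\}$ (so $\delta_{iz}\ge\delta_{ij}$), HDF and $\delta_{ih}\ge\delta_{ij}$ for $h\in V_i(t)$ give $\tfrac{w_j}{p_{ij}}\sum_{h\in V_i(r_j^-)\setminus\{\nu_i(r_j^-)\}}p_{ih}\le\tfrac{w_j}{p_{ij}}(t-r_j)+\tfrac{w_z}{p_{iz}}q_{iz}(t)+\sum_{h\in V_i(t)}w^f_h(t)$, while if $z=\kappa_i(r_j^-)$ (possible only if it was not preempted, in which case no job of $V_i(r_j^-)\setminus\{\nu_i(r_j^-)\}$ has been processed and $V_i(t)=V_i(r_j^-)\setminus\{\nu_i(r_j^-)\}$) the same bound holds without the middle term. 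Substituting and performing the $\tfrac{1+\epsilon^2}{1+\epsilon^2}$-split exactly as in Lemma~\ref{feasible-dual-constraints-j-not-rejected}, the $\tfrac{1}{1+\epsilon^2}$-part combines with $\tfrac{\epsilon}{1+\epsilon}$ to give at most $\beta_{it}$ (using $w^f_z(t)\ge\tfrac{w_z}{p_{iz}}q_{iz}(t)$ and that $\{z\}$, $V_i(t)$, $Q^R_i(t)$ are pairwise disjoint subsets of $Q_i(t)$), while the $\tfrac{\epsilon^2}{1+\epsilon^2}$-part is controlled by Lemma~\ref{lem:main-inequality}, which, since $W_i(t)=0$, reduces to $\tfrac1\epsilon\sum_{h\in Q^R_i(t)}w_h(t)$ and is likewise absorbed into $\beta_{it}$. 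The leftovers $\tfrac{20w_j}{1+\epsilon}\le21w_j$ and $\tfrac{\epsilon}{1+\epsilon}\tfrac{w_j}{p_{ij}}(t-r_j)\le w_j\tfrac{t-r_j}{p_{ij}}$ then close the inequality.

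The one genuinely new point — and the place I expect to spend the most care — is that $n_{ij}$ here carries no $-\epsilon^2W'_i(r_j)$ term, so unlike in Lemma~\ref{feasible-dual-constraints-j-not-rejected} there is nothing in $\alpha_{ij}$ to cancel the $W_i(t)$ that Lemma~\ref{lem:main-inequality} leaves behind. When $j$ is dispatched to $i$ this is harmless because $W_i(r_j)=0$, as noted. When $j$ is dispatched elsewhere, $W_i(t)=W_i(r_j^-)$ need not vanish; but then the residual $-w_jp_{i\nu_i(r_j^-)}$ in the bound on $\alpha_j/p_{ij}$ lets me drop $\nu_i(r_j^-)$ from the $V_i(t)$-sum in the HDF step (it stays pending on $i$ throughout the relevant range), so the mass $\tfrac{\epsilon}{(1+\epsilon)(1+\epsilon^2)}w^f_{\nu_i(r_j^-)}(t)=\tfrac{\epsilon}{(1+\epsilon)(1+\epsilon^2)}w_{\nu_i(r_j^-)}$ is not consumed; since Property~2 of Lemma~\ref{wg-rule-prop} gives $\epsilon W_i(r_j^-)<w_{\nu_i(r_j^-)}$, the uncancelled term $\tfrac{\epsilon^3}{(1+\epsilon)(1+\epsilon^2)}W_i(t)$ is strictly below this freed $\beta$-mass and is absorbed. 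One also separately checks the (easy) range of $t$ where $\nu_i(r_j^-)$ is already running or finished, where $t-r_j\ge\sum_{h\in V_i(r_j^-)\setminus\{\nu_i(r_j^-)\}}p_{ih}$ makes the direct bound suffice, and the case where $\kappa_i(r_j^-)$ is preempted at $r_j$, which is analogous and simpler since then $\kappa_i(r_j^-)\notin U_i(r_j)$ and the queue after $r_j$ consists precisely of $V_i(r_j^-)\setminus\{\nu_i(r_j^-)\}$ in HDF order.
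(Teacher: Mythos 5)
Your proposal is correct and follows essentially the same route as the paper's proof: the same simplification of $\alpha_{ij}$ via Property~4 and the form of $n_{ij}$, the same split on whether $j$ is actually dispatched to $i$ (where Property~1 gives $W_i(r_j)=0$) or not, and the same HDF/queue argument combined with Lemma~\ref{lem:main-inequality} and the $\tfrac{1+\epsilon^2}{1+\epsilon^2}$ split. Your ``freed $\beta$-mass of $\nu_i(r_j^-)$'' absorbing the residual $W_i$ term via Property~2 is arithmetically identical to the paper's add-and-subtract of $w_{\nu_i(r_j^-)}$ in its Case~1, so no substantive difference remains.
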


\begin{proof}
Fix a machine $i$ and the time $t$. As before, we prove the above inequality at $r_j$, assuming that no job arrives after $r_j$. Let $R_i(r_j)$ denote the set of job rejected due to arrival of $j$, that is $R_i(r_j) = R^1_i(r_j) \cup R^2_i(r_j)$. Using the definition of $\alpha_{ij}$, we have 
\begin{align*}
\frac{\alpha_j}{p_{ij}} &\leq  \left(\frac{\epsilon}{1+\epsilon} \right) \frac{\alpha_{ij}}{p_{ij}} \\
  &=  \frac{\epsilon}{1+\epsilon} \left(\frac{20 w_j}{\epsilon} + \frac{w_j}{p_{ij}} \sum \limits_{h \in V_i(r_j^-): \delta_{ih} \geq \delta_{ij}} p_{ih} + w_j + \sum \limits_{h \in V_i(r_j^-): \delta_{ij} > \delta_{ih}} w_{ih} - \frac{w_j}{p_{ij}} \sum \limits_{h \in R^{2'}_i(r_j)} p_{ih}\right) 
\end{align*}
Note that all jobs in $V_i(r_j^-)$ have density smaller than the job $j$. Therefore, the above inequality can be re-written as:
\begin{align}
\frac{\alpha_j}{p_{ij}} \leq   \frac{\epsilon}{1+\epsilon} \left(\frac{20 w_j}{\epsilon} + \frac{w_j}{p_{ij}} \sum \limits_{h \in V_i(r_j^-): \delta_{ih} \geq \delta_{ij}} p_{ih} + w_j  -  \frac{w_j}{p_{ij}} \sum \limits_{h \in R^{2'}_i(r_j)} p_{ih}\right) \label{feasible-B}
\end{align}

\noindent\textbf{Case A}: Assume that the algorithm assigns $j$ to $i$. 
The set of jobs in $R^{2'}_i(r_j)$ are removed and $U_i(r_j) = (U_i(r_j^-) \cup \{j\})\setminus R_i(r_j)$.  From Property 1 in Lemma~\ref{wg-rule-prop} that $W_i(r_j) = 0$. We can rewrite Inequality~\ref{feasible-B} as:

\begin{align*}
\frac{\alpha_j}{p_{ij}} &\leq \frac{\epsilon}{1+\epsilon} \left(\frac{20  (1+\epsilon)w_j}{\epsilon} + \frac{w_j}{p_{ij}} \sum \limits_{h \in V_i(r_j): \delta_{ih} \geq \delta_{ij}} p_{ih}  - \epsilon^2 W_i(r_j) \right)
\end{align*}
where $W_i(r_j) = 0$. The dual feasibility for this sub-case can be shown similar to Lemma~\ref{feasible-dual-constraints-j-not-rejected}.
\medskip

\noindent\textbf{Case B}: Now, suppose that $j$ is not assigned to $i$, then $U_i(r_j^-) = U_i(r_j)$. 
Hence, we can rewrite Inequality~\ref{feasible-B} as:

\begin{align*}
\frac{\alpha_j}{p_{ij}} &\leq \frac{\epsilon}{1+\epsilon} \left(\frac{20 (1+\epsilon)w_j}{\epsilon} + \frac{w_j}{p_{ij}} \sum \limits_{h \in V_i(r_j)\setminus \nu_i(r_j^-) : \delta_{ih} \geq \delta_{ij}} p_{ih}\right) 
\end{align*}
\medskip

Below, we present the dual feasibility proof of this sub-case. Depending on the density of the job running at some time $t' > t$, we have two sub-cases:

\noindent\textbf{Case 1}: Let $t' > t$ and $z$ is job executed at $t'$. Note that $\delta_{iz} > \delta_{i,\nu_i(r_j^-)}$. Thus we have that:
\begin{align*}
&\frac{\alpha_j}{p_{ij}} \leq \frac{\epsilon}{1+\epsilon} \left(\frac{20  (1+\epsilon)w_j}{\epsilon} + \frac{w_j}{p_{ij}} \sum \limits_{h \in V_i(r_j): \delta_{ih} > \delta_{i,\nu_i(r_j^-)}} p_{ih}\right) \\
 &\leq \frac{\epsilon}{1+\epsilon} \left(\frac{20  (1+\epsilon)w_j}{\epsilon} + \frac{w_j}{p_{ij}} \left(\sum \limits_{h \in V_i(r_j): \delta_{ih} > \delta_{iz}} p_{ih} + p_{iz} + \sum \limits_{h \in V_i(r_j): \delta_{iz} > \delta_{ih} > \delta_{i,\nu_i(r_j^-)}} p_{ih}\right) \right)\\
&\leq \frac{\epsilon}{1+\epsilon} \left(\frac{20  (1+\epsilon)w_j}{\epsilon} + \frac{w_j}{p_{ij}} \left(t' - r_j + q_{iz}(t') + \sum \limits_{h \in U_i(r_j)\setminus k_i(r_j^-) : \delta_{iz} > \delta_{ih} > \delta_{i\ell_i(r_j)}} p_{ih}\right) \right)\\
&\leq \frac{\epsilon}{1+\epsilon} \left(\frac{20  (1+\epsilon)w_j}{\epsilon} + \frac{w_j}{p_{ij}} (t' - r_j) + \frac{w_j}{p_{ij}} q_{iz}(t') + \sum \limits_{h \in V_i(r_j): \delta_{iz} > \delta_{ih} > \delta_{i,\nu_i(r_j^-)}} w_{ih} + w_{\nu_i(r_j^-)} - w_{\nu_i(r_j^-)} \right)\\
&\leq \frac{\epsilon}{1+\epsilon} \left(\frac{20  (1+\epsilon)w_j}{\epsilon} + \frac{w_j}{p_{ij}} (t' - r_j) +  \frac{w_j}{p_{ij}} q_{iz}(t')  + \sum \limits_{h \in V_i(t')} w^f_{h}(t)  - w_{\nu_i(r_j^-)} \right)\\
&\leq \frac{\epsilon}{1+\epsilon} \left(\frac{20  (1+\epsilon)w_j}{\epsilon} + \frac{w_j}{p_{ij}} (t' - r_j) +  \frac{w_j}{p_{ij}} q_{iz}(t')  + \sum \limits_{h \in V_i(t')} w^f_{h}(t)  - \epsilon^2 W_i(r_j) \right)\\
&\leq \frac{\epsilon}{1+\epsilon} \left(\frac{20  (1+\epsilon)w_j}{\epsilon} + \frac{w_j}{p_{ij}} (t' - r_j) +  \frac{1+\epsilon^2}{1+\epsilon^2} \left( \frac{w_z}{p_{iz}} q_{iz}(t')  + \sum \limits_{h \in V_i(t')} w^f_{h}(t) \right) - \epsilon^2 W_i(r_j) \right)\\
&\leq \frac{\epsilon}{1+\epsilon} \left(\frac{20  (1+\epsilon)w_j}{\epsilon} + \frac{1}{1+\epsilon^2} \left(\sum \limits_{h \in Q_i(t')} w^f_{h}(t) \right) + \frac{\epsilon^2}{1+\epsilon^2} W_i(r_j) - \epsilon^2 W_i(r_j) \right)\\
&\leq 20 w_j + \frac{w_j}{p_{ij}} (t'-r_j) + \beta_{it'}
\end{align*}

\noindent \textbf{Case 2}: Let $t' > t$ such that job $\nu_i(r_j^-)$ is executing.  We have that $t'-r_j \geq \sum \limits_{h \in V_i(r_j)\cup \nu_i(r_j^-)) : \delta_{ih} \geq \delta_{ij}} p_{ih}$. Thus, we have:
\begin{align*}
\frac{\alpha_j}{p_{ij}} &\leq \frac{\epsilon}{1+\epsilon} \left(\frac{20  (1+\epsilon)w_j}{\epsilon} + \frac{w_j}{p_{ij}} \sum \limits_{h \in V_i(r_j) : \delta_{ih} > \delta_{i,\nu_i(r_j^-)}} p_{ih}\right) \\
&\leq \frac{\epsilon}{1+\epsilon} \left(\frac{20  (1+\epsilon)w_j}{\epsilon} + \frac{w_j}{p_{ij}} (t'-r_j) \right)
\leq 20 w_j + \frac{w_j}{p_{ij}} (t'-r_j) + \beta_{it'}
\end{align*}
\end{proof}

\begin{lemma} \label{feasible-dual-constraints-j-only-rejected}
	Suppose that a job $j$ is immediately rejected at $r_j$ and $R^2_i(r_j) = \{j\}$  when $j$ is hypothetically assigned to $i$.  Then, the dual constraint (\ref{dual-const}) corresponding to $j$ holds.
\end{lemma}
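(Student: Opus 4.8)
The plan is to follow the same template as Lemma~\ref{feasible-dual-constraints-j-not-rejected} and Lemma~\ref{feasible-dual-constraints-j-nu-rejected}, now using the \emph{first} branch of the definition of $n_{ij}$ (the one valid exactly when $R^{2'}_i(r_j)=\{j\}$) together with the fourth case of the definitive completion time. First fix $i$ and $t$; by Lemma~\ref{monotone} it suffices to verify constraint~(\ref{dual-const}) at $r_j$ under the assumption that no job is released after $r_j$, so the schedule on $i$ from $r_j$ on is fully determined and $W_i$ is constant on $[r_j,\infty)$. Since $R^{2'}_i(r_j)=\{j\}$, the argument in the proof of Property~4 of Lemma~\ref{wg-rule-prop} shows that $j$ is the smallest-density job of $V=V_i(r_j^-)\cup\{j\}$; hence in $\alpha_{ij}$ the term $p_{ij}\sum_{h\in V_i(r_j^-):\delta_{ij}>\delta_{ih}}w_{ih}$ vanishes and $\sum_{h\in V_i(r_j^-):\delta_{ih}\ge\delta_{ij}}p_{ih}=\sum_{h\in V_i(r_j^-)}p_{ih}$.

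Second, I would prove a ``clean bound'' claim, in the spirit of the claim inside Lemma~\ref{feasible-dual-constraints-j-not-rejected}. Plugging the first branch of $n_{ij}$ into $\alpha_{ij}$ and using $0\le W'_i(r_j)-\sum_{h:\delta_{i\rho}\ge\delta_{ih}}w_h<w_{(\rho-1)}$, a short rearrangement shows that when $j$ is actually dispatched to $i$ (so $W'_i(r_j)=W_i(r_j)$ and $\rho$ refers to the actual queue) one has $\frac{\alpha_{ij}}{p_{ij}}=\frac{20 w_j}{\epsilon}+\frac{w_j}{p_{ij}}(\widetilde C_j-r_j)-\frac{w_j}{p_{ij}}q_{i\kappa_i(r_j)}(r_j)\,\one_{\{R^1_i(r_j)=\emptyset\}}$, because the case-$4$ definitive completion time with $L_j=r_j$ is exactly $r_j+p_{ij}+\sum_{h\in V_i(r_j^-):\delta_{ih}>\delta_{i(\rho-1)}}p_{ih}+\big(1-\frac{W_i(r_j)-\sum_{h:\delta_{i\rho}\ge\delta_{ih}}w_h}{w_{(\rho-1)}}\big)p_{i(\rho-1)}+q_{i\kappa_i(r_j)}(r_j)\one_{\{R^1_i(r_j)=\emptyset\}}$, and $\delta_{ij}p_{ij}=w_j$ absorbs the stray $+w_j$. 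When $j$ is \emph{not} dispatched to $i$ the queue on $i$ is unchanged, $W_i(r_j)=W_i(r_j^-)\ge W'_i(r_j)$, and the gap between $W'_i(r_j)$ and $W_i(r_j^-)$ is absorbed into the slack $\frac{20(1+\epsilon)w_j}{\epsilon}$ exactly as in Case~$2$ of Lemma~\ref{feasible-dual-constraints-j-not-rejected}. Either way the claim yields $\frac{\alpha_j}{p_{ij}}\le\frac{\epsilon}{1+\epsilon}\big(\frac{20(1+\epsilon)w_j}{\epsilon}+\frac{w_j}{p_{ij}}(\widetilde C_j-r_j)\big)$, reading $\widetilde C_j$, in the not-dispatched case, as the quantity produced by case~$4$ with the hypothetical budget $W'_i(r_j)$.

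Third, for dual feasibility I would fix a time $t'\ge t$ and let $z$ be the job running on $i$ at $t'$; the idle regime is handled together with the case $t'\ge\widetilde C_j$, since there $\beta_{it'}\ge 0$, $j\notin Q_i(t')$, $\widetilde C_j-r_j\le t'-r_j$, and $\frac{\epsilon}{1+\epsilon}\cdot\frac{20(1+\epsilon)w_j}{\epsilon}=20 w_j\le 21 w_j$ close the argument. Since $j$ is smallest density and $\widetilde C_j$ lies within the high-density prefix, $t'<\widetilde C_j$ forces $\delta_{iz}>\delta_{ij}$ unless $z=\kappa_i(r_j^-)$; the latter case is the short computation of Case~$2$ of the previous two lemmas. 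The substantive sub-case is $t'<\widetilde C_j$ with $z$ a queue job of density $>\delta_{ij}$: expand $\widetilde C_j-t'$ as $p_{ij}+q_{iz}(t')+\sum_{h\in V_i(t'):\,\delta_{iz}>\delta_{ih}>\delta_{i(\rho-1)}}p_{ih}+\big(1-\frac{W_i(r_j)-\sum_{h:\delta_{i\rho}\ge\delta_{ih}}w_h}{w_{(\rho-1)}}\big)p_{i(\rho-1)}$, multiply by $\delta_{ij}=w_j/p_{ij}$, and bound each piece of $\delta_{ij}(\widetilde C_j-t')$ by the matching artificial fractional weight: $\delta_{ij}p_{ij}=w_j=w^f_j(t')$ (using $j\in Q_i(t')$ in this regime), $\delta_{ij}q_{iz}(t')\le\frac{w_z}{p_{iz}}q_{iz}(t')\le w^f_z(t')$, $\delta_{ij}p_{ih}\le w_h=w^f_h(t')$ for the pending intermediate jobs, and $\delta_{ij}p_{i(\rho-1)}(1-\cdots)\le w^f_{(\rho-1)}(t')$ for the (possibly partially processed) boundary job. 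Then I would invoke Lemma~\ref{lem:main-inequality} to fold the low-density tail $V_i(t')$ and $Q^R_i(t')$ into $Q_i(t')$ and carry the $\frac1{1+\epsilon^2}$ and $\frac{\epsilon^2}{1+\epsilon^2}$ split exactly as in Lemma~\ref{feasible-dual-constraints-j-not-rejected}, so that the positive part reconstitutes $\frac1{1+\epsilon^2}\sum_{h\in Q_i(t')}w^f_h(t')$ and hence $\beta_{it'}$, while the leftover $\frac{\epsilon^2}{1+\epsilon^2}W_i(t')$ is cancelled against the truncation of the tail already baked into the $p_{i(\rho-1)}$ coefficient of $\widetilde C_j$; the \emph{second} case of Lemma~\ref{lem:main-inequality}, which was tailored to $R^2_i(r_j)=\{j\}$, is what supplies $\frac{w_\kappa}{p_{i\kappa}}q_{i\kappa}(t')+\sum_{h\in V_i(t')}w^f_h(t')-W_i(t')\le 0$ once $t'$ has passed $\widetilde C_j$.

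The step I expect to be the main obstacle is the boundary job of index $\rho-1$ in the substantive sub-case: once $z$ has reached $\rho-1$, the available $w^f_{(\rho-1)}(t')$ is tied to $q_{i(\rho-1)}(t')\ge p_{i(\rho-1)}\frac{W_i(r_j)-\sum_{h:\delta_{i\rho}\ge\delta_{ih}}w_h}{w_{(\rho-1)}}$, and one has to check that this is compatible with the fractional coefficient $1-\frac{W_i(r_j)-\sum_{h:\delta_{i\rho}\ge\delta_{ih}}w_h}{w_{(\rho-1)}}$ charged to $\widetilde C_j$; this compatibility is precisely why case~$4$ uses that coefficient, and it is where the $\epsilon^2 W_i(r_j)$ bookkeeping has to be done delicately. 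A secondary difficulty is the not-dispatched sub-case, where $j$ never enters $Q_i(t')$ so the stray $w_j$ above cannot be identified with $w^f_j(t')$: there one instead bounds the fictitious $\widetilde C_j-t'$ (computed with $W'_i(r_j)$) by the actual remaining work on $i$ at $t'$ — legitimate because $W'_i(r_j)\le W_i(r_j^-)$ makes the fictitious prefix no longer than the real queue's — and reuses the same Lemma~\ref{lem:main-inequality}-based split with the $w_j$ slack absorbing the $p_{ij}$ term.
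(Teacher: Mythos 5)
Your proposal is correct and follows essentially the same route as the paper: reduce to time $r_j$ via Lemma~\ref{monotone}, use the first branch of $n_{ij}$ together with the fact that $j$ is the smallest-density job, split on whether $t'$ falls before or after the prefix ending at the fractionally-charged boundary job $\rho-1$ (with the sub-case $z=\rho-1$ handled separately), absorb the $W'_i(r_j)$-versus-$W_i(r_j)$ gap of at most $w_j/\epsilon$ into the constant to get $21w_j$, and invoke Lemma~\ref{lem:main-inequality} to fold the queue terms into $\sum_{h\in Q_i(t')}w^f_h(t')$ and hence $\beta_{it'}$. Your reformulation of $\alpha_{ij}/p_{ij}$ directly in terms of $\widetilde{C}_j-r_j$ is an accurate repackaging of the computation the paper carries out with the raw sums (and is the same identity used in Claim~\ref{lb-A1}), so the two arguments coincide in substance.
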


\begin{proof}
Fix a machine $i$ and the time $t$. Again, we prove the above inequality at $r_j$, assuming that no job arrives after $r_j$. Let $R_i(r_j)$ denote the set of job rejected due to arrival of $j$, that is $R_i(r_j) = R^1_i(r_j) \cup R^2_i(r_j)$. Define $\rho = \rho_{ij}$ at $V_i(r_j^-)$. Using the definition of $\alpha_{ij}$, we have
\begin{align}
\frac{\alpha_{j}}{p_{ij}} &\leq \frac{\alpha_{ij}}{p_{ij}} = \frac{\epsilon} {1+\epsilon} \left(\frac{20 w_j}{\epsilon} + \frac{w_j}{p_{ij}} \sum \limits_{h \in V_i(r_j^-)} p_{ih} + w_j\right)  \nonumber\\
& \qquad \qquad - \frac{\epsilon} {1+\epsilon} \left(\frac{w_j}{p_{ij}} \left(\sum \limits_{h \in V_i(r_j^-) :  \delta_{ih} \leq \delta_{i\rho}}  p_h + \left(\frac{W'_i(r_j) -  \sum \limits_{h \in V_i(r_j^-) :  \delta_{ih} \leq \delta_{\rho}} w_h}{w_{\rho-1}}\right)p_{i,(\rho-1)}\right) \right) \nonumber \\
&\leq \frac{\epsilon} {1+\epsilon} \left(\frac{20 w_j}{\epsilon} + \frac{w_j}{p_{ij}} \sum \limits_{h \in V_i(r_j^-): \delta_{ih} \geq \delta_{i,(\rho-1)}} p_{ih} + w_j -\frac{w_j}{p_{ij}} \left(\frac{W'_i(r_j) -  \sum \limits_{h \in V_i(r_j^-):  \delta_{ih} \leq \delta_{i\rho}} w_h}{w_{\rho-1}}\right)p_{i,(\rho-1)}\right) \label{j-only-rejected}
\end{align}

Below, we present the dual feasibility proof. Depending on the density of the job running at some time $t' > t$, we consider cases. \\
 
\noindent\textbf{Case 1}: Let $t' > t$ and $z$ is executed at time $t'$. Recall that all jobs in $V_i(r_j^-)$ have density smaller than $j$. 
Assume that 
$$
t' - r_j \leq \sum \limits_{h \in V_i(r_j^-): \delta_{ih} \geq \delta_{i,(\rho-2)}} p_{ih} + \left(1- \frac{W'_i(r_j) -  \sum \limits_{h \in V_i(r_j^-):  \delta_{ih} \leq \delta_{i\rho}} w_h}{w_{\rho-1}}\right)p_{i,(\rho-1)}
$$ 
Thus, $\delta_{iz} \geq \delta_{i,(\rho-1)}$.

\noindent \textbf{Case 1.1}: Assume that $z \not = \rho-1$, then we have: 
\begin{align*} 
& \alpha_j/p_{ij} \\
&\leq \frac{\epsilon} {1+\epsilon} \left(\frac{20 w_j}{\epsilon} + \frac{w_j}{p_{ij}} \left(\sum \limits_{h \in V_i(r_j^-): \delta_{ih} \geq \delta_{i,(\rho-1)}} p_{ih} - \left(\frac{W'_i(r_j) -  \sum \limits_{h \in V_i(r_j^-):  \delta_{ih} \leq \delta_{i\rho}} w_h}{w_{\rho-1}}\right)p_{i,(\rho-1)}\right)+ w_j \right)\\
&\leq \frac{\epsilon} {1+\epsilon} \left(\frac{20 w_j}{\epsilon} + \frac{w_j}{p_{ij}} \left(\sum \limits_{h \in V_i(r_j): \delta_{ih} > \delta_{iz}} p_{ih} + p_{iz} +\sum \limits_{h \in V_i(r_j): \delta_{i,(\rho-1)} \leq \delta_{ih} < \delta_{iz}} p_{ih} \right)+ w_j \right) \\
&\qquad \qquad -  \frac{\epsilon} {1+\epsilon} \left(\frac{w_j}{p_{ij}}\right) \left(\frac{W'_i(r_j) -  \sum \limits_{h \in V_i(r_j^-):  \delta_{ih} \leq \delta_{i\rho}} w_h}{w_{\rho-1}}\right)p_{i,(\rho-1)}\\
&\leq \frac{\epsilon} {1+\epsilon} \left(\frac{20 w_j}{\epsilon} + \frac{w_j}{p_{ij}} \left(t' - r_j + q_{iz}(t') + \sum \limits_{h \in V_i(r_j): \delta_{i,(\rho-2)} \leq \delta_{ih} < \delta_{iz}} p_{ih} \right)+ w_j \right) \\
&\qquad \qquad+  \frac{\epsilon} {1+\epsilon} \left(\frac{w_j}{p_{ij}} \left( 1 - \frac{W'_i(r_j) -  \sum \limits_{h \in V_i(r_j^-):  \delta_{ih} \leq \delta_{i\rho}} w_h}{w_{\rho-1}}\right)p_{i,(\rho-1)}  \right)\\
&\leq \frac{\epsilon} {1+\epsilon} \left(\frac{20 w_j}{\epsilon} + \frac{w_j}{p_{ij}} \left(t' - r_j + q_{iz}(t') + \sum \limits_{h \in V_i(r_j): \delta_{i,(\rho-2)} \leq \delta_{ih} < \delta_{iz}} p_{ih} \right)+ w_j \right) \\
&\qquad \qquad+  \frac{\epsilon} {1+\epsilon} \left(\sum \limits_{h \in V_i(r_j^-):  \delta_{ih} \leq \delta_{i,(\rho-1)}} w_h - W'_i(r_j) \right) \\
&\leq \frac{\epsilon} {1+\epsilon} \left(\frac{20 w_j}{\epsilon} + \frac{w_j}{p_{ij}}(t' - r_j) + \frac{w_j}{p_{ij}} \left(q_{iz}(t') + \sum \limits_{h \in V_i(r_j): \delta_{ih} < \delta_{iz}} p_{ih} \right)+ w_j - W'_i(r_j) \right)
\end{align*}

Recall that $W_i(r_j^-) - W_i'(r_j) \leq w_j/\epsilon$. Using this above inequality can be re-written as:
\begin{align*}
&\alpha_j/p_{ij} \\
&\leq \frac{\epsilon} {1+\epsilon} \left(\frac{20 w_j}{\epsilon} + \frac{w_j}{p_{ij}}(t' - r_j) + \frac{w_j}{p_{ij}} \left(q_{iz}(t') + \sum \limits_{h \in V_i(r_j): \delta_{ih} < \delta_{iz}} p_{ih} \right)+ w_j - W_i(r_j) + w_j /\epsilon \right) \\
&\leq \frac{\epsilon} {1+\epsilon} \left(\frac{21 w_j}{\epsilon} + \frac{w_j}{p_{ij}}(t' - r_j) + \frac{w_j}{p_{ij}} \left(q_{iz}(t') + \sum \limits_{h \in V_i(r_j): \delta_{ih} < \delta_{iz}} p_{ih} \right)+ w_j - W_i(r_j) \right) \\
&\leq \frac{\epsilon} {1+\epsilon} \left(\frac{21 (1+\epsilon) w_j}{\epsilon} + \frac{w_j}{p_{ij}}(t' - r_j) +  \sum \limits_{h \in Q_i(t')} w^f_{h} - W_i(r_j) \right) \\
&\leq 21 w_j+ \frac{w_j}{p_{ij}}(t'-r_j) + \beta_{it}
\end{align*}

\noindent \textbf{Case 1.2}: Assume that $z  = \rho-1$, then we have from Inequality~\ref{j-only-rejected}:

\begin{align*}
&\alpha_j/p_{ij} \\
&\leq \frac{\epsilon} {1+\epsilon} \left(\frac{20 w_j}{\epsilon} + \frac{w_j}{p_{ij}} \left(\sum \limits_{h \in V_i(r_j^-): \delta_{ih} \geq \delta_{i,(\rho-1)}} p_{ih} - \left(\frac{W'_i(r_j) -  \sum \limits_{h \in V_i(r_j^-):  \delta_{ih} \leq \delta_{i\rho}} w_h}{w_{\rho-1}}\right)p_{i,(\rho-1)}\right)+ w_j \right)\\
&\leq \frac{\epsilon} {1+\epsilon} \left(\frac{20 w_j}{\epsilon} + \frac{w_j}{p_{ij}} \left(t' - r_j + q_{iz}(t') - p_{i,z} \right)+ w_j \right) \\
&\qquad \qquad +  \frac{\epsilon} {1+\epsilon} \left(\left( \frac{ w_{\rho-1}.q_{i,(\rho-1)}(t')}{p_{i,(\rho-1)}} - (W'_i(r_j) -  \sum \limits_{h \in V_i(r_j^-):  \delta_{ih} \leq \delta_{i\rho}} w_h)\right)- \frac{w_j}{p_{ij}} q_{i,(\rho-1)} (t') \right)\\
&\leq \frac{\epsilon} {1+\epsilon} \left(\frac{20 w_j}{\epsilon} + \frac{w_j}{p_{ij}} \left(t' - r_j \right)+ w_j + \frac{ w_{\rho-1}.q_{i,(\rho-1)}(t')}{p_{i,(\rho-1)}} - (W'_i(r_j) -  \sum \limits_{h \in V_i(r_j^-):  \delta_{ih} \leq \delta_{i\rho}} w_h)\right)\\
&\leq \frac{\epsilon} {1+\epsilon} \left(\frac{20 w_j}{\epsilon} + \frac{w_j}{p_{ij}} \left(t' - r_j \right)+ w_j + \frac{ w_{z}.q_{i,z}(t')}{p_{i,z}} + \sum \limits_{h \in V_i(r_j^-):  \delta_{ih} \leq \delta_{i\rho}} w_h - W'_i(r_j) \right)\\
\end{align*}

Recall that $W_i(r_j^-) - W_i(r_j^+) \leq w_j/\epsilon$. Using this above inequality can be re-written as:

\begin{align*}
&\alpha_j/p_{ij} \\ 
&\leq \frac{\epsilon} {1+\epsilon} \left(\frac{20 w_j}{\epsilon} + \frac{w_j}{p_{ij}} \left(t' - r_j \right)+ w_j + \frac{ w_{z}.q_{i,z}(t')}{p_{i,z}} + \sum \limits_{h \in V_i(r_j^-):  \delta_{ih} \leq \delta_{i\rho}} w_h - W_i(r_j) + \frac{w_j}{\epsilon} \right)\\
&\leq \frac{\epsilon} {1+\epsilon} \left(\frac{21 (1+\epsilon) w_j}{\epsilon} + \frac{w_j}{p_{ij}} (t'-r_j) + \frac{1+\epsilon^2}{1+\epsilon^2} \left(\frac{ w_{z}.q_{i,z}(t')}{p_{i,z}}  + \sum\limits_{h \in V_i(t')} w^f_{h}(t')\right) - W_i(r_j) \right) \\
&\leq \frac{\epsilon} {1+\epsilon} \left(\frac{21 (1+\epsilon) w_j}{\epsilon} + \frac{w_j}{p_{ij}} (t'-r_j) + \frac{\sum\limits_{h \in Q_i(t')} w_f(t')}{1+\epsilon^2}  + \frac{\epsilon^2}{1+\epsilon^2} W_i(r_j) - W_i(r_j) \right) \\
&\leq 21w_j + \frac{w_j}{p_{ij}} (t'-r_j) + \beta_{it'} \\
\end{align*}

\noindent \textbf{Case 2} Let $t' > t$ and $t'-r_j > \sum \limits_{h \in V_i(r_j^-): \delta_{ih} \geq \delta_{i,(\rho-2)}} p_{ih} + \left(1- \frac{W'_i(r_j) -  \sum \limits_{h \in V_i(r_j^-):  \delta_{ih} \leq \delta_{i\rho}} w_h}{w_{\rho-1}}\right)p_{i,(\rho-1)}$. Then we can re-write the Inequality~\ref{j-only-rejected} as:

\begin{align*}
&\alpha_j/p_{ij} \\ 
&\leq \frac{\epsilon} {1+\epsilon} \left(\frac{20 w_j}{\epsilon} + \frac{w_j}{p_{ij}} \sum \limits_{h \in V_i(r_j^-): \delta_{ih} \geq \delta_{i,(\rho-1)}} p_{ih} + w_j -\frac{w_j}{p_{ij}} \left(\frac{W'_i(r_j) -  \sum \limits_{h \in V_i(r_j^-):  \delta_{ih} \leq \delta_{i\rho}} w_h}{w_{\rho-1}}\right)p_{i,(\rho-1)}\right) \\
&\leq \frac{\epsilon} {1+\epsilon} \left(\frac{20 (1+\epsilon) w_j}{\epsilon} + \frac{w_j}{p_{ij}} \left(\sum \limits_{h \in V_i(r_j^-): \delta_{ih} \geq \delta_{i,(\rho-2)}} p_{ih} + \left(1 - \frac{W'_i(r_j) -  \sum \limits_{h \in V_i(r_j^-):  \delta_{ih} \leq \delta_{i\rho}} w_h}{w_{\rho-1}}\right)p_{i,(\rho-1)}\right) \right) \\
&\leq \frac{\epsilon} {1+\epsilon} \left(\frac{20 (1+\epsilon) w_j}{\epsilon} + \frac{w_j}{p_{ij}} (t'-r_j) \right) 
\end{align*}
and the lemma holds
\end{proof}

\begin{lemma} \label{bound-alpha}
It holds that $\sum\limits_{j \in \mathcal{J}} \alpha_j \geq \frac{\epsilon}{1+\epsilon}\sum \limits_{j \in \mathcal{J}} (\widetilde{C}_j - r_j)$.
\end{lemma}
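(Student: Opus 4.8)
The plan is to fix a job $j$, let $i^*$ be the machine to which $j$ is dispatched, and show the per-job bound $\alpha_j \ge \frac{\epsilon}{1+\epsilon}(\widetilde{C}_j - r_j)$; summing over all $j \in \mathcal J$ then yields the lemma. Since $\alpha_j = \frac{\epsilon}{1+\epsilon}\alpha_{i^*j} \le \frac{\epsilon}{1+\epsilon}\alpha_{ij}$ for every $i$, and in particular for $i^*$ we have equality up to the choice of the minimizing machine, it suffices to prove $\alpha_{i^*j} \ge \widetilde{C}_j - r_j$ where $\widetilde C_j$ is computed with respect to the machine $i^*$ that actually receives $j$. Recall that $\alpha_{ij} = \frac{20 w_j p_{ij}}{\epsilon} + w_j\sum_{h\in V_i(r_j^-):\delta_{ih}\ge\delta_{ij}} p_{ih} + w_j p_{ij} + p_{ij}\sum_{h\in V_i(r_j^-):\delta_{ij}>\delta_{ih}} w_{ih} - n_{ij}$; the term $w_j p_{ij}$ accounts for $j$'s own processing, and the sums over $V_i(r_j^-)$ together with the $20w_jp_{ij}/\epsilon$ slack must dominate the various pieces appearing in the definitive completion time, minus the correction $n_{ij}$ which precisely captures the shortening of the queue caused by the rejection $R^2_i(r_j)$.

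I would then do a case analysis mirroring exactly the four cases in the definition of $\widetilde C_j$ (Section~\ref{dual variables}): (1) $j$ not rejected by the weight-gap rule, so $\widetilde C_j = L_j + \sum_{h\in R^1_i(r_j,L_j)} q_{ih}(r_{\rej(h)})$; (2) $j$ rejected by weight-gap on arrival of another job; (3) $j$ immediately rejected together with $\nu_i(r_j^-)$; (4) $j$ immediately rejected alone. In each case, $\widetilde C_j - r_j$ is a sum of remaining-processing and processing quantities of jobs that were in the queue at $r_j$ or arrived later; the jobs present at $r_j$ contribute at most $p_{ij} + \sum_{h\in V_i(r_j^-):\delta_{ih}\ge\delta_{ij}} p_{ih}$ (HDF processes $j$ and higher-density jobs ahead of anything relevant, and lower-density jobs in $V_i(r_j^-)$ never delay $j$ or its successors under HDF absent future arrivals), while future arrivals that interfere are controlled by the preempt rule — each future job that delays $j$ via $R^1$ is charged, and the total such weight before $j$'s own turn is at most $w_j/\epsilon$, contributing at most $(w_j/\epsilon)\cdot$(its processing) which is absorbed by the $20w_jp_{ij}/\epsilon$ slack after using density comparisons. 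For cases (3) and (4), the terms $p_{ij} + \sum_{h\in U_i(L_j)}q_{ih}(L_j)$ resp. the $\rho$-fractional expression match $n_{ij}$ by construction, so the inequality becomes essentially an identity plus nonnegative slack; this is where the specific algebraic form of $n_{ij}$ was designed to make things cancel.

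The main obstacle I anticipate is case (2): here $\widetilde C_j = L_j + \sum_{h\in R^1_i(r_j,L_j)}q_{ih}(r_{\rej(h)}) + \sum_{h\in U_i(L_j):\delta_{ih}\ge\delta_{ij}}q_{ih}(L_j) + \sum_{h\in R^2_i(r_{\rej(j)}):\delta_{ih}\ge\delta_{ij}}p_{ih}$, and the extra two sums involve jobs in the queue at the (much later) rejection time $L_j$ rather than at $r_j$. To bound these I would argue that any job of density $\ge \delta_{ij}$ sitting in $U_i(L_j)$ either was already in $V_i(r_j^-)$ (hence counted by $w_j\sum_{h:\delta_{ih}\ge\delta_{ij}}p_{ih}$), or arrived in $(r_j, L_j]$, in which case it increments the weight-gap budget $W_i$ and, crucially, $j$'s rejection at $r_{\rej(j)}$ is triggered precisely because such high-density weight accumulated; the definition of $R^2_i(r_{\rej(j)})$ and inequality~(\ref{ub_wg}) then let me charge $\sum_{h\in R^2_i(r_{\rej(j)}):\delta_{ih}\ge\delta_{ij}}p_{ih}$ against $p_{ij}$ up to a $1/\epsilon$ factor. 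The bookkeeping across the interval $(r_j, L_j]$ — ensuring no job is double-counted between the $R^1$-sum, the $U_i(L_j)$-sum, and the original $V_i(r_j^-)$-sum — is the delicate part, and it is exactly why the definitive completion time and the artificial fractional weights were introduced (Lemma~\ref{monotone}, Lemma~\ref{lem:main-inequality}). I would lean on Lemma~\ref{lem:main-inequality} to convert queue weights at time $L_j$ back into a controlled expression, closing the estimate with the $20w_jp_{ij}/\epsilon$ slack to absorb all $O(1/\epsilon)$ factors.
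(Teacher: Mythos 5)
Your plan to prove the per-job inequality $\alpha_{i^*j} \geq w_j(\widetilde{C}_j - r_j)$ and then sum has a genuine gap: the correction term $n_{ij}$ contains, in its ``otherwise'' branch, the quantity $\epsilon^2 W_i(r_j)\,p_{ij}$, and this cannot be absorbed by the $\tfrac{20 w_j p_{ij}}{\epsilon}$ slack job by job. Property 2 of Lemma~\ref{wg-rule-prop} only gives $\epsilon W_i(r_j) < w_{\nu_i(r_j)}$, where $\nu_i(r_j)$ is the lowest-density \emph{pending} job, whose weight (and processing time) bears no relation to $w_j$ or $p_{ij}$; so per job the subtracted term can dwarf any $O(w_j p_{ij}/\epsilon)$ budget. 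The same issue arises with the delay caused by the currently executing job $\kappa_i(r_j)$: the contribution $w_j\, q_{i\kappa_i(r_j)}(r_j)$ to $\widetilde{C}_j - r_j$ is not comparable to $w_j p_{ij}/\epsilon$ for a single $j$. The paper's proof is therefore \emph{not} per job: it lower-bounds the aggregate $\tfrac{1+\epsilon}{\epsilon}\sum_j \alpha_j$, splits it into three sums $A^1, A^2, A^3$ according to whether $R^2_i(r_j)=\{j\}$, $R^2_i(r_j)=\{j,\nu_i(r_j^-)\}$, or $j\notin R^2_i(r_j)$ (note: not the four cases of the definitive completion time), and handles the problematic terms by global charging arguments --- the preempt rule bounds $\sum_j w_j p_{i,\kappa_i(r_j)}$ by $\tfrac{1}{\epsilon}\sum_j w_j p_{ij}$ because the total weight arriving during any one job's execution is at most $1/\epsilon$ times its weight, and Corollary~\ref{coro-bound-partof-objective} (a consequence of the amortized Lemma~\ref{lem-part-dual-obj}) bounds $\sum_{j\notin R^2_i(r_j)} \epsilon^2 W_i(r_j) p_{ij}$ by $\tfrac{5}{\epsilon}\sum_j w_j p_{ij}$, a loss that is then absorbed by the \emph{summed} slack $\sum_j \tfrac{15 w_j p_{ij}}{\epsilon}$.

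A secondary misdirection: you lean on Lemma~\ref{lem:main-inequality} to close the estimate, but that lemma is the tool for the dual-feasibility proofs (Lemmas~\ref{feasible-dual-constraints-j-not-rejected}--\ref{feasible-dual-constraints-j-only-rejected}), not for bounding $\sum_j\alpha_j$; the lemma you actually need here is Lemma~\ref{lem-part-dual-obj} via Corollary~\ref{coro-bound-partof-objective}, which your proposal never invokes. Your matching of the remaining pieces of $\alpha_{ij}$ against the four definitions of $\widetilde{C}_j$ (in particular the exact cancellation of the $\rho$-fractional expression in the ``$R^2_i(r_j)=\{j\}$'' case) is consistent with what the paper does inside Claims~\ref{lb-A1} and~\ref{lb-A2}, but without the aggregate amortization the overall argument does not go through.
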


\begin{proof}

Using the definitions of $\alpha_j$, we have that 

\begin{align*}
&\frac{1+\epsilon}{\epsilon} \sum \limits_{j} \alpha_j  =  \frac{1+\epsilon}{\epsilon}  \left(\sum \limits_{j \not \in R^2_i(r_j)} \alpha_j + \sum \limits_{j \in R^2_{ij}} \left(\alpha_j.\one_{|R^2_i(r_j) = 1|} +  \alpha_j. \one_{|R^2_{ij} > 1|} \right) \right)\\
&=  \sum_j \left( \frac{20 w_j p_{ij}}{\epsilon} + w_j \sum \limits_{h \in V_i(r_j^-): \delta_{ih} \geq \delta_{ij}} p_{ih} + w_j p_{ij} + p_{ij}\sum \limits_{h \in V_i(r_j^-): \delta_{ij} > \delta_{ih}} w_{ih} \right) \\
&\qquad \qquad- \sum \limits_{R^2_i(r_j) = \{j\}} w_j \left(\sum \limits_{h \in V_i(r_j^-):  \delta_{ih} \leq \delta_{a_{ij}}}  p_h + \left(\frac{W_i(r_j) -  \sum \limits_{h \in V_i(r_j^-):  \delta_{ih} \leq \delta_{i,\rho_{ij}}} w_h}{w_{\rho_{ij}-1}}\right)p_{i,(\rho_{ij}-1)}\right)\\
&\qquad \qquad- \sum \limits_{R^2_{ij} = \{j, \nu_i(r_j^-)\}} w_j \left(\sum \limits_{h \in R^2_{ij}} p_{ih}\right) - \sum \limits_{j \not \in R^2_i(r_j)} \left(p_{ij} \sum \limits_{h \in R^2_i(r_j)} w_h +  \epsilon^2 W_i(r_j) p_{ij} \right)\\ 
&\geq   \sum_j \left( \frac{19 w_j p_{ij}}{\epsilon} + w_j p_{i,\kappa_i(r_j)} + w_j \sum \limits_{h \in V_i(r_j^-): \delta_{ih} \geq \delta_{ij}} p_{ih} + w_j p_{ij} + p_{ij}\sum \limits_{h \in V_i(r_j^-) : \delta_{ij} > \delta_{ih}} w_{ih} \right) \\
&\qquad \qquad- \sum \limits_{R^2_i(r_j) = \{j\}} w_j \left(\sum \limits_{h \in V_i(r_j^-):  \delta_{ih} \leq \delta_{\rho_{ij}}}  p_h + \left(\frac{W_i(r_j) -  \sum \limits_{h \in V_i(r_j^-):  \delta_{ih} \leq \delta_{\rho_{ij}}} w_h}{w_{\rho_{ij}-1}}\right)p_{\rho_{ij}-1}\right)\\
&\qquad \qquad- \sum \limits_{R^2_i(r_j) = \{j,\nu_i(r_j^-)\}} w_j \left(\sum \limits_{h \in R^2_i(r_j)} p_{ih}\right) - \sum \limits_{j \not \in R^2_i(r_j)} \left(p_{ij} \sum \limits_{h \in R^2_i(r_j)} w_h +  \epsilon^2 W_i(r_j) p_{ij} \right)\\ 
\end{align*}

The above inequality holds due to \emph{preempt} rule that the total weight of jobs arriving due a time $j$ is executing on a machine $i$ is at most $w_j/\epsilon$. 

Now we split the above inequality into three and show that each part can be bounded separately, that is 
\begin{align*}
\frac{1+\epsilon}{\epsilon} \sum_j \alpha_j &\geq A^1 + A^2 + A^3
\end{align*}
where 
\begin{align}
A^1 &:=  \sum_{R^2_i(r_j) = \{j\}} \left( \frac{19 w_j p_{ij}}{\epsilon} + w_j p_{i,\kappa_i(r_j)} + w_j \sum \limits_{h \in V_i(r_j^-): \delta_{ih} \geq \delta_{ij}} p_{ih} + w_j p_{ij} + p_{ij}\sum \limits_{h \in V_i(r_j^-): \delta_{ij} > \delta_{ih}} w_{ih} \right) \nonumber \\
&\qquad \qquad - \sum \limits_{R^2_i(r_j) = \{j\}} w_j \left(\sum \limits_{h \in V_i(r_j^-):  \delta_{ih} \leq \delta_{\rho_{ij}}}  p_h + 
	\left(\frac{W_i(r_j) -  \sum \limits_{h \in V_i(r_j^-):  \delta_{ih} \leq \delta_{\rho_{ij}}} w_h}{w_{\rho_{ij}-1}}\right)p_{\rho_{ij}-1}\right), \label{def-A1}\\
\nonumber \\
A^2 &:=  \sum \limits_{R^2_i(r_j) = \{j,\nu_i(r_j^-)\}} \left( \frac{19 w_j p_{ij}}{\epsilon} + w_j p_{i,\kappa_i(r_j)} + w_j \sum \limits_{h \in V_i(r_j^-): \delta_{ih} \geq \delta_{ij}} p_{ih} + w_j p_{ij} + p_{ij}\sum \limits_{h \in V_i(r_j^-): \delta_{ij} > \delta_{ih}} w_{ih} \right) \nonumber \\
&\qquad \qquad - \sum \limits_{R^2_i(r_j) = \{j,\nu_i(r_j^-)\}} w_j \left(\sum \limits_{h \in R^2_i(r_j)} p_{ih}\right), \label{def-A2} \\ \text{~and~} \nonumber
 \\
 A^3 &:= \sum \limits_{j \not \in R^2_i(r_j)}  \left( \frac{19 w_j p_{ij}}{\epsilon} + w_j p_{i,\kappa_i(r_j)} + w_j \sum \limits_{h \in V_i(r_j^-): \delta_{ih} \geq \delta_{ij}} p_{ih} + w_j p_{ij} + p_{ij}\sum \limits_{h \in V_i(r_j^-): \delta_{ij} > \delta_{ih}} w_{ih} \right) \nonumber \\
&\qquad \qquad -  \sum \limits_{j \not \in R^2_i(r_j)} \left(p_{ij} \sum \limits_{h \in R^2_i(r_j)} w_h +  \epsilon^2 W_i(r_j) p_{ij} \right). \label{def-A3}
\end{align}
We now show individual lower bounds on $A^1, A^2$ and $A^3$. 
\bigskip

\begin{claim} \label{lb-A1}
$A^1 \geq \sum\limits_{R^2_i(r_j) = \{j\}} w_j (\widetilde{C}_j - r_j) + 19w_j p_{ij}/\epsilon$.
\end{claim}

\begin{proof}
Using the definition of $A^1$ from the Equation~\ref{def-A1}, we have 
\begin{align*}
A^1 &= \sum_{R^2_i(r_j) = \{j\}} \left( \frac{19 w_j p_{ij}}{\epsilon} + w_j p_{i,\kappa_i(r_j)} + w_j \sum \limits_{h \in V_i(r_j^-): \delta_{ih} \geq \delta_{ij}} p_{ih} + w_j p_{ij} + p_{ij}\sum \limits_{h \in V_i(r_j^-): \delta_{ij} > \delta_{ih}} w_{ih} \right) \\
& \qquad - \sum \limits_{R^2_i(r_j) = \{j\}} w_j \left(\sum \limits_{h \in V_i(r_j^-):  \delta_{ih} \leq \delta_{i,\rho_{ij}}}  p_h + 
	\left(\frac{W_i(r_j) -  \sum \limits_{h \in V_i(r_j^-):  \delta_{ih} \leq \delta_{i,\rho_{ij}}} w_h}{w_{\rho_{ij}-1}}\right)p_{i,\rho_{ij}-1}\right)\\
&= \sum_{R^2_i(r_j) = \{j\}} \left( \frac{19 w_j p_{ij}}{\epsilon} + w_j p_{i,\kappa_i(r_j)} + w_j \sum \limits_{h \in V_i(r_j): \delta_{ih} \geq \delta_{ij}} p_{ih} + w_j p_{ij} \right) \\
& \qquad - \sum \limits_{R^2_i(r_j) = \{j\}} w_j \left(\sum \limits_{h \in V_i(r_j):  \delta_{ih} \leq \delta_{i,\rho_{ij}}}  p_h + 
	\left(\frac{W_i(r_j) -  \sum \limits_{h \in V_i(r_j):  \delta_{ih} \leq \delta_{i,\rho_{ij}}} w_h}{w_{\rho_{ij}-1}}\right)p_{i,\rho_{ij}-1}\right)\\
&\qquad \qquad \qquad \text{since $j$ is the smallest density job}\\
&= \sum_{R^2_i(r_j) = \{j\}} \left( \frac{19 w_j p_{ij}}{\epsilon} + w_j p_{i,\kappa_i(r_j)} + w_j \sum \limits_{h \in V_i(r_j): \delta_{ih} \geq \delta_{i,\rho_{ij}-1}} p_{ih} + w_j p_{ij} \right) \\
&\qquad - \sum \limits_{R^2_i(r_j) = \{j\}} w_j 
	\left(\frac{W_i(r_j) -  \sum \limits_{h \in V_i(r_j):  \delta_{ih} \leq \delta_{i,\rho_{ij}}} w_h}{w_{\rho_{ij}-1}}\right)p_{i,\rho_{ij}-1}\\
&= \sum_{R^2_i(r_j) = \{j\}} \left( \frac{19 w_j p_{ij}}{\epsilon} + w_j p_{i,\kappa_i(r_j)} + w_j \sum \limits_{h \in V_i(r_j): \delta_{ih} \geq \delta_{i,\rho_{ij}-2}} p_{ih} + w_j p_{ij} \right) \\
&\qquad + \sum \limits_{R^2_i(r_j) = \{j\}} w_j 
	\left(\frac{\sum \limits_{h \in V_i(r_j):  \delta_{ih} \leq \delta_{i,\rho_{ij-1}}} w_h - W_i(r_j) }{w_{\rho_{ij}-1}}\right)p_{i,\rho_{ij}-1}\\
&= w_j \sum_{R^2_i(r_j) = \{j\}} \left( \frac{19 p_{ij}}{\epsilon} +  p_{i,\kappa_i(r_j)} + \sum \limits_{h \in V_i(r_j): \delta_{ih} \geq \delta_{i,\rho_{ij}-2}} p_{ih} + p_{ij} \right) \\
&\qquad + \sum \limits_{R^2_i(r_j) = \{j\}}
	\left(\frac{\sum \limits_{h \in V_i(r_j):  \delta_{ih} \leq \delta_{i,\rho_{ij-1}}} w_h - W_i(r_j) }{w_{\rho_{ij}-1}}\right)p_{i,\rho_{ij}-1}\\	
&= \sum \limits_{R^2_i(r_j) = \{j\}} w_j (\widetilde{C}_j - r_j) + 19 w_j p_{ij}/\epsilon \text{~~~~~~~~~~~~from the definition of $\widetilde{C}_j$}
\end{align*}
\end{proof}
\medskip 

\begin{claim} \label{lb-A2}
$A^2 \geq \sum \limits_{R^2_i(r_j) = \{j, \nu_i(r_j^-)\}} w_j (\widetilde{C}_j - r_j) + 19w_j p_{ij}/\epsilon$.
\end{claim}

\begin{proof}
Using the definition of $A^2$ from the Equation~\ref{def-A2}, we have:
\begin{align*}
A^2 &=  \sum \limits_{R^2_i(r_j) = \{j, \nu_i(r_j^-)\}} \left( \frac{19 w_j p_{ij}}{\epsilon} + w_j p_{i,\kappa_i(r_j)} + w_j \sum \limits_{h \in V_i(r_j^-) : \delta_{ih} \geq \delta_{ij}} p_{ih} + w_j p_{ij} + p_{ij}\sum \limits_{h \in V_i(r_j^-): \delta_{ij} > \delta_{ih}} w_{ih} \right) \\
 & \qquad - \sum \limits_{R^2_i(r_j) = \{j, \nu_i(r_j^-)\}} w_j \left(\sum \limits_{h \in R^2_i(r_j)} p_{ih}\right) \\
&=  \sum \limits_{R^2_i(r_j) = \{j, \nu_i(r_j^-)\}} \left( \frac{19 w_j p_{ij}}{\epsilon} + w_j p_{i,\kappa_i(r_j)} + w_j \sum \limits_{h \in V_i(r_j^-): \delta_{ih} \geq \delta_{ij}} p_{ih} + w_j p_{ij}  - w_j \sum \limits_{h \in R^2_i(r_j)} p_{ih}\right) \\
& \qquad \qquad \qquad \text{since $j$ is the smallest density job} \\
&=  \sum \limits_{R^2_i(r_j) = \{j, \nu_i(r_j^-)\}} \left( \frac{19 w_j p_{ij}}{\epsilon} + w_j p_{i,\kappa_i(r_j)} + w_j \sum \limits_{h \in V_i(r_j): \delta_{ih} \geq \delta_{ij}} p_{ih} + w_j p_{ij} \right) \\
&=  \sum \limits_{R^2_i(r_j) = \{j, \nu_i(r_j^-)\}} w_j \left( \frac{19 p_{ij}}{\epsilon} + w_j p_{i,\kappa_i(r_j)} +  \sum \limits_{h \in V_i(r_j): \delta_{ih} \geq \delta_{ij}} p_{ih} + p_{ij} \right) \\
&= \sum \limits_{R^2_i(r_j) = \{j, \nu_i(r_j^-)\}} w_j (\widetilde{C}_j - r_j) + 19w_j p_{ij}/\epsilon \text{~~~~~~~~~~~~from the definition of $\widetilde{C}_j$}
\end{align*}
\end{proof}
\medskip 

\begin{claim} \label{lb-A3}
$A^3 \geq \sum \limits_{j \not \in R^2_i(r_j)}  \left(w_j (\widetilde{C}_j - r_j) + \frac{15 w_j p_{ij}}{\epsilon}\right) - \frac{5}{\epsilon} \left( \sum \limits_{j \in \mathcal{J} } w_j p_{ij} \right)$
\end{claim}

\begin{proof}
Recall that the smallest density job in $R^2_i(r_j)$ is $\nu_i(r_j^-)$. From the Inequality~\ref{def-A3}, we have

\begin{align*}
A^3 &= \sum \limits_{j \not \in R^2_i(r_j)}  \left( \frac{19 w_j p_{ij}}{\epsilon} + w_j p_{i,\kappa_i(r_j)} + w_j \sum \limits_{h \in V_i(r_j^-): \delta_{ih} \geq \delta_{ij}} p_{ih} + w_j p_{ij} + p_{ij}\sum \limits_{h \in V_i(r_j^-): \delta_{ij} > \delta_{ih}} w_{ih} \right) \\
&\qquad - \sum \limits_{j \not \in R^2_i(r_j)} \left(p_{ij} \sum \limits_{h \in R^2_i(r_j)} w_h +  \epsilon^2 W_i(r_j) p_{ij} \right)\\ 
&\geq   \sum \limits_{j \not \in R^2_i(r_j)}  \left( \frac{19 w_j p_{ij}}{\epsilon} + w_j p_{i,\kappa_i(r_j)} + w_j \sum \limits_{h \in V_i(r_j^-): \delta_{ih} \geq \delta_{ij}} p_{ih} + w_j p_{ij} + p_{ij}\sum \limits_{h \in V_i(r_j^-): \delta_{ij} > \delta_{ih}} w_{ih} \right) \\
&\qquad - \sum \limits_{j \not \in R^2_i(r_j)} \left(p_{ij} (2\epsilon w_j + w_{\nu_i(r_j^-)}) +  \epsilon^2 W_i(r_j) p_{ij} \right)\\
&\text{~Follows from Property 3 in Lemma~\ref{wg-rule-prop}}\\
&\geq   \sum \limits_{j \not \in R^2_i(r_j)}  \left( \frac{17 w_j p_{ij}}{\epsilon} + w_j p_{i,\kappa_i(r_j)} + w_j \sum \limits_{h \in V_i(r_j^-): \delta_{ih} \geq \delta_{ij}} p_{ih} + w_j p_{ij} + p_{ij}\sum \limits_{h \in V_i(r_j^-): \delta_{ij} > \delta_{ih}} w_{ih} \right) \\
&\qquad - \sum \limits_{j \not \in R^2_i(r_j)} \left(p_{ij}  w_{\nu_i(r_j^-)}+  \epsilon^2 W_i(r_j) p_{ij} \right)\\ 
&\geq   \sum \limits_{j \not \in R^2_i(r_j)}  \left( \frac{17 w_j p_{ij}}{\epsilon} + w_j p_{i,\kappa_i(r_j)} + w_j \sum \limits_{h \in V_i(r_j^-): \delta_{ih} \geq \delta_{ij}} p_{ih} + w_j p_{ij} + p_{ij}\sum \limits_{h \in V_i(r_j^-): \delta_{ij} > \delta_{ih}} w_{ih} \right) \\
&\qquad - \sum \limits_{j \not \in R^2_i(r_j): w_j/\epsilon \leq w_{\nu_i(r_j^-)}} p_{ij}  w_{\nu_i(r_j^-)} - \sum \limits_{j \not \in R^2_i(r_j): w_j/\epsilon > w_{\nu_i(r_j^-)}} p_{ij}  w_{\nu_i(r_j^-)} - \sum \limits_{j \not \in R^2_{ij}}  \epsilon^2 W_i(r_j) p_{ij}\\ 
&\geq   \sum \limits_{j \not \in R^2_i(r_j)}  \left( \frac{17 w_j p_{ij}}{\epsilon} + w_j p_{i,\kappa_i(r_j)} + w_j \sum \limits_{h \in V_i(r_j^-): \delta_{ih} \geq \delta_{ij}} p_{ih} + w_j p_{ij} + p_{ij}\sum \limits_{h \in V_i(r_j^-): \delta_{ij} > \delta_{ih}} w_{ih} \right) \\
&\qquad - \sum \limits_{j \not \in R^2_i(r_j): w_j/\epsilon \leq w_{\nu_i(r_j)}} p_{ij}  w_{\nu_i(r_j)} - \sum \limits_{j \not \in R^2_{ij}: w_j/\epsilon > w_{\nu_i(r_j^-)}} p_{ij}  w_j/\epsilon - \sum \limits_{j \not \in R^2_{ij}}  \epsilon^2 W_i(r_j) p_{ij} \\ 
&\geq   \sum \limits_{j \not \in R^2_i(r_j)}  \left( \frac{16 w_j p_{ij}}{\epsilon} + w_j p_{i,\kappa_i(r_j)} + w_j \sum \limits_{h \in V_i(r_j^-): \delta_{ih} \geq \delta_{ij}} p_{ih} + w_j p_{ij} + p_{ij}\sum \limits_{h \in V_i(r_j^-) \setminus \{k_i(r_j^-)\}: \delta_{ij} > \delta_{ih}} w_{ih} \right) \\
&\qquad - \sum \limits_{j \not \in R^2_i(r_j): w_j/\epsilon \leq w_{\nu_i(r_j^-)}} p_{ij}  w_{\nu_i(r_j)}  - \sum \limits_{j \not \in R^2_{ij}} \epsilon^2 W_i(r_j) p_{ij} \\ 
&\geq   \sum \limits_{j \not \in R^2_i(r_j)}  \left( \frac{16 w_j p_{ij}}{\epsilon} + w_j p_{i,\kappa_i(r_j)} + w_j \sum \limits_{h \in V_i(r_j^-): \delta_{ih} \geq \delta_{ij}} p_{ih} + w_j p_{ij} + p_{ij}\sum \limits_{h \in V_i(r_j^-): \delta_{ij} > \delta_{ih}} w_{ih} \right) \\
&\qquad - \sum \limits_{j \not \in R^2_i(r_j): w_j/\epsilon \leq w_{\nu_i(r_j^-)}} w_{j}  p_{i,\nu_i(r_j^-)}  - \sum \limits_{j \not \in R^2_{ij}} \epsilon^2 W_i(r_j) p_{ij} 
		\qquad \text{since $\delta_{ij} \geq \delta_{i,\nu_i(r_j)}$} \\
&\geq   \sum \limits_{j \not \in R^2_i(r_j)}  \left( \frac{16 w_j p_{ij}}{\epsilon} + w_j p_{i,\kappa_i(r_j)} + w_j \sum \limits_{h \in V_i(r_j^-): \delta_{ih} \geq \delta_{ij}} p_{ih} + w_j p_{ij} + p_{ij}\sum \limits_{h \in V_i(r_j^-): \delta_{ij} > \delta_{ih}} w_{ih} \right) \\
&\qquad - \sum \limits_{j \not \in R^2_i(r_j): w_j/\epsilon \leq w_{\nu_i(r_j^-)}} \epsilon w_{\nu_i(r_j^-)}  p_{i,\nu_i(r_j^-)}  - \sum \limits_{j \not \in R^2_{ij}}  \epsilon^2 W_i(r_j) p_{ij} \\
&\geq   \sum \limits_{j \not \in R^2_i(r_j)}  \left( \frac{15 w_j p_{ij}}{\epsilon} + w_j p_{i,\kappa_i(r_j)} + w_j \sum \limits_{h \in V_i(r_j^-): \delta_{ih} \geq \delta_{ij}} p_{ih} + w_j p_{ij} + p_{ij}\sum \limits_{h \in V_i(r_j^-): \delta_{ij} > \delta_{ih}} w_{ih} \right) \\
&\qquad - \sum \limits_{j \not \in R^2_i(r_j)}  \epsilon^2 W_i(r_j) p_{ij} \\
&\geq   \sum \limits_{j \not \in R^2_i(r_j)}  \left( \frac{15 w_j p_{ij}}{\epsilon} + w_j p_{i,\kappa_i(r_j)} + w_j \sum \limits_{h \in V_i(r_j^-): \delta_{ih} \geq \delta_{ij}} p_{ih} + w_j p_{ij} + p_{ij}\sum \limits_{h \in V_i(r_j^-): \delta_{ij} > \delta_{ih}} w_{ih} \right) \\
&\qquad - \frac{5}{\epsilon} \left( \sum \limits_{j \in \mathcal{J} } w_j p_{ij} \right) \qquad \text{this follows from the Corollary~\ref{coro-bound-partof-objective}}\\
&\geq \sum \limits_{j \not \in R^2_i(r_j)}  \left(w_j (\widetilde{C}_j - r_j) + \frac{15 w_j p_{ij}}{\epsilon}\right) - \frac{5}{\epsilon} \left( \sum \limits_{j \in \mathcal{J} } w_j p_{ij} \right)
\end{align*}
\end{proof}

Combining Claims~\ref{lb-A1},~\ref{lb-A2} and~\ref{lb-A3}, we have that 
\begin{align*}
\epsilon/(1+\epsilon) \sum \limits_j \alpha_j = A^1 + A^2 + A^3 \geq \sum \limits_{j}  \left(w_j (\widetilde{C}_j - r_j) + \frac{10 w_j p_{ij}}{\epsilon} \right) 
\end{align*}

\end{proof}

\subsection{Proof of theorem~\ref{main-theorem}}
\begin{proof}
In the definition of dual variables, each job $j$ is accounted in $\beta_{it}$ variable until its \emph{definitive completion time}. Thus, $\sum \limits_{i,t} \beta_{it} \leq  \frac{\epsilon}{(1+\epsilon)(1+\epsilon^2)} \sum_{j \in \mathcal{J}}  w_j (\widetilde{C}_j - r_j)$.  Combining it with Lemma~\ref{bound-alpha}, we have that the dual objective is at least $\frac{\epsilon^3}{(1+\epsilon)(1+\epsilon^2)} \sum_{j \in \mathcal{J}}  w_j (\widetilde{C}_j - r_j)$. Further, the cost of the primal is at most $22 \sum_{j \in \mathcal{J}} w_j (\widetilde{C}_j - r_j)$. Hence the theorem follows. 
\end{proof}

\bibliography{references}

\end{document}